\documentclass{article}



\usepackage[final,nonatbib]{neurips_2019}

\usepackage[utf8]{inputenc} 
\usepackage[T1]{fontenc}    
\usepackage{hyperref}       
\usepackage{url}            
\usepackage{booktabs}       
\usepackage{amsfonts}       
\usepackage{nicefrac}       
\usepackage{microtype}      

\usepackage{amsmath}
\usepackage{graphicx}
\usepackage{adjustbox}
\graphicspath{{figs/}}
\usepackage{xargs}
\usepackage[colorinlistoftodos]{todonotes}
\usepackage{amsthm}
\usepackage{amssymb}
\usepackage{mathtools}
\usepackage{bbm}
\usepackage{ dsfont }
\usepackage{subcaption}

\usepackage{xcolor}
\usepackage[normalem]{ulem}

\usepackage{tikz}
\usetikzlibrary{arrows,automata}
\usepackage{pgfplots}
\usepackage{array}
\usepackage{ifthen}
\usepackage{enumitem}

\usepackage{theoremref}

\newcommand{\Setspt}[1][\text{ }]{\mathcal{T}_{#1}} 
\newcommandx{\SetsptE}[2][1=\bar{E},2=]{\Setspt[#2]^{#1}}
\newcommandx{\OpT}[3][1=\bar{E},2=\hat{E}, 3=t]{{#3}_{#1}^{#2}}
\newcommandx{\SEd}[3][1=\bar{E},2=t,3=]{\mathds{E}^{#3}_{#2,#1}}

\newcommandx{\EfR}[2][1=u,2=v]{r_{#1#2}^{\operatorname{eff}}}
\newcommandx{\Forest}[3][1=u,2=v,3=]{\ifthenelse{\equal{#3}{}}{\mathcal{F}_{#1}^{#2}}{\mathcal{F}_{#1,#3}^{#2}}}

\theoremstyle{definition}
\newtheorem{definition}{Definition}[section]
\newtheorem{theorem}{Theorem}[section]
\theoremstyle{remark}

\newtheorem{lemma}[theorem]{Lemma}

\title{Probabilistic Watershed: \\Sampling all spanning forests \\for seeded segmentation and semi-supervised learning}

%

\author{%
	Enrique Fita Sanmartín,
	  \hspace{0.5cm} Sebastian Damrich, \hspace{0.5cm} Fred A. Hamprecht\\
	HCI/IWR at Heidelberg University, 69115 Heidelberg, Germany\\
	\texttt{\{fita@stud, sebastian.damrich@iwr, fred.hamprecht@iwr\}.uni-heidelberg.de}
}
\begin{document}

\maketitle

\begin{abstract}
The seeded Watershed algorithm / minimax semi-supervised learning on a graph computes a minimum spanning forest which connects every pixel / unlabeled node to a seed / labeled node. We propose instead to consider \textit{all possible} spanning forests and calculate, for every node, the probability  of sampling a forest connecting a certain seed with that node. We dub this approach "Probabilistic Watershed". Leo Grady (2006) already noted its equivalence to the Random Walker / Harmonic energy minimization. We here give a simpler proof of this equivalence and establish the computational feasibility of the Probabilistic Watershed with Kirchhoff's matrix tree theorem. Furthermore, we show a new connection between the Random Walker probabilities and the triangle inequality of the effective resistance. Finally, we derive a new and intuitive interpretation of the Power Watershed.
\end{abstract}
\section{Introduction}
\label{sec:Introduction}
Seeded segmentation in computer vision and graph-based semi-supervised machine learning are essentially the same problem. In both, a popular paradigm is the following: given many unlabeled pixels / nodes in a graph as well as a few seeds / labeled nodes, compute a distance from a given query pixel / node to  all of the seeds,  and assign the query to a class based on the shortest distance. 

There is obviously a large selection of distances to choose from, and popular choices include: 
\textit{i)} the shortest path distance (e.g.~\cite{criminisi2008geos}), 
\textit{ii)} the commute distance (e.g.~\cite{Zhu2003,zhou2004learning,belkin2006manifold,Grady2006}) or 
\textit{iii)} the bottleneck shortest path distance (e.g.~\cite{kim2014label,najman2019}). Thanks to its matroid property, the latter can be computed very efficiently -- a greedy algorithm finds the global optimum -- and is thus widely studied and used in different fields under names including widest, minimax, maximum capacity, topographic and watershed path distance. In computer vision, the corresponding algorithm known as ``Watershed'' is popular in seeded segmentation not only because it is so efficient \cite{Chazelle2000} but also because it works well in a broad range of problems \cite{Learned_Watershed,DeepWatershed}, is well understood theoretically \cite{Cousty2009, Allene2007}, and unlike Markov Random Fields induces no shrinkage bias \cite{beier2017multicut}. 
Even though the Watershed's optimization problem can be solved efficiently, it is combinatorial in nature. One consequence is the ``winner-takes-all'' characteristic of its solutions: a pixel or node is always unequivocally assigned to a single seed. Given suitable graph edge-weights, this solution is often but not always correct, see \figurename s \ref{fig:summary_ProbWS} and \ref{fig:example_performance}\footnote{which were produced with the code at \url{https://github.com/hci-unihd/Probabilistic_Watershed}}. 

\tikzset{
	dot/.style 2 args={fill, circle, inner sep=0pt, label={#1:\scriptsize #2}},	
	fulldot/.style 2 args={circle,draw,minimum size=0.3cm,inner sep=0pt, label={#1:\scriptsize #2}},
	graph node/.style={circle,draw,minimum size=0.25cm,inner sep=0pt]},
	graph node2/.style={circle,draw,minimum size=0.1cm,inner sep=0pt]},
	invisible/.style={circle,minimum size=0.001cm,inner sep=0pt]},
	cut node/.style={minimum size=0.115cm,inner sep=0pt]},
	cut node2/.style={minimum size=0.075cm,inner sep=0pt]},
	main node/.style={circle,draw,minimum size=0.5cm,inner sep=0pt]},
	punt/.style={circle,draw,minimum size=0.03cm,inner sep=0pt]},
}
%
%
%


\begin{figure*}

        \begin{tikzpicture}
        \node[] (plot) at (-5.2,1.7) {\includegraphics[width=8cm,height=4cm,trim={0.1cm 0cm 1cm 1cm},clip]{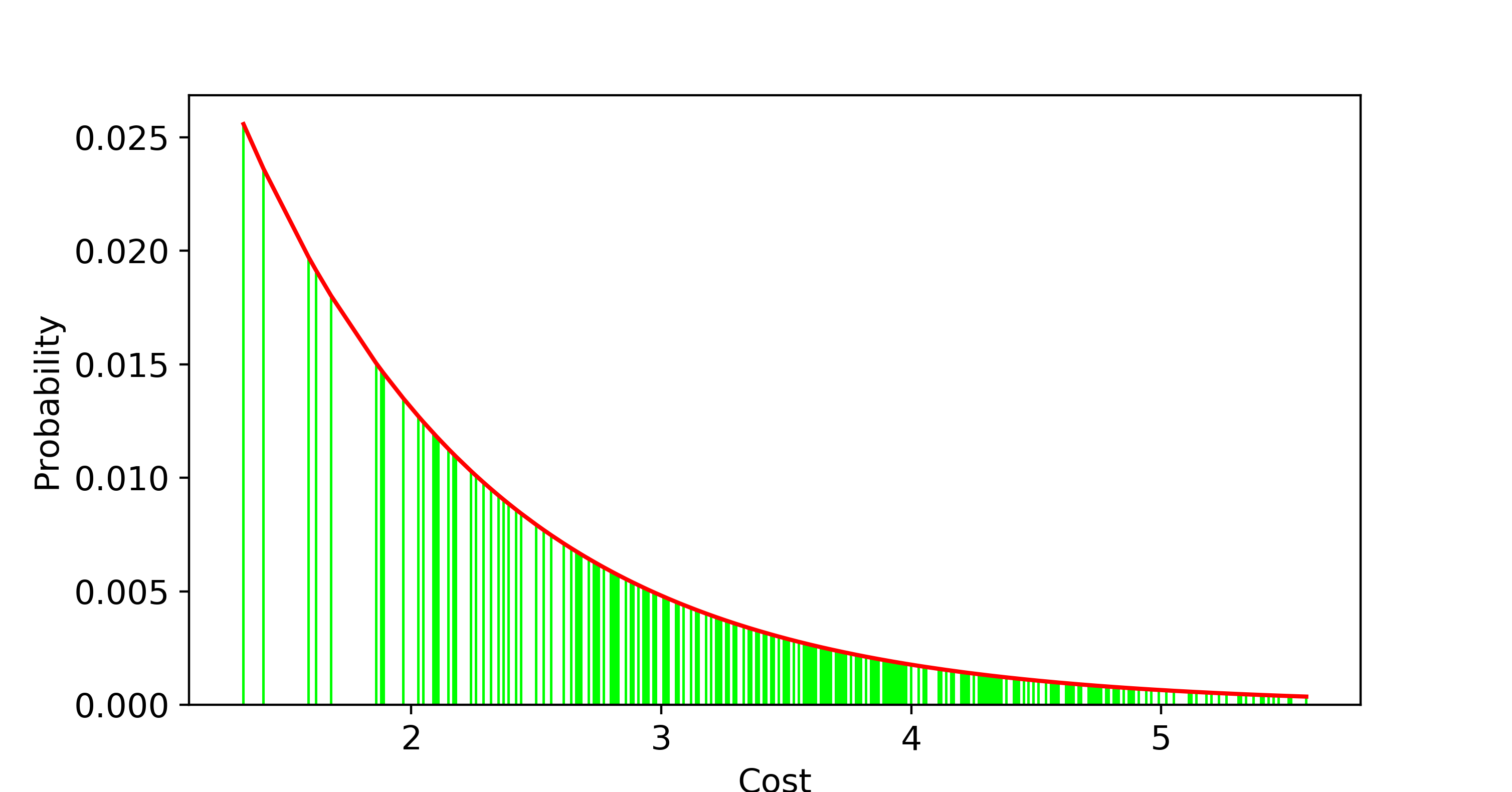}};
        \draw [line width=2pt,decorate,decoration={brace,amplitude=18pt,raise=4pt,aspect=0.79},yshift=0pt]
        (-0.7,3.8) -- (-0.7,-2.1) node [black,midway,xshift=0.8cm] {};
        \def\scale{.6 }
        \def\scaled{1.2 }
        \def\tx{-7.5}				
        \def\tyd{-1.5}
        \node[graph node,opacity=.5,black,fill=cyan,text opacity=1] (g20) at (0+\scaled*\tx,0+\scaled*\tyd) {\tiny{$s_1$}};
        \node[graph node,opacity=.5,fill=cyan,] (g21) at (\scale*1+\scaled*\tx,0+\scaled*\tyd) {};
        \node[graph node,opacity=.5,fill=cyan,] (g22) at (\scale*2+\scaled*\tx,0+\scaled*\tyd) {};
        \node[graph node,opacity=.5,fill=cyan,] (g23) at (0+\scaled*\tx,\scale*1+\scaled*\tyd) {};
        \node[graph node,opacity=.5,fill=cyan,text opacity=1] (g24) at (\scale*1+\scaled*\tx,\scale*1+\scaled*\tyd) {\tiny{$q$}};
        \node[graph node,opacity=.5,fill=red,] (g25) at (\scale*2+\scaled*\tx,\scale*1+\scaled*\tyd) {};
        \node[graph node,opacity=.5,fill=red,] (g26) at (0+\scaled*\tx,\scale*2+\scaled*\tyd) {};
        \node[graph node,opacity=.5,fill=red,] (g27) at (\scale*1+\scaled*\tx,\scale*2+\scaled*\tyd) {};
        \node[graph node,opacity=.5,black,fill=red,text opacity=1] (g28) at (\scale*2+\scaled*\tx,\scale*2+\scaled*\tyd) {\tiny{$s_2$}};

        \path[dotted,draw]
		
		(g22) edge node[left] {} (g25)

		(g23) edge node[above] {} (g24)
		
		(g23) edge node[left] {} (g26)
		
		(g24) edge node[above] {} (g25)
		
		(g24) edge node[above left] {} (g27);

		\path[-,draw,blue,line width=3pt]
		(g20) edge node[below] {} (g21)
		
		(g20) edge node[left] {} (g23)
		
		(g21) edge node[below] {} (g22)
				
		(g21) edge node[below left] {} (g24);
		
		\path[-,draw,red,line width=3pt]
		(g26) edge node[above] {} (g27)
		
		(g27) edge node[above] {} (g28)
		
		(g25) edge node[left] {} (g28);
		
		\node[cut node2] (cl0) at (0+\scaled*\tx-0.25*\scale,\scale*0.5+\scale*1+\scaled*\tyd) {};
		\node[cut node2] (cl1) at (0+\scaled*\tx+1.5*\scale,\scale*0.5+\scale*1+\scaled*\tyd) {};
		
		\node[cut node2] (cl2) at (0+\scaled*\tx+1.5*\scale,\scale*0.5+0*\scale+\scaled*\tyd) {};
		
        \node[cut node2] (cl3) at (0+\scaled*\tx+2.25*\scale,\scale*0.5+0*\scale+\scaled*\tyd) {};

        \draw [dashed,line width=2pt,xshift=4cm]  plot[smooth] coordinates { (cl0) (cl1) (cl2) (cl3)};

        
        \def\tx{-6}				
        \def\ty{\tyd}
        \node[graph node,opacity=.5,black,fill=cyan,text opacity=1] (g10) at (0+\scaled*\tx,0+\scaled*\ty) {\tiny{$s_1$}};
        \node[graph node,opacity=.5,fill=cyan,] (g11) at (\scale*1+\scaled*\tx,0+\scaled*\ty) {};
        \node[graph node,opacity=.5,fill=cyan,] (g12) at (\scale*2+\scaled*\tx,0+\scaled*\ty) {};
        \node[graph node,opacity=.5,fill=cyan,] (g13) at (0+\scaled*\tx,\scale*1+\scaled*\ty) {};
        \node[graph node,opacity=.5,fill=red,text opacity=1] (g14) at (\scale*1+\scaled*\tx,\scale*1+\scaled*\ty) {\tiny{$q$}};
        \node[graph node,opacity=.5,fill=red,] (g15) at (\scale*2+\scaled*\tx,\scale*1+\scaled*\ty) {};
        \node[graph node,opacity=.5,fill=red,] (g16) at (0+\scaled*\tx,\scale*2+\scaled*\ty) {};
        \node[graph node,opacity=.5,fill=red,] (g17) at (\scale*1+\scaled*\tx,\scale*2+\scaled*\ty) {};
        \node[graph node,opacity=.5,black,fill=red,text opacity=1] (g18) at (\scale*2+\scaled*\tx,\scale*2+\scaled*\ty) {\tiny{$s_2$}};

        \node[punt,fill=black] (punt1) at (\scale*2+\scaled*\tx+0.36,\scale*1+\scaled*\ty) {};
        \node[punt,fill=black]  (punt2) at (\scale*2+\scaled*\tx+0.59,\scale*1+\scaled*\ty){};
        \node[punt,fill=black] (punt3) at (\scale*2+\scaled*\tx+0.82,\scale*1+\scaled*\ty){};

        \path[dotted,draw]
		
		(g11) edge node[below left] {} (g14)
		
		(g12) edge node[left] {} (g15)
		
		(g13) edge node[above] {} (g14)
		
		(g13) edge node[left] {} (g16)
		
		(g14) edge node[above left] {} (g17);

		\path[-,draw,blue,line width=3pt]
		(g10) edge node[below] {} (g11)
		
		(g11) edge node[below] {} (g12)
				
		(g10) edge node[left] {} (g13);
		
		\path[-,draw,red,line width=3pt]
		
		(g14) edge node[above] {} (g15)
		
		(g15) edge node[left] {} (g18)
		
		(g16) edge node[above] {} (g17)
		
		(g17) edge node[above] {} (g18);
		
		\node[cut node2] (cm0) at (0+\scaled*\tx-0.25*\scale,\scale*0.5+\scale*1+\scaled*\tyd) {};
		\node[cut node2] (cm1) at (0+\scaled*\tx+0.5*\scale,\scale*0.5+\scale*1+\scaled*\tyd) {};
		
		\node[cut node2] (cm2) at (0+\scaled*\tx+0.5*\scale,\scale*0.5+0*\scale+\scaled*\tyd) {};
		
        \node[cut node2] (cm3) at (0+\scaled*\tx+2.25*\scale,\scale*0.5+0*\scale+\scaled*\tyd) {};

        \draw [dashed,line width=2pt,xshift=4cm]  plot[smooth] coordinates { (cm0) (cm1) (cm2) (cm3)};

        \def\tx{-4 }				
        \def\tyt{\tyd }
        \node[graph node,opacity=.5,black,fill=cyan,text opacity=1] (g30) at (0+\scaled*\tx,0+\scaled*\tyt) {\tiny{$s_1$}};
        \node[graph node,opacity=.5,fill=red,] (g31) at (\scale*1+\scaled*\tx,0+\scaled*\tyt) {};
        \node[graph node,opacity=.5,fill=red,] (g32) at (\scale*2+\scaled*\tx,0+\scaled*\tyt) {};
        \node[graph node,opacity=.5,fill=cyan,] (g33) at (0+\scaled*\tx,\scale*1+\scaled*\tyt) {};
        \node[graph node,opacity=.5,fill=red,text opacity=1] (g34) at (\scale*1+\scaled*\tx,\scale*1+\scaled*\tyt) {\tiny{$q$}};
        \node[graph node,opacity=.5,fill=red,] (g35) at (\scale*2+\scaled*\tx,\scale*1+\scaled*\tyt) {};
        \node[graph node,opacity=.5,fill=red,] (g36) at (0+\scaled*\tx,\scale*2+\scaled*\tyt) {};
        \node[graph node,opacity=.5,fill=red,] (g37) at (\scale*1+\scaled*\tx,\scale*2+\scaled*\tyt) {};
        \node[graph node,opacity=.5,black,fill=red,text opacity=1] (g38) at (\scale*2+\scaled*\tx,\scale*2+\scaled*\tyt) {\tiny{$s_2$}};
        
        \node[punt,fill=black] (punt4) at (\scale*2+\scaled*\tx+0.36,\scale*1+\scaled*\ty) {};
        \node[punt,fill=black]  (punt5) at (\scale*2+\scaled*\tx+0.59,\scale*1+\scaled*\ty){};
        \node[punt,fill=black] (punt6) at (\scale*2+\scaled*\tx+0.82,\scale*1+\scaled*\ty){};
        
       \path[dotted,draw]
       
		(g30) edge node[below] {} (g31)
		
		(g31) edge node[below] {} (g32)
		
		(g33) edge node[above] {} (g34)
		
		(g33) edge node[left] {} (g36)
		
		(g34) edge node[above] {} (g35);

		\path[-,draw,blue,line width=3pt]
				
		(g30) edge node[left] {} (g33);
		
		\path[-,draw,red,line width=3pt]
		
		(g31) edge node[below left] {} (g34)
		
		(g32) edge node[left] {} (g35)
		
		(g34) edge node[above left] {} (g37)
		
		(g35) edge node[left] {} (g38)
		
		(g36) edge node[above] {} (g37)
		
		(g37) edge node[above] {} (g38);

		\node[cut node2] (cr0) at (0+\scaled*\tx-0.25*\scale,\scale*0.5+\scale*1+\scaled*\tyd) {};
		\node[cut node2] (cr1) at (0+\scaled*\tx+0.5*\scale,\scale*0.5+\scale*1+\scaled*\tyd) {};
		
		\node[cut node2] (cr2) at (0+\scaled*\tx+0.5*\scale,-\scale*0.25+0*\scale+\scaled*\tyd) {};

        \draw [dashed,line width=2pt,xshift=4cm]  plot[smooth] coordinates { (cr0) (cr1) (cr2)};

        \def\tx{-2 }				
        \node[graph node,opacity=.5,black,fill=cyan,text opacity=1] (g40) at (0+\scaled*\tx,0+\scaled*\tyt) {\tiny{$s_1$}};
        \node[graph node,opacity=.5,fill=red,] (g41) at (\scale*1+\scaled*\tx,0+\scaled*\tyt) {};
        \node[graph node,opacity=.5,fill=red,] (g42) at (\scale*2+\scaled*\tx,0+\scaled*\tyt) {};
        \node[graph node,opacity=.5,fill=red,] (g43) at (0+\scaled*\tx,\scale*1+\scaled*\tyt) {};
        \node[graph node,opacity=.5,fill=red,text opacity=1] (g44) at (\scale*1+\scaled*\tx,\scale*1+\scaled*\tyt) {\tiny{$q$}};
        \node[graph node,opacity=.5,fill=red,] (g45) at (\scale*2+\scaled*\tx,\scale*1+\scaled*\tyt) {};
        \node[graph node,opacity=.5,fill=red,] (g46) at (0+\scaled*\tx,\scale*2+\scaled*\tyt) {};
        \node[graph node,opacity=.5,fill=red,] (g47) at (\scale*1+\scaled*\tx,\scale*2+\scaled*\tyt) {};
        \node[graph node,opacity=.5,black,fill=red,text opacity=1] (g48) at (\scale*2+\scaled*\tx,\scale*2+\scaled*\tyt) {\tiny{$s_2$}};

       \path[dotted,draw]
       
		(g40) edge node[below] {} (g41)
		
		(g40) edge node[left] {} (g43)
		
		(g41) edge node[below] {} (g42);

		\path[-,draw,red,line width=3pt]

		(g41) edge node[below left] {} (g44)
		
		(g42) edge node[left] {} (g45)
		
		(g43) edge node[above] {} (g44)
		
		(g43) edge node[left] {} (g46)
		
		(g44) edge node[above] {} (g45)
		
		(g44) edge node[above left] {} (g47)
		
		(g45) edge node[left] {} (g48);

        \node[cut node2] (cw0) at (0+\scaled*\tx-0.25*\scale,\scale*0.5+\scale*0+\scaled*\tyd) {};
		\node[cut node2] (cw1) at (0+\scaled*\tx+0.5*\scale,\scale*0.5+\scale*0+\scaled*\tyd) {};
		
		\node[cut node2] (cw2) at (0+\scaled*\tx+0.5*\scale,-\scale*0.25+0*\scale+\scaled*\tyd) {};

        \draw [dashed,line width=2pt,xshift=4cm]  plot[smooth] coordinates { (cw0) (cw1) (cw2)};


		\node[invisible] (extra0) at (-7.88,0.2) {};
		
		\node[invisible] (extra1) at (-7.77,0.2) {};
	
		\node[invisible] (extra2) at (-6.32,0.2) {};
		
		\node[invisible] (extra3) at (-1.9,0.2) {};
		
		\path[->,draw]
		(extra0) edge node[sloped,above] {\tiny \textbf{mSF}} (g27)
		(extra1) edge node[below] {} (g17)
		(extra2) edge node[below] {} (g36)
		(extra3) edge node[below] {} (g47);

		\def\scale{1.1}
		\def\biasx{0.8}
		\def\biasyraw{1.2}
		\node[main node,opacity=.5,black,fill=cyan,text opacity=1] (0) at (0+\biasx,0+\biasyraw) {\tiny$s_1$};
		\node[main node] (1) at (\scale*1+\biasx,0+\biasyraw) {};
		\node[main node] (2) at (\scale*2+\biasx,0+\biasyraw) {};
		\node[main node] (3) at (0+\biasx,\scale*1+\biasyraw) {};
		\node[main node] (4) at (\scale*1+\biasx,\scale*1+\biasyraw) {\tiny$q$};
		\node[main node] (5) at (\scale*2+\biasx,\scale*1+\biasyraw) {};
		\node[main node] (6) at (0+\biasx,\scale*2+\biasyraw) {};
		\node[main node] (7) at (\scale*1+\biasx,\scale*2+\biasyraw) {};
		\node[main node,opacity=.5,black,fill=red,text opacity=1] (8) at (\scale*2+\biasx,\scale*2+\biasyraw) {\tiny$s_2$};

	    \path[-,draw]
		(0) edge node[below] {\tiny0.16} (1)
		
		(0) edge node[left] {\tiny0.10} (3)
		
		(1) edge node[below] {\tiny0.36} (2)
		
		(1) edge node[below left] {\tiny0.43} (4)
		
		(2) edge node[left] {\tiny0.92} (5)
        
		(3) edge node[above] {\tiny1.20} (4)
		
		(3) edge node[left] {\tiny1.61} (6)
		
		(4) edge node[above] {\tiny0.51} (5)
		
		(4) edge node[above left] {\tiny0.70} (7)
        
		(5) edge node[left] {\tiny0.22} (8)
		
		(6) edge node[above] {\tiny0.00} (7)
		
		(7) edge node[above] {\tiny0.05} (8);

		        \def\biasy{-2.0}
				\node[main node,opacity=1,black,fill={rgb,255:red,127.5; green,214.5; blue,247},text opacity=1] (0) at (0+\biasx,0+\biasy) {\tiny$0.00$};
				\node[main node,opacity=1,black,fill={rgb,255:red,204; green,238.8; blue,251.8},text opacity=1] (1) at (\scale*1+\biasx,0+\biasy) {\tiny0.30};
				\node[main node,opacity=1,black,fill={rgb,255:red,242.25; green,250.95; blue,254.2},text opacity=1] (2) at (\scale*2+\biasx,0+\biasy) {\tiny0.45};
				
				\node[main node,opacity=1,black,fill={rgb,255:red,181.05; green,231.51; blue,250.36},text opacity=1] (3) at (0+\biasx,\scale*1+\biasy) {\tiny0.21};
				\node[main node,opacity=1,black,fill={rgb,255:red,255; green,249.9; blue,249.9},text opacity=1] (4) at (\scale*1+\biasx,\scale*1+\biasy) {\tiny$0.52$};
				\node[main node,opacity=1,black,fill={rgb,255:red,255; green, 198.9; blue, 198.9},text opacity=1] (5) at (\scale*2+\biasx,\scale*1+\biasy) {\tiny0.72};
				
				\node[main node,opacity=1,black,fill={rgb,255:red,255; green,209.1; blue,209.1},text opacity=1] (6) at (0+\biasx,\scale*2+\biasy) {\tiny0.68};
				\node[main node,opacity=1,black,fill={rgb,255:red,255; green,186.15; blue,186.15},text opacity=1] (7) at (\scale*1+\biasx,\scale*2+\biasy) {\tiny0.77};
				\node[main node,opacity=1,black,fill={rgb,255:red,255; green,127.5; blue,127.5},text opacity=1] (8) at (\scale*2+\biasx,\scale*2+\biasy) {\tiny$1.00$};

				\node[cut node] (c0) at (0+\biasx-0.25*\scale,\scale*0.5+\scale*1+\biasy) {};
				\node[cut node] (c1) at (0+\biasx+0.5*\scale,\scale*0.5+\scale*1+\biasy) {};

				\node[cut node] (c2) at (1*\scale+\biasx-0.5*\scale,\scale*0.5+\scale*0+\biasy) {};

				\node[cut node,] (c3) at (2*\scale+\biasx+0.25*\scale,\scale*0.5+\scale*0+\biasy) {};
				
				\draw [dashed,line width=2pt,xshift=4cm]  plot[smooth] coordinates { (c0) (c1) (c2) (c3)};

				\path[-,draw]
				(0) edge node[below] {} (1)
				(0) edge node[left] {} (3)
				
				(1) edge node[below] {} (2)
				(1) edge node[below left] {} (4)
				
				(2) edge node[left] {} (5)

				(3) edge node[above] {} (4)
				(3) edge node[left] {} (6)
				
				(4) edge node[above] {} (5)
				(4) edge node[above left] {} (7)

				(5) edge node[left] {} (8)
				
				(6) edge node[above] {} (7)
				
				(7) edge node[above] {} (8);

				\begin{axis}[
			hide axis,
			scale only axis,
			width=20pt,
			colormap={cyanred}{
				rgb255(0)=(127.5,214.5,247)
				rgb255(1)=(255,255,255)
				rgb255(2)=(255,127.5,127.5) 
			},
			colorbar horizontal,
			point meta min=0,
			point meta max=1,
			colorbar style={
				at={(\biasx+4.3,\biasy+2.6)},
				width=2.3cm,
				height=0.1cm,
				rotate=90,
				xtick={0,1},
			},ticklabel style={right,font=\tiny},]
		
			\addplot [draw=none] coordinates {(1,8) (1,1)};
			\vspace{0.12cm}
			\end{axis}
			
\end{tikzpicture}
\caption{The Probabilistic Watershed computes the expected seed assignment of every node for a Gibbs distribution over all exponentially many spanning forests in closed-form. It thus avoids the winner-takes-all behaviour of the Watershed. (\textbf{Top right}) Graph with edge-costs and two seeds. (\textbf{Bottom left}) The minimum spanning forest (mSF) and other, higher cost forests. The Watershed selects the mSF, which assigns the query node $q$ to seed $s_1$. Other forests of low cost might however induce different segmentations. The dashed lines indicate the cut of the segmentations. For instance, the other depicted forests connect $q$ to $s_2$.  (\textbf{Top left}) We therefore consider a Gibbs distribution over all spanning forests with respect to their cost (see equation \eqref{eq:Gibbs_dist}, $\mu=1$). Each green bar corresponds to the cost of one of the 288 possible spanning forests. (\textbf{Bottom right}) Probabilistic Watershed probabilities for assigning a node to $s_2$. Query $q$ is now assigned to $s_2$. Considering a distribution over all spanning forests gives an uncertainty measure and can yield a segmentation different from the mSF's. In contrast to the 288 forests in this toy graph, for the real-life image in \figurename{}  \ref{fig:example_performance} one would have to consider at least $10^{11847}$ spanning forests separating the 13 seeds (see appendix \ref{App:Bound_Forests}), a feat impossible without the matrix tree theorem.}
\label{fig:summary_ProbWS}
\end{figure*}

Intrigued by the value of the Watershed to many computer vision pipelines, we have sought to entropy-regularize the combinatorial problem to make it more amenable to end-to-end learning in modern pipelines. Exploiting the equivalence of Watershed segmentations to minimum cost spanning forests, we hence set out from the following question: Is it possible to compute not 
just the minimum, but all (!) possible spanning forests, and to 
compute, in closed form, the probability that a pixel of interest is 
assigned to one of the seeds? More specifically, we envisaged a Gibbs 
distribution over the exponentially many distinct forests that span an 
undirected graph with edge-costs, where each forest is assigned a 
probability that decreases with increasing sum of the edge-costs in that forest.

If computed naively, this would be an intractable problem for all but the 
smallest graphs. However, we show here that a closed-form 
solution can be found by recurring to Kirchhoff's matrix tree theorem, and is given by the solution of the Dirichlet problem associated with commute distances \cite{Zhu2003,zhou2004learning,belkin2006manifold,Grady2006}. 
Leo Grady mentioned this connection in \cite{Grady2006,gradytextbook} and based his argument on potential theory, using results from \cite{Biggs1997}. Our informal poll amongst 
experts from both computer vision and machine learning indicated that this connection has remained mostly unknown. We hence offer a completely self-contained, except for the matrix tree theorem, and hopefully simpler proof. 

In this entirely conceptual work, we 
\begin{itemize}[leftmargin=*]

\item give a proof, using elementary graph constructions and building on the matrix tree theorem, that shows how to compute analytically the probability that a graph node is assigned to a particular seed in an ensemble of Gibbs distributed spanning forests (Section \ref{sec:PWS}).
\item establish equivalence to the algorithm known as Random Walker in computer vision \cite{Grady2006} and as Laplacian Regularized Least Squares and under other names in transductive machine learning \cite{Zhu2003,zhou2004learning,belkin2006manifold}. In particular, we relate, for the first time, the probability of assigning a query node to a seed to the triangle inequality of the effective resistance between seeds and query  (Section \ref{sec:Relation_Random_Walk}).
\item give a new interpretation of the so-called Power Watershed \cite{Couprie2011} (Section \ref{sec:Pow_Watershed}).
\end{itemize}

\begin{figure}
	\noindent\makebox[\linewidth][c]{
	\begin{subfigure}[t]{.24\textwidth}
		\centering
		\includegraphics[width=1\linewidth,trim={1cm 1cm 1cm 1cm},clip]{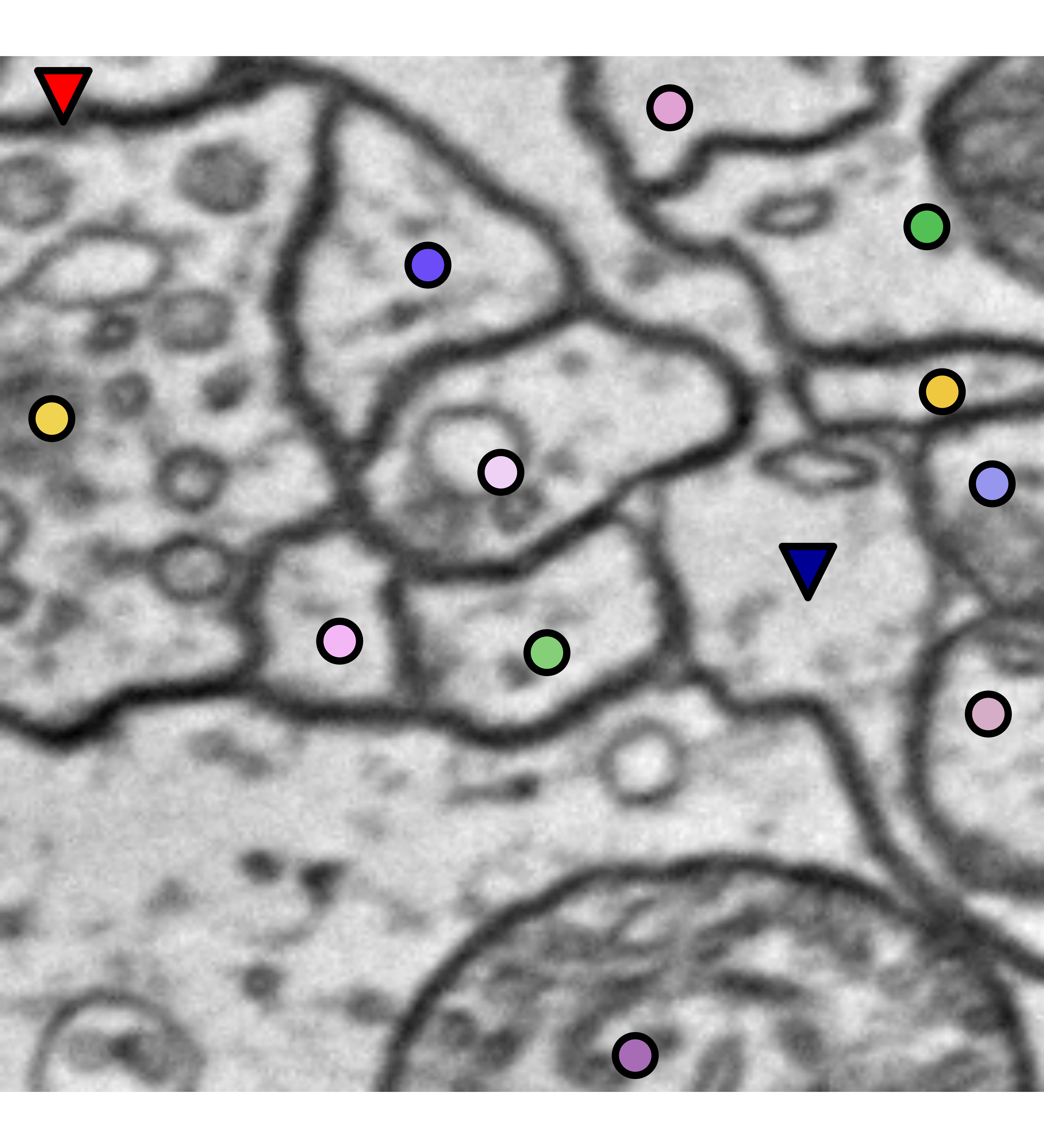}
		\caption{Image with seeds}
		\label{sfig1:example_performance}
	\end{subfigure}
	\begin{subfigure}[t]{.24\textwidth}
		\centering
		\includegraphics[width=1\linewidth,trim={1cm 1cm 1cm 1cm},clip]{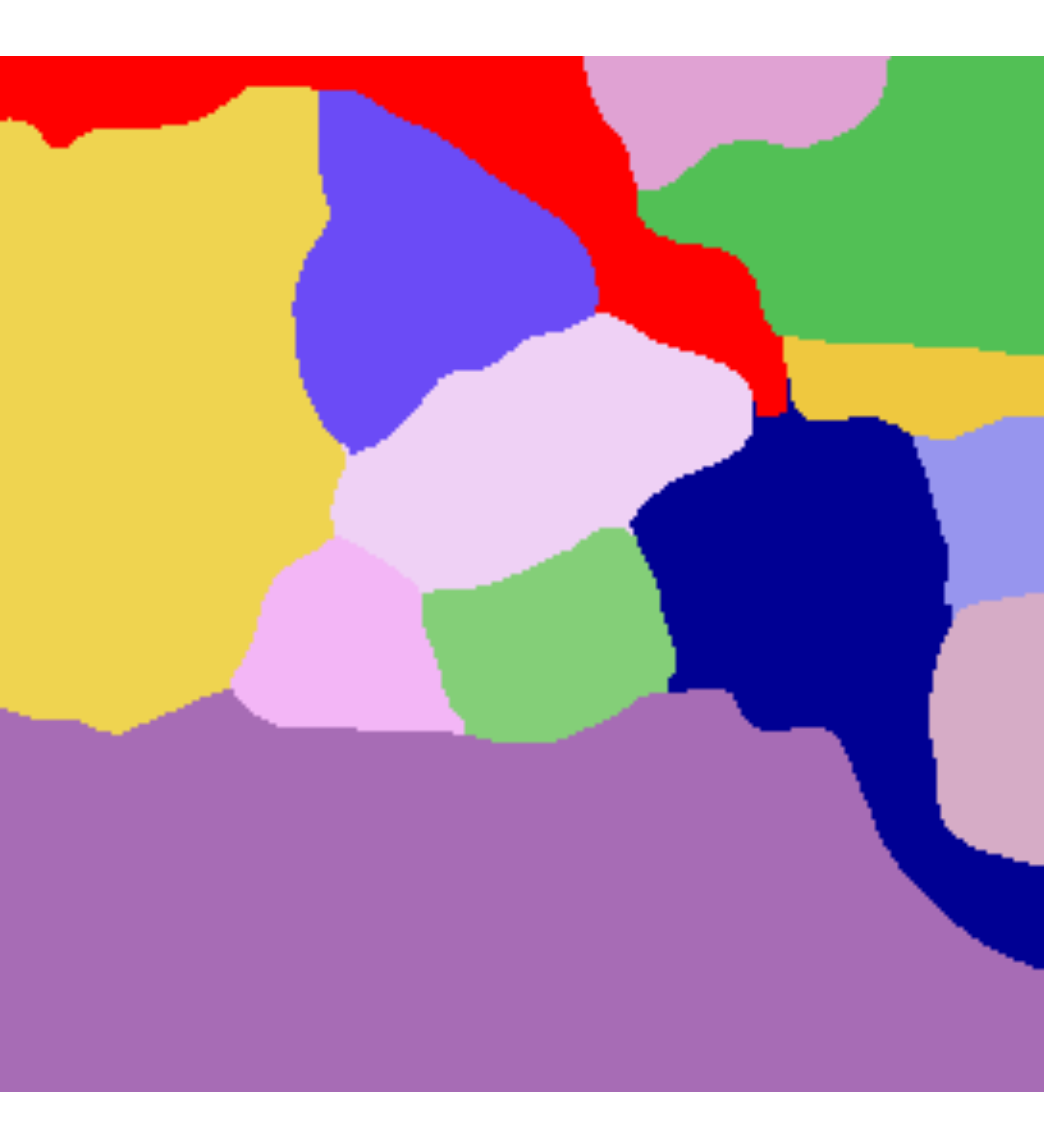}
		\caption{Watershed}
		\label{sfig2:example_performance}
	\end{subfigure}
	\begin{subfigure}[t]{.24\textwidth}
		\centering
		\includegraphics[width=1\linewidth,trim={1cm 1cm 1cm 1cm},clip]{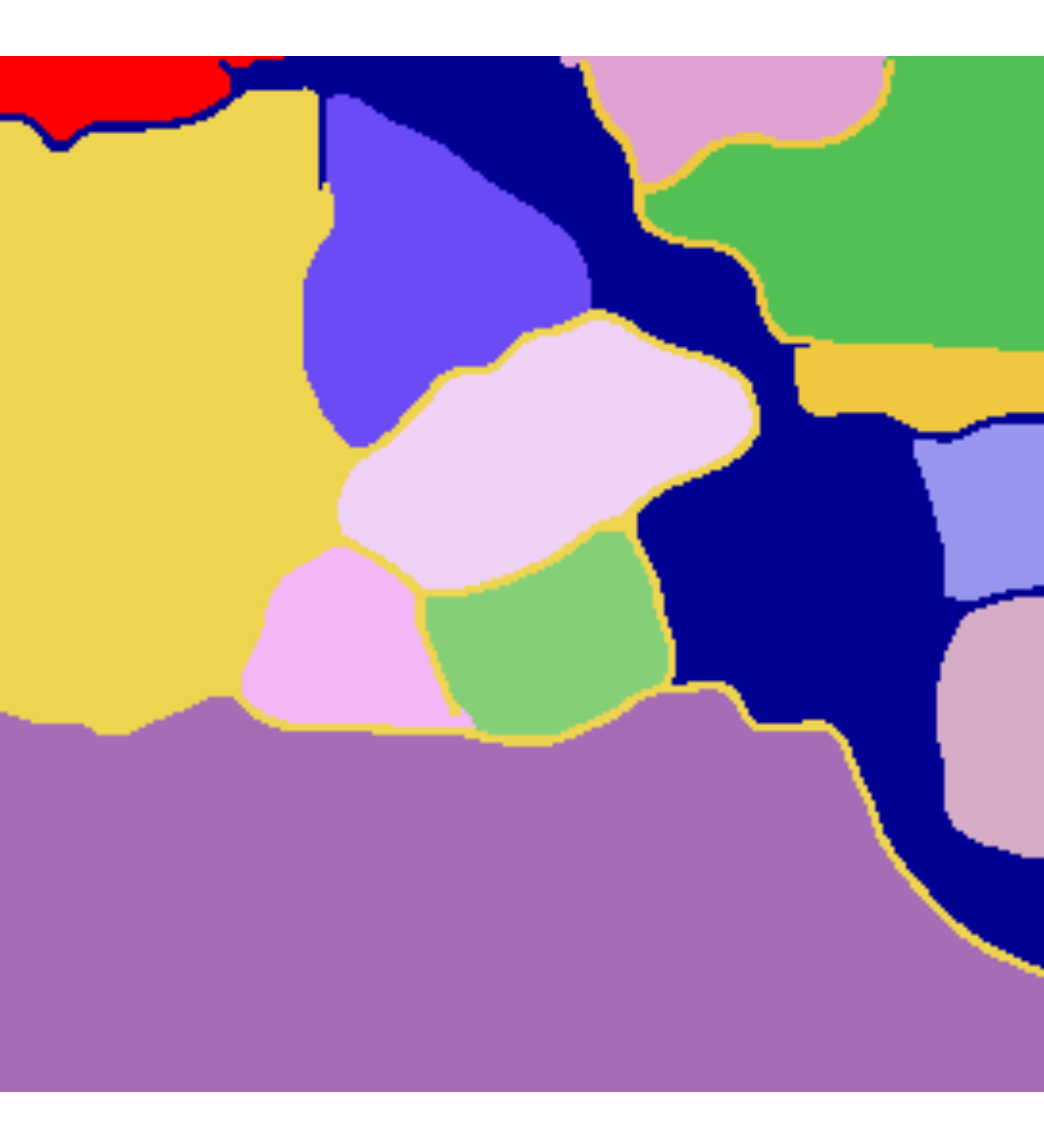}
		\caption{\centering Probabilistic Watershed}
		\label{sfig3:example_performance}
	\end{subfigure}
	\begin{subfigure}[t]{.24\textwidth}
		\centering
		\includegraphics[width=1\linewidth,trim={1cm 1cm 1cm 1cm},clip]{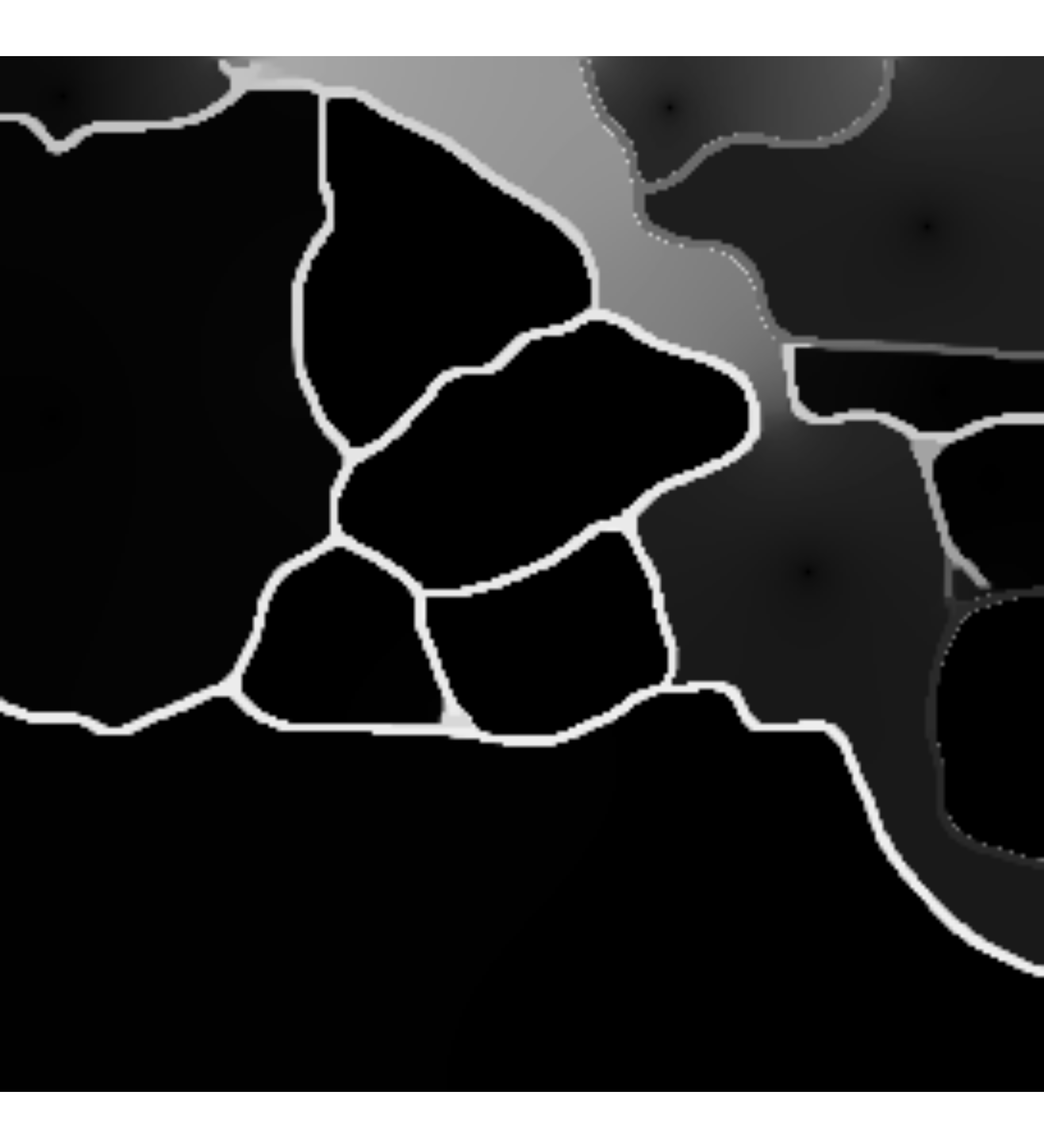}
		\caption{Uncertainty}
		\label{sfig4:example_performance}
\end{subfigure}}
\caption{The Probabilistic Watershed profits from using all spanning forests instead of only the minimum cost one. (\ref{sfig1:example_performance}) Crop of a CREMI image \cite{CREMI} with marked seeds. (\ref{sfig2:example_performance}) and (\ref{sfig3:example_performance}) show results of Watershed and multiple seed  Probabilistic Watershed (end of section \ref{sec:PWS}) applied to edge-weights from \cite{Cerrone_RW}. (\ref{sfig4:example_performance}) shows the entropy of the label probabilities of the Probabilistic Watershed (white high, black low). The Watershed errs in an area where the Probabilistic Watershed expresses uncertainty but is correct.
}
	\label{fig:example_performance}
\end{figure}

\subsection{Related work}

Watershed as a segmentation algorithm was first introduced in \cite{beucher1979watersheds}. Since then it has been studied from different points of view
	\cite{1993Beucher,Topological_Watershed}, notably as a minimum spanning forest that separates the seeds \cite{Cousty2009}. The Random Walker \cite{Grady2006,zhou2004learning,Zhu2003,belkin2006manifold} calculates the probability that a random walker starting at a query node reaches a certain seed before the other ones. Both algorithms are related in \cite{Couprie2011} by a limit consideration termed Power Watershed algorithm. In this work, we establish a different link between the Watershed and the Random Walker. The Watershed's and Random Walker's recent combination with deep learning \cite{Learned_Watershed, vernaza2017learning, Cerrone_RW} also connects our Probabilistic Watershed to deep learning.

Related to our work by name though not in substance is the "Stochastic Watershed" \cite{stochasticwatershed,MALMBERG2014}, which samples different instances of seeds and calculates a probability distribution over segmentation boundaries. Instead, in \cite{Watershed_uncertainty_stimators} the authors suggest sampling the edge-costs in order to define an uncertainty measure of the labeling. They show that it is NP-hard to calculate the probability that a node is assigned to a seed if the edge-costs are stochastic. We derive a closed-form formula for this probability for non-stochastic costs by sampling spanning forests. Ensemble Watersheds proposed by \cite{najman2019}  samples part of the seeds and part of the features which determine the edge-costs. Introducing stochasticity to distance transforms makes a subsequent Watershed segmentation more robust to noise \cite{ofverstedt2019stochastic}. Minimum spanning trees are also applied in optimum-path forest learning, where confidence measures can be computed \cite{Fernandes2019,Fernandes2019Prob_OPF}. Similar to our forest distribution, \cite{DevelopmentsRSP} considers a Gibbs distribution over shortest paths. This approach is extended to more general bags-of-paths in \cite{franccoisse2017bag}.

Entropic regularization has been used most successfully in optimal transport \cite{cuturi2013sinkhorn} to smooth the combinatorial optimization problem and hence afford end-to-end learning in conjunction with deep networks \cite{mensch2018differentiable}. Similarly, we smooth the combinatorial minimum spanning forest problem by considering a Gibbs distribution over all spanning forests.

The matrix tree theorem (MTT) plays a crucial role in our theory, permitting us to measure the weight of a set of forests. The MTT is applied in machine learning \cite{koo2007structured}, biology \cite{Teixeira2015} and network analysis  \cite{Teixeira2013,Tsen1994}. The matrix forest theorem (MFT), a generalization of the MTT, is applied in \cite{Forest_measure,Senelle2014}. By means of the MFT, a distance on the graph is defined in \cite{Forest_measure}. In a similar manner as we do with the MTT, \cite{Senelle2014} is able to compute a Gibbs distribution of forests using the MFT.  

Some of the theoretical results of our work are mentioned in \cite{Grady2006, gradytextbook}, where they refer to \cite{Biggs1997}. In contrast to \cite{Grady2006}, we emphasize the relation with the Watershed and develop the theory in a simpler and more direct way.

\section{Background}
\label{sec:Background}
\subsection{Notation and terminology}
\label{sec:Notation_Terminology}

Let $G=(V,E,w,c)$ be a graph where $V$ denotes the set of nodes, $E$  the set of edges and $w$ and $c$ are functions that assign a weight $w(e)\in \mathbb{R}_{\geq 0}$ and a cost $c(e)\in\mathbb{R}$ to each edge $e\in E$. All the graphs $G$ considered will be connected and undirected. When we speak of a multigraph, we allow for multiple edges incident to the same two nodes but not for self-loops. We will consider simple graphs unless stated otherwise.

The Laplacian of a graph $L\in \mathbb{R}^{|V|\times|V|}$  is defined as  
\[L_{uv}:=\begin{cases}
-w\big(\{u,v\}\big) &\text{ if } u\neq v\\
\sum_{k\in V} w\big(\{u,k\}\big) &\text{ if } u=v
\end{cases},\]
where we consider $w\big(\{u,v\}\big)=0$ if $\{u,v\}\notin E$. $L^+$ will denote its pseudo-inverse.

We define the weight of a graph as the product of the weights of all its edges, $w(G)=\prod_{e\in E}w(e)$. The weight of a set of graphs, $w(\{G_i\}_{i=0}^n)$ is the sum of the weights of the graphs. In a similar manner, we define the cost of a graph as the sum of the costs of all its edges, $c(G)=\sum_{e\in E}c(e)$.

The set of spanning trees of $G$ will be denoted by $\Setspt$. Given a tree $t\in\Setspt$ and nodes $u,v\in V$, the set of edges on the unique path between $u$ and $v$ in $t$ will be denoted by $\mathcal{P}_t(u,v)$. By $\Forest[u][v]$ we denote the set of 2-trees spanning forests, i.e. spanning forests with two trees, such that $u$ and $v$ are not connected. Furthermore, if we consider a third node $q$, we define $\Forest[u][v][q]:=\Forest[u][v]\cap \Forest[q][v]$, i.e. all 2-trees spanning forests such that $q$ and $u$ are in one tree and $v$ belongs to the other tree. Note that the sets $\Forest[u][v][q]\ (=\Forest[q][v][u])$ and $\Forest[v][u][q]\ (=\Forest[q][u][v])$ form a partition of $\Forest[u][v]\ (=\Forest[v][u])$, since $q$ must be connected either to $u$ or $v$, but not to both. In order to shorten the notation we will refer to 2-trees spanning forests simply as 2-forests.

We consider $w(e)=\exp(-\mu c(e))$, $\mu\geq0$, as will be motivated in Section \ref{subsec:Prob_connecting_nodes} by the definition of a Gibbs distribution over the 2-forests in $\Forest[u][v]$. Thus, a low edge-cost corresponds to a large edge-weight, and a minimum edge-cost spanning forest (mSF) is equivalent to a maximum edge-weight spanning forest (MSF).

\subsection{Seeded Watershed as minimum cost spanning forest computation}
\label{subsec:Watershed}
Let $G=(V,E,c)$ be a graph and $c(e)$ be the cost of edge $e$. The lower the cost, the higher the affinity between the nodes incident to $e$. Given different seeds, a forest in the graph defines a segmentation over the nodes as long as each component contains a different seed. The cost of a forest, $c(f)$, is equal to the sum of the costs of its edges. The Watershed algorithm calculates a minimum cost spanning forest, mSF, (or maximum weight, MSF) such that the seeds belong to different components \cite{Cousty2009}.

\subsection{Matrix tree theorem}
In our approach we want to take all possible 2-forests in $\Forest[u][v]$ into account. The probability of a node label will be measured by the cumulative weight of the 2-forests connecting the node to a seed of that label. To compute the weight of a set of 2-forests we will use the matrix tree theorem (MTT) which can be found e.g. in chapter 4 of \cite{Tutte1984} (see Appendix \ref{app:calculus_F})) and has its roots in \cite{kirchhoff1847ueber}.
\begin{theorem}[\textbf{MTT}]
	\thlabel{Th:Matrix_tree}
	For any edge-weighted multigraph $G$ the sum of the weights of the spanning trees of $G$, $w(\Setspt)$, is equal to
	\[w(\Setspt)\coloneqq\sum_{t\in \Setspt }w(t)= \sum_{t\in \Setspt }\prod_{e\in E_t}w(e) =\frac{1}{|V|}\det\Big(L+\frac{1}{|V|}\mathbbm{1}\mathbbm{1}^\top\Big)=\det(L^{[v]}),\]
	where $\mathbbm{1}$ is a column vector of $1$'s. $L^{[v]}$ is the matrix obtained from $L$  after removing the row and column corresponding to an arbitrary but fixed node $v$.
\end{theorem}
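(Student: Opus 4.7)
The plan is to prove the matrix tree theorem by the classical route through the Cauchy–Binet formula applied to an incidence-matrix factorization of the Laplacian. First I would orient each edge of the multigraph $G$ arbitrarily and introduce the signed incidence matrix $B\in \mathbb{R}^{|V|\times |E|}$ with $B_{ve}=+1$ if $v$ is the head of $e$, $-1$ if $v$ is the tail of $e$, and $0$ otherwise. Writing $W$ for the diagonal matrix of edge weights $w(e)$, a direct computation of off-diagonal and diagonal entries yields $L=BWB^{\top}$. Since $W$ has non-negative entries we may set $M=BW^{1/2}$, so that $L=MM^{\top}$, and removing the row corresponding to the fixed node $v$ gives $L^{[v]}=M^{[v]}(M^{[v]})^{\top}$.

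Next, Cauchy–Binet applied to this last factorization gives
\[
\det(L^{[v]}) \;=\; \sum_{\substack{S\subseteq E\\ |S|=|V|-1}} \det\bigl(M^{[v]}_S\bigr)^{2} \;=\; \sum_{\substack{S\subseteq E\\ |S|=|V|-1}} \det\bigl(B^{[v]}_S\bigr)^{2}\prod_{e\in S} w(e),
\]
where the subscript $S$ denotes the submatrix on the columns indexed by $S$. The next step is the key combinatorial claim: $\det(B^{[v]}_S)\in\{-1,0,+1\}$, and it is non-zero precisely when the edges of $S$ form a spanning tree of $G$. I would argue this by induction, expanding along a leaf row of $B^{[v]}_S$ (which contributes $\pm 1$) when a leaf other than $v$ exists; if no such leaf exists but $|S|=|V|-1$, then $S$ must contain a cycle, forcing a linear dependence among the corresponding columns and hence a vanishing determinant. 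Parallel edges in the multigraph cause no issue since duplicated columns are automatically dependent. This identifies the Cauchy–Binet sum with $\sum_{t\in\Setspt}\prod_{e\in E_t}w(e)=w(\Setspt)$, proving the second equality in the statement.

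For the spectral form $\tfrac{1}{|V|}\det\bigl(L+\tfrac{1}{|V|}\mathbbm{1}\mathbbm{1}^{\top}\bigr)$ I would use that $L$ is symmetric positive semidefinite with $L\mathbbm{1}=0$, and (by connectedness) a one-dimensional null space spanned by $\mathbbm{1}$. Since $\tfrac{1}{|V|}\mathbbm{1}\mathbbm{1}^{\top}$ is the orthogonal projection onto this null space with eigenvalue $1$, the perturbed matrix shares all non-zero eigenvalues $\lambda_2,\dots,\lambda_{|V|}$ of $L$ and replaces the zero eigenvalue by $1$. Hence
\[
\det\!\Bigl(L+\tfrac{1}{|V|}\mathbbm{1}\mathbbm{1}^{\top}\Bigr) \;=\; \prod_{i\geq 2}\lambda_i.
\]
A direct eigenvalue calculation, or the matrix-determinant lemma applied to $L^{[v]}$ together with the observation that the cofactor $\det(L^{[v]})$ does not depend on $v$ (all cofactors of a matrix whose row- and column-sums vanish are equal), then yields $\det(L^{[v]})=\tfrac{1}{|V|}\prod_{i\geq 2}\lambda_i$, reconciling the two determinantal expressions.

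The main obstacle I expect is the combinatorial identification of the non-vanishing terms in the Cauchy–Binet expansion with spanning trees; this is where the essential graph-theoretic content lives, and the inductive leaf-expansion argument must be set up cleanly so that the signs squared remove any dependence on the initial orientation. Once this is in place, both the Laplacian-minor formula and the spectral formula follow, the latter requiring only standard linear algebra for the rank-one perturbation.
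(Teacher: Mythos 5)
Your proposal is correct, but it proves substantially more than the paper does. The paper's own proof (Appendix A) establishes only the equality $\det(L^{[v]})=\tfrac{1}{|V|}\det\bigl(L+\tfrac{1}{|V|}\mathbbm{1}\mathbbm{1}^{\top}\bigr)$ via the rank-one spectral argument you also give --- showing that $\mathbbm{1}$ is an eigenvector of the perturbed matrix with eigenvalue $1$ while the remaining eigenvalues $\lambda_2,\dots,\lambda_{|V|}$ are unchanged --- and simply cites Tutte and a second reference for the combinatorial identity $w(\Setspt)=\det(L^{[v]})$, which is where the actual graph-theoretic content lives. You instead supply that content yourself: the factorization $L=BWB^{\top}$, Cauchy--Binet on $L^{[v]}=M^{[v]}(M^{[v]})^{\top}$, and the classical lemma that $\det(B^{[v]}_S)\in\{-1,0,+1\}$ with non-vanishing exactly on spanning trees. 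This is the standard textbook route and your handling of the two delicate points is sound: parallel edges yield dependent columns, and the leaf-expansion induction together with the cycle-dependence observation correctly characterizes the non-zero minors (a subgraph on $|V|-1$ edges that is acyclic and touches every vertex is necessarily a spanning tree; otherwise one gets a zero row or a signed cycle relation among columns). The net effect is that your argument is self-contained where the paper's is not, at the cost of length; the paper's choice to outsource the combinatorial step is defensible since that part is classical, whereas the eigenvalue identity is the one the authors actually need to verify (their cited source reportedly contains typos). One cosmetic remark: your final reconciliation of $\det(L^{[v]})$ with $\tfrac{1}{|V|}\prod_{i\geq 2}\lambda_i$ leans on the fact that all cofactors of $L$ are equal; that is true because $\operatorname{adj}(L)$ has columns and rows in the kernel of $L$, hence is a multiple of $\mathbbm{1}\mathbbm{1}^{\top}$ for connected $G$, but it deserves a sentence of justification rather than a parenthetical, and you should note that connectedness (assumed throughout the paper) is what makes the kernel one-dimensional.
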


This theorem considers trees instead of 2-forests. The key idea to obtain an expression for $w\left(\Forest[u][v]\right)$ by means of the MTT is that any 2-forest $f\in \Forest[u][v]$ can be transformed into a tree by adding an artificial edge $\bar{e}=\{u,v\}$ which connects the two components of $f$ (as done in section 9 of \cite{Biggs1997} or in the original work of Kirchhoff \cite{kirchhoff1847ueber}). We obtain the following lemma, which is proven in Appendix \ref{app:calculus_F}.

\begin{lemma}~	
	\thlabel{lem:T_e_weighted}
	Let $G=(V,E,w)$  be an undirected edge-weighted connected graph and $u,v\in V$  arbitrary vertices.	
	\begin{enumerate}[label=\alph*),leftmargin=*]

    \item Let $\ell_{ij}^+$ denote the entry $ij$ of the pseudo-inverse of the Laplacian of $G$, $L^+$. Then we get
		\begin{equation}
		w(\Forest[u][v])=w(\Setspt)\left(\ell^+_{uu}+\ell^+_{vv}-2\ell^+_{uv}\right).
		\label{eq:T_e_weighted_pseudoinverse}
		\end{equation}
		
		 \item Let $\ell_{ij}^{-1,[r]}$ denote the entry $ij$ of the inverse of the matrix $L^{[r]}$ (the Laplacian $L$ after removing the row and the column corresponding to node $r$), then
		\begin{equation}
		w(\Forest[u][v])=\begin{cases}
		w(\Setspt)\left(\ell_{uu}^{-1,[r]}+\ell_{vv}^{-1,[r]}-2\ell_{uv}^{-1,[r]}\right) &\text{ if } r\neq u,v\\
		w(\Setspt)\ell_{uu}^{-1,[v]} &\text{ if } r=v \text{ and } u\neq v \\
		w(\Setspt)\ell_{vv}^{-1,[u]} &\text{ if } r=u \text{ and } u\neq v .\\
		\end{cases}
		\label{eq:T_e_weighted_remove_r}
		\end{equation}
	\end{enumerate}
\end{lemma}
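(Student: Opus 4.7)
The plan is to augment $G$ with an artificial edge $\bar e=\{u,v\}$ of weight $w(\bar e)=1$ to form a (multi)graph $G'$, and to relate $w(\Forest[u][v])$ to spanning-tree weights of $G'$ and $G$ via the matrix tree theorem.

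First, I would establish the bijective decomposition of the spanning trees of $G'$. Any such tree either omits $\bar e$, in which case it is a spanning tree of $G$, or includes $\bar e$, in which case removing $\bar e$ disconnects it into two components, one containing $u$ and the other containing $v$, i.e.\ an element of $\Forest[u][v]$; conversely, adjoining $\bar e$ to a 2-forest in $\Forest[u][v]$ gives a spanning tree of $G'$. Because $w(\bar e)=1$, weights are preserved by the bijection and
\[
w\bigl(\text{spanning trees of }G'\bigr)=w(\Setspt)+w\bigl(\Forest[u][v]\bigr).
\]

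Second, noting that the Laplacian of $G'$ is the rank-one update $L'=L+(e_u-e_v)(e_u-e_v)^\top$, where $e_u,e_v$ are standard basis vectors, I would apply \thref{Th:Matrix_tree} to both graphs (deleting the row and column of the same node $r$) and subtract:
\[
w\bigl(\Forest[u][v]\bigr)=\det\!\bigl((L')^{[r]}\bigr)-\det\!\bigl(L^{[r]}\bigr).
\]
For $r\neq u,v$ the deletion does not touch the update, so $(L')^{[r]}=L^{[r]}+(e_u-e_v)(e_u-e_v)^\top$ in the reduced indexing, and the matrix determinant lemma produces the factor $1+\ell_{uu}^{-1,[r]}+\ell_{vv}^{-1,[r]}-2\ell_{uv}^{-1,[r]}$; the additive $1$ cancels against the subtracted $\det(L^{[r]})$ and formula \eqref{eq:T_e_weighted_remove_r} drops out. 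For $r=v$ the $e_v$ component vanishes after deletion and the update collapses to $L^{[v]}+e_ue_u^\top$; the matrix determinant lemma then yields $\det(L^{[v]})(1+\ell_{uu}^{-1,[v]})$, and subtracting $\det(L^{[v]})$ gives $w(\Setspt)\,\ell_{uu}^{-1,[v]}$. The case $r=u$ is symmetric.

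Third, part (a) follows from part (b) by identifying the quadratic form in $(L^{[r]})^{-1}$ with the one in $L^+$ on the subspace orthogonal to $\ker L=\mathrm{span}(\mathbbm{1})$. Concretely, both $L^+(e_u-e_v)$ and the zero-padded $(L^{[r]})^{-1}(e_u-e_v)$ solve $L\phi=e_u-e_v$ (which is consistent because $e_u-e_v\perp\mathbbm{1}$), so they differ by a multiple of $\mathbbm{1}$; taking the inner product with $e_u-e_v$, which is itself orthogonal to $\mathbbm{1}$, kills the ambiguity and yields $\ell_{uu}^{+}+\ell_{vv}^{+}-2\ell_{uv}^{+}$ equal to the corresponding expression in $(L^{[r]})^{-1}$. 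The main obstacle I expect lies precisely here: the bijection and the rank-one-update computation are essentially mechanical, but reconciling $L^+$ with the various reduced inverses demands the observation that the difference between them is invisible to any quadratic form evaluated on vectors orthogonal to $\ker L$. An alternative route bypassing this altogether would be to apply the $\det(L+\tfrac{1}{|V|}\mathbbm{1}\mathbbm{1}^\top)$ form of \thref{Th:Matrix_tree} together with the identity $L^+=(L+\tfrac{1}{|V|}\mathbbm{1}\mathbbm{1}^\top)^{-1}-\tfrac{1}{|V|}\mathbbm{1}\mathbbm{1}^\top$, avoiding row/column removals entirely.
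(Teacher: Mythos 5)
Your proposal is correct and its core mechanism is exactly the paper's: augment $G$ by an artificial edge $\bar e=\{u,v\}$ of unit weight (kept distinct from any existing $\{u,v\}$ edge, hence a multigraph), use the bijection between spanning trees of $G'$ containing $\bar e$ and 2-forests in $\Forest[u][v]$ to get $w(\Forest[u][v])=w(\Setspt[G_{\bar e}])-w(\Setspt[G])$, and then evaluate the difference of determinants via the rank-one update $L'=L+(e_u-e_v)(e_u-e_v)^\top$ and the matrix determinant lemma. Your treatment of part (b), including the collapse of the update to $L^{[v]}+e_ue_u^\top$ when $r=v$, matches the paper's case analysis. The only divergence is in part (a): the paper proves it directly by applying the $\frac{1}{|V|}\det\bigl(L+\frac{1}{|V|}\mathbbm{1}\mathbbm{1}^\top\bigr)$ form of the matrix tree theorem together with the identity $L^+=\bigl(L+\frac{1}{|V|}\mathbbm{1}\mathbbm{1}^\top\bigr)^{-1}-\frac{1}{|V|}\mathbbm{1}\mathbbm{1}^\top$, the rank-one correction being annihilated because $e_u-e_v\perp\mathbbm{1}$ — this is precisely the "alternative route" you sketch at the end. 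Your primary route instead deduces (a) from (b) by observing that $L^+(e_u-e_v)$ and the zero-padded $(L^{[r]})^{-1}(e_u-e_v)^{[r]}$ both solve $L\phi=e_u-e_v$ and differ only by an element of $\ker L=\mathrm{span}(\mathbbm{1})$, which is invisible to the quadratic form; this is a valid and slightly more economical argument, since it avoids introducing the pseudo-inverse identity as a separate lemma, at the cost of the (correct but worth stating) verification that the padded vector really satisfies the full system at the deleted row $r$, which follows from $\mathbbm{1}^\top L=0$.
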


\subsection{Effective resistance}
In electrical network theory, the circuits are also interpreted as graphs, where the weights of the edges are defined by the reciprocal of the resistances of the circuit. The effective resistance between two nodes $u$ and $v$ can be defined as 
	$\EfR[u][v]:=\left(\nu_{u}-\nu_v\right)/I$
	where $\nu_{u}$ is the potential at node $u$ and $I$ is the current flowing into the network. Other equivalent expressions for the effective resistance \cite{Ghosh2008} in terms of the matrices $L^+$ and $L^{[r]}$, as defined in \thref{lem:T_e_weighted}, are  
	\begin{equation}
	\label{eq:Effective_resistance} 
	\EfR[u][v]=\ell^+_{uu}+\ell^+_{vv}-2\ell^+_{uv}=\begin{cases}
		\left(\ell_{uu}^{-1,[r]}+\ell_{vv}^{-1,[r]}-2\ell_{uv}^{-1,[r]}\right) &\text{ if } r\neq u,v\\
		\ell_{uu}^{-1,[v]} &\text{ if } r=v \text{ and } u\neq v \\
		\ell_{vv}^{-1,[u]} &\text{ if } r=u \text{ and } u\neq v .\\
		\end{cases}
	\end{equation}
	We observe that the expressions in \thref{lem:T_e_weighted} and in equation \eqref{eq:Effective_resistance} are proportional. We will develop this relation further in Section \ref{subsec:Compute_Prob_PWS}. An important property of the effective resistance is that it defines a metric over the nodes of a graph (\cite{Fouss2016} Section 2.5.2).

\section{Probabilistic Watershed}
\label{sec:PWS}

Instead of computing the mSF, as in the Watershed algorithm, we take into account  all the 2-forests that separate two seeds $s_1$ and $s_2$ in two trees according to their costs. Since each 2-forest assigns a query node to exactly one of the two seeds, we calculate the probability of sampling a 2-forest that connects the seed with the query node. Moreover, this provides an uncertainty measure of the assigned label. We call this approach to semi-supervised learning "Probabilistic Watershed".
\subsection{Probability of connecting two nodes in an ensemble of 2-forests}
\label{subsec:Prob_connecting_nodes}

In Section \ref{sec:Notation_Terminology}, we defined the cost of a forest as the cumulative cost of its edges. We assume that the 2-forests $f\in \Forest[s_1][s_2]$ follow a probability distribution that minimizes the expected cost of a 2-forest among all distributions of given entropy $J$. Formally, the 2-forests are sampled from the distribution which minimizes

\begin{equation}
\label{eq:min_entropy}
    \min_{P} \sum_{f\in\Forest[s_1][s_2]} P(f)c(f), \quad \text{s.t.} \quad 
\sum_{f\in\Forest[s_1][s_2]} P(f)=1 \, \text{ and } \,
\mathcal{H}(P)=J,
\end{equation}

where $\mathcal{H}(P)$ is the entropy of $P$. The lower the entropy, the more probability mass is given to the 2-forests of lowest cost. The minimizing distribution is the Gibbs distribution (e.g. \cite{winkler2012image} 3.2):
\begin{equation}
\label{eq:Gibbs_dist}
    P(f)=\frac{\exp\left(-\mu c(f)\right)}{\sum_{f'\in\Forest[s_1][s_2]}\exp\left(-\mu c(f')\right)}=\frac{\prod_{e\in E_{f}}\exp(-\mu c(e))}{\sum_{f'\in\Forest[s_1][s_2]} \prod_{e\in E_{f'}}\exp(-\mu c(e))}=\frac{w(f)}{\sum_{f'\in\Forest[s_1][s_2]} w(f')},
\end{equation}

where $\mu$ implicitly determines the entropy. A higher $\mu$ implies a lower entropy (see Section \ref{sec:Pow_Watershed} and \figurename{} \ref{fig:mu_behaviour} in the appendix). According to \eqref{eq:Gibbs_dist}, an appropriate choice for the edge-weights is $w(e)=\exp(-\mu c(e))$.
The main definition of the paper is:

\begin{definition}[\textbf{Probabilities of the Probabilistic Watershed}]
\thlabel{Def:Prob}
Given two seeds $s_1$ and $s_2$ and a query node $q$, we define the Probabilistic Watershed's probability that $q$ and $s_1$ have the same label as the probability of sampling a 2-forest that connects $s_1$ and $q$, while separating the seeds:
\begin{equation}
\label{def:ProbWS_Prob}
P(q\sim s_1)\coloneqq \sum_{f \in \Forest[s_1, q][s_2]} P(f) = \sum_{f\in \Forest[s_1, q][s_2]} w(f)  \Big/  \sum_{f' \in \Forest[s_1][s_2]} w(f') = w\left(\Forest[s_1, q][s_2]\right) \big/w\left(\Forest[s_1][s_2]\right).  
\end{equation}

\end{definition}

    The Watershed algorithm computes a minimum cost 2-forest, which is the most likely 2-forest according to  \eqref{eq:Gibbs_dist}, and segments the nodes by their connection to seeds in the minimum cost spanning 2-forest. However, it does not indicate which label assignments were ambiguous, for instance due to the existence of other low - but not minimum - cost 2-forests. This makes it a brittle "winner-takes-all" approach. In contrast, the Probabilistic Watershed takes all spanning 2-forests into account according to their cost (see \figurename{} \ref{fig:summary_ProbWS}). The resulting assignment probability of each node provides an uncertainty measure. Assigning each node to the seed for which it has the highest probability can yield a segmentation different from the Watershed's. 

\subsection{Computing the probability of a query being connected to a seed}
\label{subsec:Compute_Prob_PWS}
In the previous subsection, we defined the probability of a node being assigned to a seed via a Gibbs distribution over all exponentially many 2-forests. Here, we show that it can be computed analytically using only elementary graph constructions and the MTT (\thref{Th:Matrix_tree}).
In \thref{lem:T_e_weighted} we have stated how to calculate $w(\Forest[u][v])$ for any $u,v\in V$. Applying this to $\Forest[s_1][s_2]$, $\Forest[s_1][q]$ and $\Forest[s_2][q]$ we can compute $w(\Forest[s_1][s_2][q])$ and $w(\Forest[s_2][s_1][q])$ by means of a linear system.

$\Forest[u][v][q]$ and $\Forest[v][u][q]$ form a partition of $\Forest[u][v]$ for any mutually distinct nodes $u,v,q$ as mentioned in Section \ref{sec:Notation_Terminology}. Thus, we obtain the linear system of three
equations in three unknowns:
\begin{equation}
\label{Lin_syst_bin}
\begin{array}{ccccc}
w(\Forest[s_1][s_2][q])+w(\Forest[s_1][q][s_2])&=&w(\Forest[s_2][q])\\
w(\Forest[s_1][q][s_2])+w(\Forest[s_2][s_1][q])&=&w(\Forest[s_1][q])\\
w(\Forest[s_1][s_2][q])+w(\Forest[s_2][s_1][q])&=&w(\Forest[s_1][s_2]).
\end{array}
\end{equation}

In this paragraph, we describe an alternative way of deriving \eqref{Lin_syst_bin} by relating spanning 2-forests to spanning trees before we solve it in \eqref{eq:sol_system}. This is similar to our use of the MTT for counting spanning 2-forests instead of trees in \thref{lem_app_w(Te)+w(T)=w(Ge)} (see Appendix \ref{app:calculus_F})
Let $t$ be a spanning tree of $G$. To create a 2-forest $f\in\Forest[s_1][s_2]$ from $t$ we need to remove an edge $e$ in the path from $s_1$ to $s_2$, that is $e\in \mathcal{P}_t(s_1,s_2)$. This edge $e$ must be either in $\mathcal{P}_t(q,s_1)\cap\mathcal{P}_t(s_1,s_2)$ or $\mathcal{P}_t(q,s_2)\cap\mathcal{P}_t(s_1,s_2)$ (shown in red and blue respectively in \figurename{} \ref{sfig4:Binary_case}), as the union of $P_t(s_1, q)$ and $P_t(q, s_2)$ contains $P_t(s_1, s_2)$ and removing $e$ from $t$ cannot pairwise separate $q$, $s_1$ and $s_2$. If we remove an edge from $\mathcal{P}_t(q,s_2)\cap\mathcal{P}_t(s_1,s_2)$, we get $f\in \Forest[s_1][s_2][q]$ since we are disconnecting $s_2$ from $q$, otherwise $f \in \Forest[s_2][s_1][q]$. Analogously, we obtain a 2-forest in $\Forest[s_1][q]$ or $\Forest[s_2][q]$ if we remove an edge $e$ from $\mathcal{P}_t(s_1,q)$ or $\mathcal{P}_t(s_2,q)$ respectively (see \figurename{} \ref{fig:Binary_case}). When applied to all spanning trees, we obtain the system \eqref{Lin_syst_bin}.

\tikzset{
	invisible/.style={white,circle,draw,minimum size=0.06cm,inner sep=0pt]},
	main node/.style={circle,draw,minimum size=0.5cm,inner sep=0pt]},
}
\begin{figure}
	\noindent\makebox[\linewidth][c]{	\begin{subfigure}{.25\textwidth}
			\centering
			\begin{tikzpicture}[node distance=1.5cm,
			thick,main node/.style={circle,draw,minimum size=0.3cm,inner sep=0pt]}]
			\def\scale{0.7}
			
			\node[main node] (0) at (\scale*0,\scale*0) {\tiny$q$}; 
			\node[invisible] (4) at (\scale*0,-\scale*1.5)   {}; 
			\node[main node] (2) at (\scale*1,-\scale*2.5) {\tiny$s_2$};
			\node[main node] (1) at (-\scale*1,-\scale*2.5) {\tiny$s_1$};

			\path[dashed,red]
			(4) edge node {} (1);
			\path[dashed,blue]
			(4) edge node {} (2);
			\path[dashed,green!50!black!50]
			(4) edge node {} (0);

			\end{tikzpicture}
			\caption{spanning tree $t\in\Setspt$}
			\label{sfig1:Binary_case}
		\end{subfigure}%
		\begin{subfigure}{.25\textwidth}
			\centering
			\begin{tikzpicture}[node distance=1.5cm,
			thick,main node/.style={circle,draw,minimum size=0.3cm,inner sep=0pt]}]
			\def\scale{0.7}
		
			\node[main node] (0) at (\scale*0,\scale*0) {\tiny$q$}; 
			\node[invisible] (4) at (\scale*0,-\scale*1.5)   {}; 
			\node[main node] (2) at (\scale*1,-\scale*2.5) {\tiny$s_2$};
			\node[main node] (1) at (-\scale*1,-\scale*2.5) {\tiny$s_1$};

			\path[dashed,red]
			(4) edge node {} (1);
			\path[dashed,blue]
			(4) edge node {} (2);
			\path[dashed,green!50!black!50]
			(4) edge node {} (0);
			
			\draw[blue!75!white!100] (\scale*0.26,-\scale*2.16) --  (\scale*0.53,-\scale*1.63) node [above right,pos=0.2][rectangle split,rectangle split parts=2] {\tiny$f\in \Forest[s_1][s_2][q]$}; 
			
			\draw[green!50!black!100] (\scale*0.3,-\scale*1.1) --  (-\scale*0.3,-\scale*1.1)node [ left,pos=0.8][rectangle split,rectangle split parts=2] {\tiny$f\in \Forest[s_1][q][s_2]$};
			\end{tikzpicture}
			\caption{forest $f\in\Forest[s_2][q]$}
			\label{sfig2:Binary_case}
		\end{subfigure}
		\begin{subfigure}{.25\textwidth}
			\centering
			\begin{tikzpicture}[node distance=1.5cm,
			thick,main node/.style={circle,draw,minimum size=0.3cm,inner sep=0pt]}]
			\def\scale{0.7}
			
			\node[main node] (0) at (\scale*0,\scale*0) {\tiny$q$}; 
			\node[invisible] (4) at (\scale*0,-\scale*1.5)   {}; 
			\node[main node] (2) at (\scale*1,-\scale*2.5) {\tiny$s_2$};
			\node[main node] (1) at (-\scale*1,-\scale*2.5) {\tiny$s_1$};
			\path[dashed,red]
			(4) edge node {} (1);
			\path[dashed,blue]
			(4) edge node {} (2);
			\path[dashed,green!50!black!50]
			(4) edge node {} (0);
			
			\draw[red!75!white!100] (-\scale*0.26,-\scale*2.16) --  (-\scale*0.53,-\scale*1.63) node [above left,pos=0.2][rectangle split,rectangle split parts=2] {\tiny$f\in \Forest[s_2][s_1][q]$};
			
			\draw[green!50!black!100] (\scale*0.3,-\scale*1.1) --  (-\scale*0.3,-\scale*1.1) node [right,pos=0.1][rectangle split,rectangle split parts=2] {\tiny$f\in \Forest[s_1][q][s_2]$};
			
			\end{tikzpicture}
			\caption{forest $f\in\Forest[s_1][q]$}
			\label{sfig3:Binary_case}
		\end{subfigure}
		\begin{subfigure}{.25\textwidth}
			\centering
			\begin{tikzpicture}[node distance=1.5cm,
			thick,main node/.style={circle,draw,minimum size=0.3cm,inner sep=0pt]}]
			\def\scale{0.7}
			
			\node[main node] (0) at (\scale*0,\scale*0) {\tiny$q$}; 
			\node[invisible] (4) at (\scale*0,-\scale*1.5)   {}; 
			\node[main node] (2) at (\scale*1,-\scale*2.5) {\tiny$s_2$};
			\node[main node] (1) at (-\scale*1,-\scale*2.5) {\tiny$s_1$};

			\path[dashed,red]
			(4) edge node {} (1);
			\path[dashed,blue]
			(4) edge node {} (2);
			\path[dashed,green!50!black!50]
			(4) edge node {} (0);
			\draw[red!75!white!100] (-\scale*0.26,-\scale*2.16) --  (-\scale*0.53,-\scale*1.63) node [above left,pos=0.2][rectangle split,rectangle split parts=2] {\tiny$f\in \Forest[s_2][s_1][q]$}; 
			
			\draw[blue!75!white!100] (\scale*0.26,-\scale*2.16) --  (\scale*0.53,-\scale*1.63) node [above right,pos=0.2][rectangle split,rectangle split parts=2] {\tiny$f\in \Forest[s_1][s_2][q]$}; ;
			
			\end{tikzpicture}
			\caption{forest $f\in\Forest[s_1][s_2]$}
			\label{sfig4:Binary_case}
		\end{subfigure}
	}
	\caption{Amongst all spanning forests that isolate seed $s_1$ from $s_2$, we want to identify the fraction of forests connecting $s_1$ and $q$ (\thref{Def:Prob}). The dashed lines represent all spanning trees. Either cut in (\ref{sfig2:Binary_case}) yields a forest separating $q$ from $s_2$. The blue ones are of interest to us. Diagrams (\ref{sfig2:Binary_case}) - (\ref{sfig4:Binary_case}) correspond to the three equations in the linear system \eqref{Lin_syst_bin}, which can be solved for $w(\Forest[s_1][s_2][q])$.}
	\label{fig:Binary_case}
\end{figure}
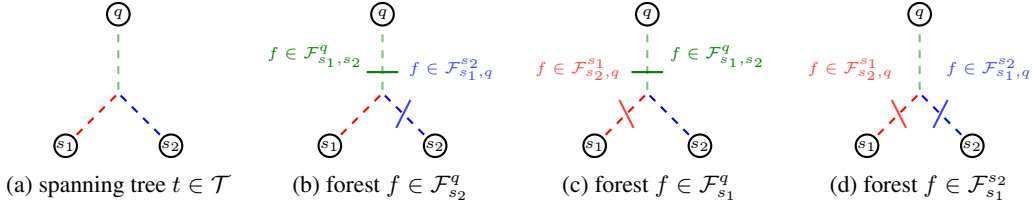

Solving the linear system \eqref{Lin_syst_bin} we obtain \footnote{Section IV.B of \cite{Grady2006} states $w(\Forest[u][v][q])=w(\Forest[u][v])-w(\Forest[v][q])$ for any $u,v,q \in V$ but that formula is incorrect. For instance, it does not hold for the complete graph with nodes $\{u,v,q\}$ and with $w(e)=1$ for all edges $e$, since $w(\Forest[u][v][q])=1 \neq 0= 2-2=w(\Forest[u][v])-w(\Forest[v][q])$.}
\begin{equation}
\label{eq:sol_system}
\begin{aligned}
w\left(\Forest[s_1][s_2][q]\right)=\big(w(\Forest[s_2][q])+w(\Forest[s_1][s_2])-w(\Forest[s_1][q])\big)\big/2.
\end{aligned}
\end{equation}
 In consequence of equation \eqref{eq:sol_system} and \thref{Def:Prob}
 we get the following theorem:
\begin{theorem}
	\thlabel{th:Prob}
	The probability that $q$ has the same label as seed $s_1$ is
	\[P(q \sim s_1)=\big(w(\Forest[s_2][q])+w(\Forest[s_1][s_2])-w(\Forest[s_1][q])\big)\big/\big(2w(\Forest[s_1][s_2])\big).\]
\end{theorem}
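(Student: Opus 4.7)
The plan is to reduce the theorem to a small $3 \times 3$ linear system whose right-hand sides are weights of the form $w(\Forest[u][v])$, each of which is already accessible via Lemma \thref{lem:T_e_weighted}. Starting from Definition \thref{Def:Prob}, I have $P(q \sim s_1) = w(\Forest[s_1][s_2][q]) / w(\Forest[s_1][s_2])$, so it suffices to give a closed-form expression for the numerator $w(\Forest[s_1][s_2][q])$ in terms of the three weights $w(\Forest[s_1][s_2])$, $w(\Forest[s_1][q])$, $w(\Forest[s_2][q])$.

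The first step is to lift the partition identity recorded in Section \ref{sec:Notation_Terminology} --- namely that for any three distinct nodes $u,v,q$ the sets $\Forest[u][v][q]$ and $\Forest[v][u][q]$ partition $\Forest[u][v]$ --- to the level of weights: since $w$ is multiplicative on edges of a given forest and additive over disjoint unions of forests, this immediately yields $w(\Forest[u][v]) = w(\Forest[u][v][q]) + w(\Forest[v][u][q])$. Applying this decomposition to each of the three pairs $(s_1,s_2)$, $(s_1,q)$, $(s_2,q)$ --- always letting the remaining member of $\{s_1,s_2,q\}$ play the role of the third (partitioning) node --- produces exactly the three equations of system \eqref{Lin_syst_bin}, involving precisely three distinct unknowns $w(\Forest[s_1][s_2][q])$, $w(\Forest[s_1][q][s_2])$, and $w(\Forest[s_2][s_1][q])$.

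The second step is a one-line elimination. Abbreviating the system as $x+y=A$, $y+z=B$, $x+z=C$ with $x = w(\Forest[s_1][s_2][q])$ and $A,B,C$ the three right-hand sides of \eqref{Lin_syst_bin}, the combination (first) $-$ (second) $+$ (third) cancels both $y$ and $z$ and yields $2x = A - B + C$. Substituting back gives $w(\Forest[s_1][s_2][q]) = \bigl(w(\Forest[s_2][q]) + w(\Forest[s_1][s_2]) - w(\Forest[s_1][q])\bigr)/2$, which is formula \eqref{eq:sol_system}. Dividing by $w(\Forest[s_1][s_2])$ and invoking Definition \thref{Def:Prob} delivers the theorem.

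I do not foresee a genuine obstacle: the entire argument is a bookkeeping consequence of one combinatorial partition, and Figure \ref{fig:Binary_case} provides an independent spanning-tree-cut sanity check of the same three equations (each spanning tree in $\Setspt$ produces forests in the respective sets by removing an edge from the appropriate sub-path, which re-derives \eqref{Lin_syst_bin} from scratch). The only place requiring mild care is verifying, via the symmetry $\Forest[u][v][q] = \Forest[q][v][u]$ built into the notation, that each sum on the left-hand side of \eqref{Lin_syst_bin} is genuinely a disjoint partition rather than a double-count; once that is checked, the rest is pure linear algebra.
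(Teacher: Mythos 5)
Your proposal is correct and follows essentially the same route as the paper: it derives the linear system \eqref{Lin_syst_bin} from the partition $\Forest[u][v]=\Forest[u][v][q]\sqcup\Forest[v][u][q]$ applied to the three node pairs, eliminates to obtain \eqref{eq:sol_system}, and divides by $w(\Forest[s_1][s_2])$ via \thref{Def:Prob}. The spanning-tree-cut argument you mention as a sanity check is likewise the paper's own alternative derivation of the system.
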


\thref{th:Prob} expresses $P(q \sim s_1)$ in terms of weights of 2-forests, which we can compute with \thref{lem:T_e_weighted}, which is based on the MTT. We use this expression to relate $P(q \sim s_1)$ to the effective resistance. As a result of \thref{lem:T_e_weighted} and equation \eqref{eq:Effective_resistance}, for any nodes $u,v\in V$ we have 
\begin{equation}
\label{eq:EfR=w(Te)/w(E)}
\EfR[u][v]=w\left(\Forest[u][v]\right)/w(\Setspt).
\end{equation}
 This relation has already been proven in \cite{Biggs1997} (Proposition 17.1) but in terms of the effective conductance (the inverse of the  effective resistance). Due to $\EfR[u][v]$ being a metric, $w\left(\Forest[u][v]\right)$ also defines  a metric over the nodes of the graph. Combining \eqref{eq:EfR=w(Te)/w(E)} with \thref{th:Prob}, we have that the probability of $q$ having seed $s_1$'s label is
\begin{equation}
\label{eq:triangle_inequality_diff_prob}
P(q \sim s_1)=\big(\EfR[s_2][q]+\EfR[s_2][s_1]-\EfR[s_1][q]\big)\big/\big(2\EfR[s_1][s_2]\big)
\end{equation}
The probability is proportional to the gap in the triangle inequality $\EfR[s_1][q]\leq \EfR[s_1][s_2]+\EfR[s_2][q].$
It will be shown in Section \ref{sec:Relation_Random_Walk} that the probability defined in \thref{Def:Prob} is equal to the probability given by the Random Walker \cite{Grady2006}. Equation \eqref{eq:triangle_inequality_diff_prob} gives an interpretation of this probability, which is new to the best of our knowledge.  We can see that the greater the gap in the triangle inequality, the greater is the probability. Further, we get $
    P(q \sim s_1)\geq P(q \sim s_2)\iff \EfR[s_1][q]\leq \EfR[s_2][q].$
This relation has already been pointed out in \cite{Grady2006} (section IV.B) in terms of the effective conductance between two nodes, but not as explicitly as in \eqref{eq:triangle_inequality_diff_prob}. We note that any metric distance on the nodes of a graph, e.g. the ones mentioned in the introduction, can define an assignment probability along the lines of equation \eqref{eq:triangle_inequality_diff_prob}.

Our discussion was constrained to the case of two seeds only to ease our explanation. We can reduce the case of multiple seeds per label to the two seed case by merging all nodes seeded with the same label. Similarly, the case of more than two labels can be reduced to the two label scenario by using a one versus all strategy: We choose one label and merge the seeds of other labels into one unique seed. In both cases we might introduce multiple edges between node pairs. While having formulated our arguments for simple graphs, they are also valid for multigraphs (see Appendix \ref{app:calculus_F}).

\section{Connection between the Probabilistic Watershed and the 
	Random Walker}
\label{sec:Relation_Random_Walk}

In this section we will show that the Random Walker of \cite{Grady2006} is equivalent to our Probabilistic Watershed, both computationally and in terms of the resulting label probabilities.

\begin{theorem}
	\thlabel{theorem:Trees_equiv_Random_walk_2_seeds} 
	The probability $x_{q}^{s_1}$ that a random walker as defined in \cite{Grady2006} starting at node $q$ reaches $s_1$ first before reaching $s_2$ is equal to the Probabilistic Watershed probability defined in \thref{Def:Prob}:
	\[x_{q}^{s_1}=P(q\sim s_1).\]
\end{theorem}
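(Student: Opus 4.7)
The plan is to show that both the Random Walker probability $x_q^{s_1}$ and the Probabilistic Watershed probability $P(q\sim s_1)$ collapse to the same ratio of entries of $(L^{[s_2]})^{-1}$, namely the Laplacian with the row and column of $s_2$ removed. Concretely, I will argue that both quantities equal $\ell_{s_1 q}^{-1,[s_2]}/\ell_{s_1 s_1}^{-1,[s_2]}$, after which the theorem follows by inspection.

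For the Probabilistic Watershed side, I would apply \thref{lem:T_e_weighted}(b) with the reference node chosen to be $r=s_2$ to each of the three 2-forest weights appearing in \thref{th:Prob}:
\begin{align*}
w(\Forest[s_1][s_2]) &= w(\Setspt)\,\ell_{s_1 s_1}^{-1,[s_2]},\\
w(\Forest[s_2][q]) &= w(\Setspt)\,\ell_{qq}^{-1,[s_2]},\\
w(\Forest[s_1][q]) &= w(\Setspt)\bigl(\ell_{s_1 s_1}^{-1,[s_2]} + \ell_{qq}^{-1,[s_2]} - 2\ell_{s_1 q}^{-1,[s_2]}\bigr).
\end{align*}
Substituting these three expressions into the numerator of \thref{th:Prob} produces immediate cancellation of the diagonal entries and leaves $w(\Setspt)\cdot 2\ell_{s_1 q}^{-1,[s_2]}$; dividing by $2w(\Forest[s_1][s_2])=2w(\Setspt)\ell_{s_1 s_1}^{-1,[s_2]}$ gives $P(q\sim s_1)=\ell_{s_1 q}^{-1,[s_2]}/\ell_{s_1 s_1}^{-1,[s_2]}$.

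For the Random Walker side, I would invoke the Dirichlet characterization of $x^{s_1}$ used in \cite{Grady2006}: it is the unique vector satisfying $(Lx^{s_1})_v=0$ for every $v\notin\{s_1,s_2\}$ together with boundary values $x^{s_1}_{s_1}=1$ and $x^{s_1}_{s_2}=0$. Let $y$ denote the restriction of $x^{s_1}$ to $V\setminus\{s_2\}$; because $x^{s_1}_{s_2}=0$, the $s_2$-column of $L$ contributes nothing, so on the remaining coordinates we get $L^{[s_2]}y = \alpha\,e_{s_1}$ for a single unknown scalar $\alpha=(Lx^{s_1})_{s_1}$ (harmonicity kills the right-hand side everywhere except at $s_1$). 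Connectivity of $G$ makes $L^{[s_2]}$ invertible, so $y_q=\alpha\,\ell_{q s_1}^{-1,[s_2]}$, and enforcing the remaining boundary value $y_{s_1}=1$ pins down $\alpha=1/\ell_{s_1 s_1}^{-1,[s_2]}$. Using symmetry of $(L^{[s_2]})^{-1}$ yields $x_q^{s_1}=\ell_{s_1 q}^{-1,[s_2]}/\ell_{s_1 s_1}^{-1,[s_2]}$, matching the Probabilistic Watershed expression.

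The main subtlety is the book-keeping after deleting the $s_2$-row: harmonicity fails at both seeds, but only $s_1$ survives the deletion, so exactly one unknown scalar $\alpha$ sits on the right-hand side. Verifying that this single scalar is uniquely pinned down by the boundary condition at $s_1$ is what makes the argument collapse to the clean ratio; once that is observed, both halves of the proof are pure algebra. The trivial cases $q=s_1$ and $q=s_2$ are consistent with both formulas, and the extension to multiple seeds per label and multiple labels follows from the seed-merging and one-versus-all reductions already described at the end of Section \ref{sec:PWS}.
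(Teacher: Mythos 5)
Your proposal is correct and follows essentially the same route as the paper's proof in Appendix~\ref{App:Proof_ProbWS=RW}: both reduce $P(q\sim s_1)$ via \thref{lem:T_e_weighted}(b) with $r=s_2$ to the ratio $\ell_{qs_1}^{-1,[s_2]}/\ell_{s_1s_1}^{-1,[s_2]}$ and then identify this vector with the Random Walker potential through the linear system in $L^{[s_2]}$. The only cosmetic difference is directional: the paper starts from the normalized $s_1$-column of $(L^{[s_2]})^{-1}$ and shows block-wise that it satisfies Grady's system $L_U y_U = -B_{s_1}$, whereas you start from the harmonicity characterization of the Random Walker and solve it to arrive at the same closed form.
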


This equivalence, which we prove in Appendix \ref{App:Proof_ProbWS=RW}, was pointed out by Leo Grady in \cite{Grady2006} section IV.B but with a different approach. Grady relied on results from \cite{Biggs1997}, where potential theory is used. There it is shown that \mbox{ $x_{q}^{s_1}=w(\Forest[s_1][s_2][q])/\left(\EfR[s_1][s_2] w(\Setspt)\right)$.} From this formula we get \thref{theorem:Trees_equiv_Random_walk_2_seeds} by using equation \eqref{eq:EfR=w(Te)/w(E)}:
\[x_{q}^{s_1}=w(\Forest[s_1][s_2][q])/\left(\EfR[s_1][s_2] w(\Setspt)\right)=w\left(\Forest[s_1][s_2][q]\right)/w\left(\Forest[s_1][s_2]\right)=P(q\sim s_1).\]
We have proven the same statement with elementary arguments and without the main theory of \cite{Biggs1997}. Through the use of the MTT, we have shown that the forest-sampling point of view is computationally equivalent to the in practice very useful Random Walker (see \cite{Zhu2003, Grady2006}, and recently \cite{vernaza2017learning, Bui2018, Cerrone_RW, Liu2018, bockelmann2019}), making our method just as potent. We thus refrained from adding further experiments and instead include a new interpretation of the Power Watershed within our framework.

\section{Power Watershed counts minimum cost spanning forests}
\label{sec:Pow_Watershed}
\begin{figure}[h]
\captionsetup[subfigure]{justification=centering}
    \begin{subfigure}{0.49\textwidth}
    \centering
    \includegraphics[height=5cm]{./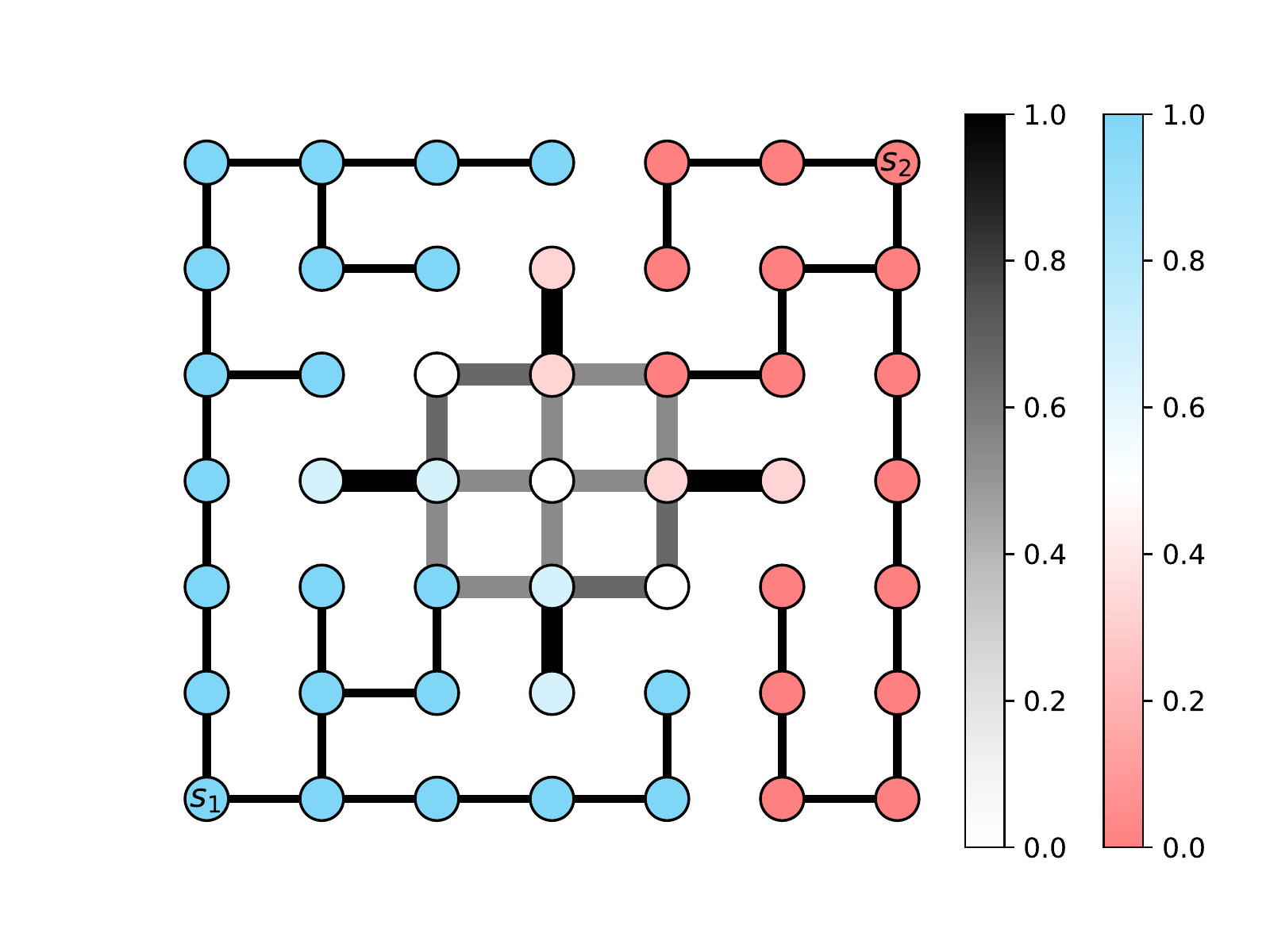}
    \caption{$P(\text{node} \sim s_1)$ and \\ $P(\text{edge} \in \text{some mSF}$)}
    \label{sfig:edge_present}
    \end{subfigure}
    \begin{subfigure}{0.49\textwidth}
    \centering
    \includegraphics[height=5cm]{./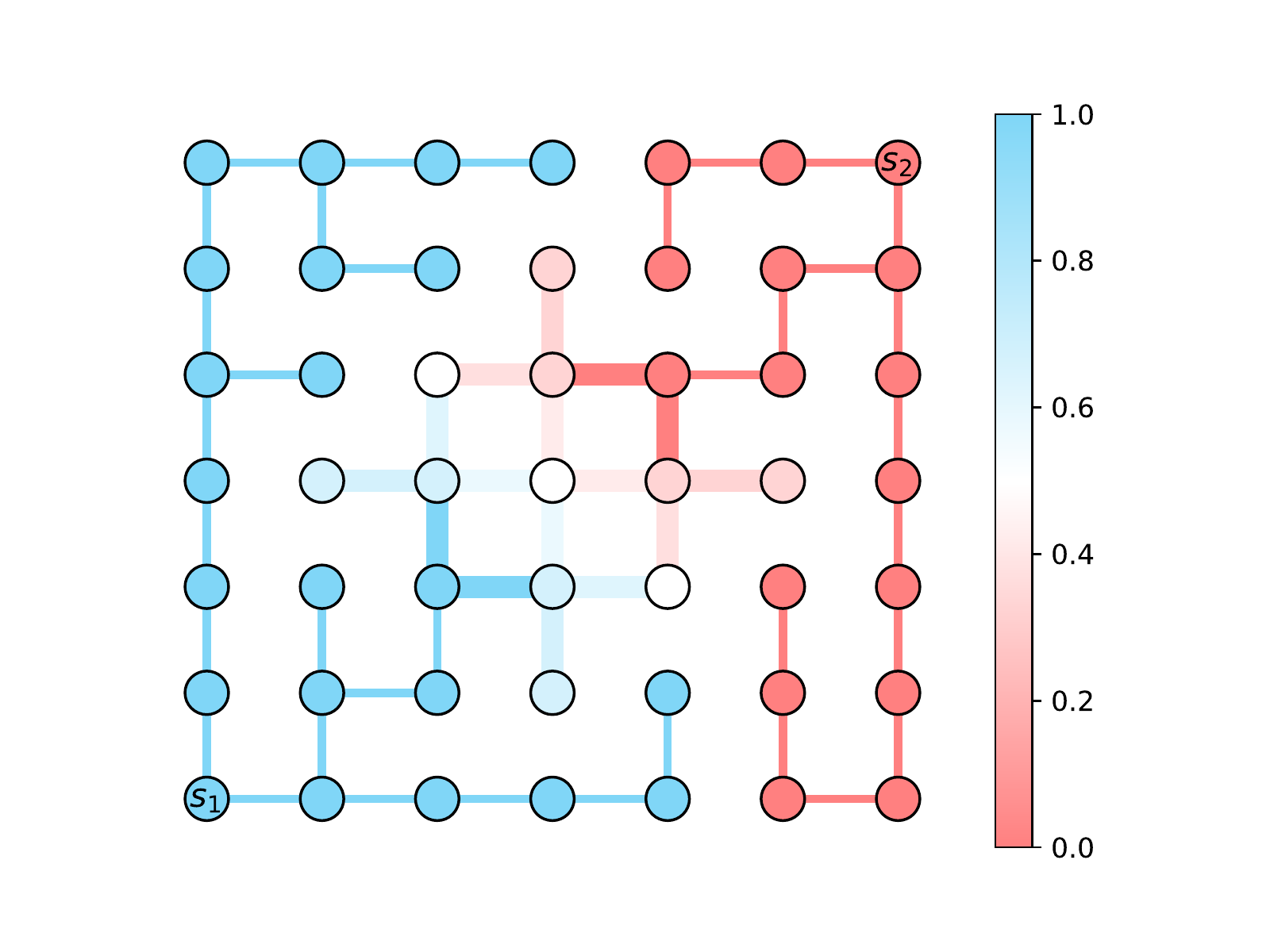}
    \caption{$P(\text{node} \sim s_1)$ and \\ $P(\text{edge} \sim s_1 | \text{edge} \in \text{some msF})$}
    \label{sfig:edge_given_present}
    \end{subfigure}
    \begin{subfigure}{0.49\textwidth}
    \centering
    \includegraphics[height=5cm]{./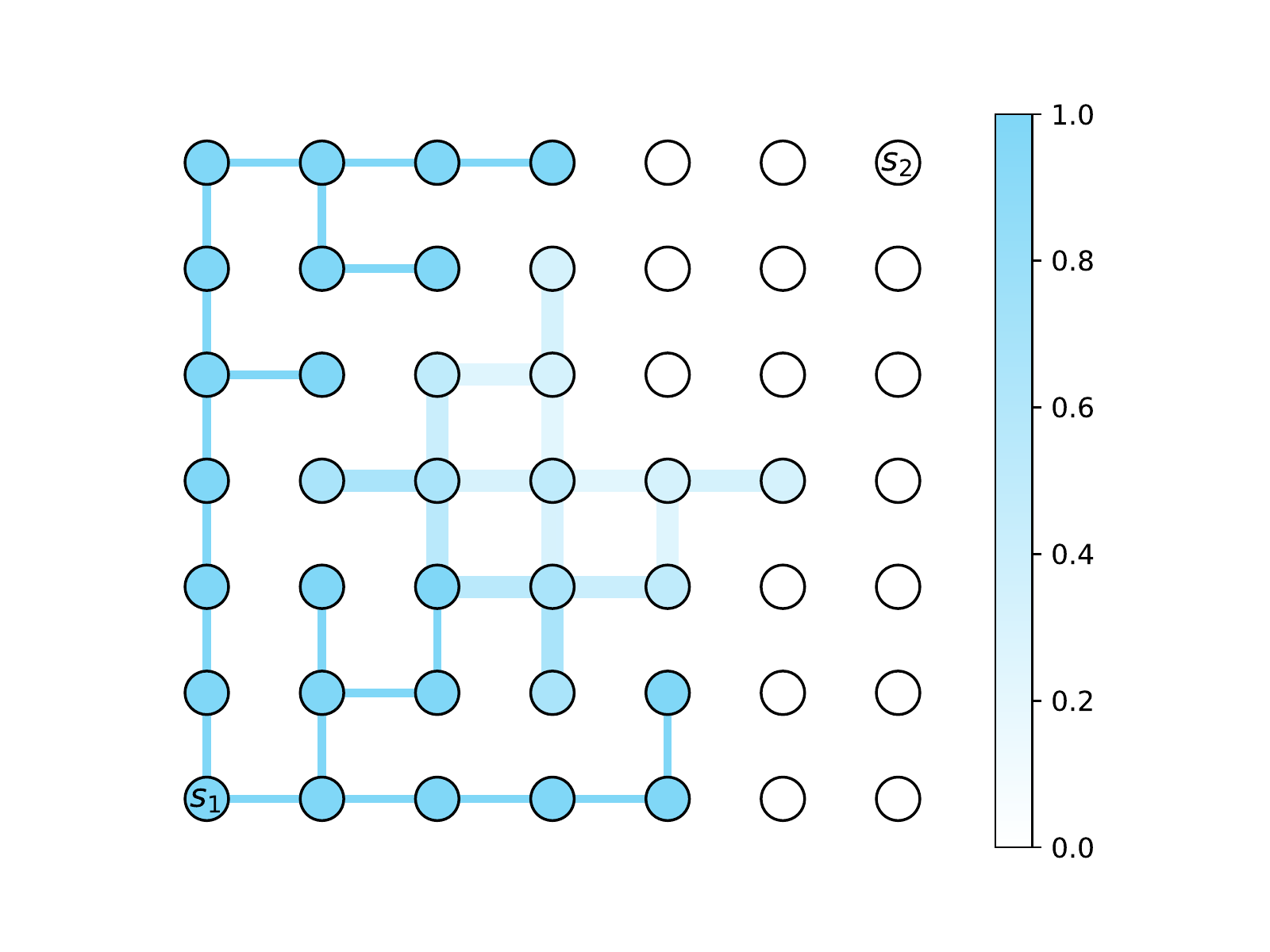}
    \caption{$P(\text{node} \sim s_1)$ and\\
    $P(\text{edge} \sim s_1, \text{edge} \in \text{some mSF})$}
    \label{sfig:edge_s_1}
    \end{subfigure}
    \begin{subfigure}{0.49\textwidth}
    \centering
    \includegraphics[height=5cm]{./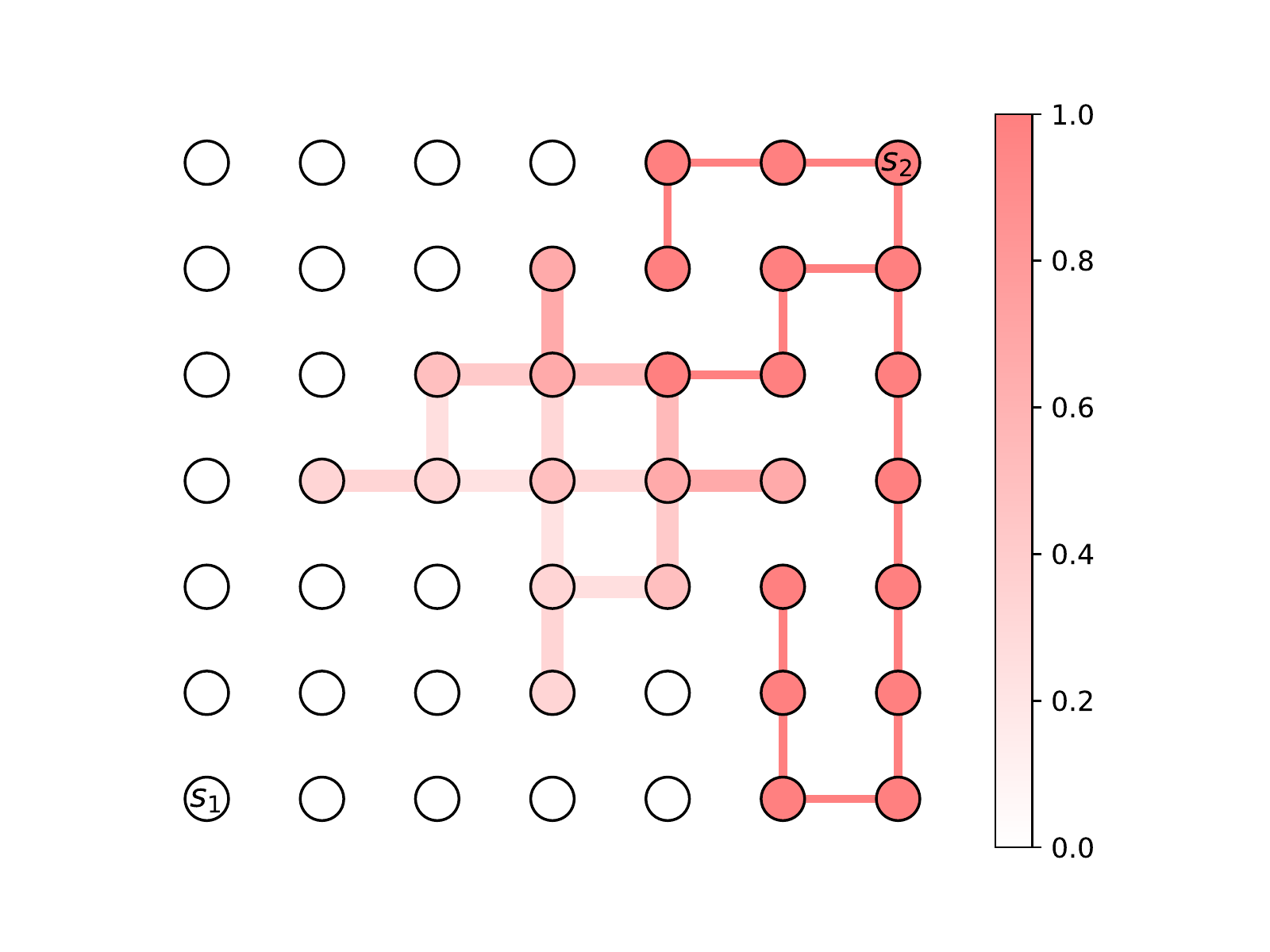}
    \caption{$P(\text{node} \sim s_2)$ and\\
    $P(\text{edge} \sim s_2, \text{edge} \in \text{some mSF})$}
    \label{sfig:edge_s_2}
    \end{subfigure}
    \caption{Power Watershed result on a grid graph with seeds $s_1$, $s_2$ and with random edge-costs outside a plateau of edges with the same cost (wide edges). By the results in Theorem 5.1, the Power Watershed counts mSFs. This is illustrated both with the node- and edge-colors. (\ref{sfig:edge_present}-\ref{sfig:edge_s_2}) The nodes are colored by their probability of belonging to seed $s_1$ ($s_2$), i.e. by the share of mSFs that connect a given node to $s_1$ ($s_2)$. (\ref{sfig:edge_present}) The edge-color indicates the share of mSFs in which the edge is present. (\ref{sfig:edge_given_present}) The edge-color indicates the share of mSFs in which the edge is connected to seed $s_1$ among the mSFs that contain the edge. (\ref{sfig:edge_s_1} - \ref{sfig:edge_s_2}) The edge-color indicates the share of mSFs in which the edge is connected to $s_1$ or $s_2$, respectively, among all mSFs. See Appendix \ref{App:Edge_prob} for a more detailed explanation.
    }
    \label{fig:edge_presence}
\end{figure}
The objective of this section is to recall the Power Watershed \cite{Couprie2011}
(see Appendix \ref{app_power_watershed}
for a summary)
and develop a new understanding of its nature. Power Watershed is a limit over the Random Walker and thus over the equivalent Probabilistic Watershed. The latter's idea of measuring the weight of a set of 2-forests carries over nicely to the Power Watershed, where, as a limit, only the maximum weight / minimum cost spanning forests are considered. This section details the connection.

Let  $G=(V,E,w,c)$ and $s_1,s_2\in V$ be as before. In \cite{Couprie2011} the following objective function is proposed:
\begin{equation}
\label{eq:alg_Couprie}
\arg\min_x \sum_{e=\{u,v\}\in E}\left(w(e)\right)^{\alpha}\left(|x_u-x_v|\right)^{\beta},\ \text{s.t.} \  x_{s_1}=1, \ x_{s_2}=0.
\end{equation}
For $\alpha=1$ and $\beta=2$ it gives the Random Walker's objective function. The Power Watershed considers the limit case when $\alpha\to\infty$ and $\beta$ remains finite.

In section \ref{subsec:Prob_connecting_nodes} we defined the weight of an edge $e$ as $w(e)=\exp(-\mu c(e))$, where $c(e)$ was the edge-cost and $\mu$ implicitly determined the entropy of the 2-forest distribution. By raising the weight of the edges to $\alpha$ we obtain  $\big(w(e)\big)^{\alpha}=\exp(-\mu\alpha c(e))=\exp(-\mu_{\alpha} c(e))$, where $\mu_{\alpha}\coloneqq\mu\alpha$. Therefore, we can absorb $\alpha$ into $\mu$. When $\alpha\to\infty$ (and therefore $\mu_{\alpha}\to\infty$) the distribution will have a lowest entropy. As a consequence only the mSFs / MSFs are considered in the Power Watershed:

\begin{theorem}
	\thlabel{th:Pow_Wat=Max_Span_tree_ratio}
	Given two seeds $s_1$ and $s_2$, let us denote the potential of node $q$ being assigned to seed $s_1$ by the Power Watershed with $\beta=2$ as $x_{q}^{\text{PW}}$. Let further $ w_{\max}$ be $\max_{f\in \Forest[s_1][s_2]} w(f)$. Then
	\[x_{q}^{\text{PW}}=\frac{\left|\{f\in \Forest[s_1][s_2][q] \ : \  w(f)=w_{\max}\}\right|}{\left|\{f\in \Forest[s_1][s_2] \ : \  w(f)=w_{\max}\}\right|}
	.\]
\end{theorem}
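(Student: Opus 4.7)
The plan is to reduce the claim to a straightforward limit computation by combining Theorem 4.1 with the Power Watershed's definition as the $\alpha \to \infty$ limit of the Random Walker objective \eqref{eq:alg_Couprie}. Since $\beta = 2$ fixes the Random Walker problem for each value of $\alpha$, and Theorem 4.1 identifies the Random Walker potential with the Probabilistic Watershed probability of \thref{Def:Prob}, I can write
\[
x_q^{\text{PW}} \;=\; \lim_{\alpha \to \infty} P_\alpha(q \sim s_1) \;=\; \lim_{\alpha \to \infty} \frac{w_\alpha\!\left(\Forest[s_1][s_2][q]\right)}{w_\alpha\!\left(\Forest[s_1][s_2]\right)},
\]
where $w_\alpha(e) := w(e)^\alpha$ and $w_\alpha(f) = \prod_{e \in E_f} w(e)^\alpha = w(f)^\alpha$. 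This uses the observation made just before the theorem that raising edge weights to the $\alpha$ power corresponds to replacing $\mu$ by $\mu_\alpha = \mu\alpha$ in the Gibbs distribution, so the very same equivalence between the Probabilistic Watershed and the Random Walker applies at every $\alpha$.

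Next I would analyze the ratio by factoring out $w_{\max}^\alpha$ from both the numerator and the denominator. Writing
\[
\frac{w_\alpha\!\left(\Forest[s_1][s_2][q]\right)}{w_\alpha\!\left(\Forest[s_1][s_2]\right)} \;=\; \frac{\displaystyle\sum_{f \in \Forest[s_1][s_2][q]} \bigl(w(f)/w_{\max}\bigr)^{\alpha}}{\displaystyle\sum_{f' \in \Forest[s_1][s_2]} \bigl(w(f')/w_{\max}\bigr)^{\alpha}},
\]
each summand $(w(f)/w_{\max})^\alpha$ either equals $1$ (when $w(f) = w_{\max}$) or tends to $0$ (when $w(f) < w_{\max}$, since the ratio is strictly less than one). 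Both sums are finite, so I can pass the limit inside and obtain exactly the counting ratio asserted in the theorem. Note that $w_{\max}$ was defined over $\Forest[s_1][s_2]$, and because $\Forest[s_1][s_2][q] \subseteq \Forest[s_1][s_2]$, the numerator's surviving terms are precisely those MSFs in $\Forest[s_1][s_2]$ that also separate $q$ from $s_2$ — degenerate cases like an empty set of maximum-weight forests in $\Forest[s_1][s_2][q]$ just give numerator zero, consistent with the claim.

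The main obstacle is not the arithmetic but ensuring that the limit of the Random Walker potentials indeed coincides with the Power Watershed solution defined in \eqref{eq:alg_Couprie}. For $\beta = 2$ this is essentially the content of the original Power Watershed paper \cite{Couprie2011} and is summarized in the paper's appendix referenced here, so I would cite that justification rather than redo it. Everything else is elementary: absorbing $\alpha$ into $\mu$, invoking \thref{theorem:Trees_equiv_Random_walk_2_seeds} for each finite $\alpha$, and then taking the dominant-term limit above.
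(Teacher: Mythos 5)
Your proposal is correct and follows essentially the same route as the paper's own proof in Appendix \ref{App:Proof_PWS_count_mSF}: invoke Theorem 3 of \cite{Couprie2011} to identify $x_q^{\text{PW}}$ with the $\alpha\to\infty$ limit of the Random Walker potentials, use \thref{theorem:Trees_equiv_Random_walk_2_seeds} to replace these with the Probabilistic Watershed ratio at each finite $\alpha$, and then normalize by $w_{\max}^{\alpha}$ so that only the maximum-weight forests survive in the limit. Your additional remarks on the degenerate numerator case and on absorbing $\alpha$ into $\mu$ are consistent with the paper's discussion preceding the theorem.
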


\tikzset{
	dot/.style 2 args={fill, circle, inner sep=0pt, label={#1:\scriptsize #2}},	
	fulldot/.style 2 args={circle,draw,minimum size=0.2cm,inner sep=0pt, label={#1:\scriptsize #2}},
	main node/.style={circle,draw,minimum size=0.3cm,inner sep=0pt]},
	small node/.style={circle,draw,minimum size=0.1cm,inner sep=0pt]},
}

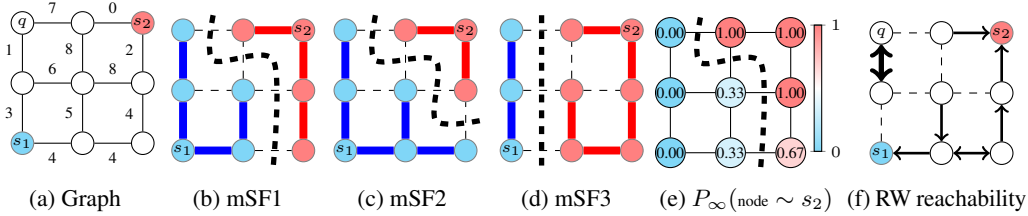
\begin{figure*}
	\centering
	\noindent
		\makebox[0.8\linewidth][c]{\begin{subfigure}[t]{.16\textwidth}
				\centering
				\begin{tikzpicture}[]
				\def\scale{0.8 }
				
				\node[main node,opacity=.5,black,fill=cyan,text opacity=1] (0) at (0,0) {\tiny$s_1$};
				\node[main node] (1) at (\scale*1,0) {};
				\node[main node] (2) at (\scale*2,0) {};
				\node[main node] (3) at (0,\scale*1) {};
				\node[main node] (4) at (\scale*1,\scale*1) {};
				\node[main node] (5) at (\scale*2,\scale*1) {};
				\node[main node] (6) at (0,\scale*2) {\tiny$q$};
				\node[main node] (7) at (\scale*1,\scale*2) {};
				\node[main node,opacity=.5,black,fill=red,text opacity=1] (8) at (\scale*2,\scale*2) {\tiny$s_2$};
				
				\path[-,draw]
				(0) edge node[below] {\tiny4} (1)
				(0) edge node[left] {\tiny3} (3)
				
				(1) edge node[below] {\tiny4} (2)
				(1) edge node[left] {\tiny5} (4)
				
				(2) edge node[left] {\tiny4} (5)

				(3) edge node[above] {\tiny6} (4)
				(3) edge node[above left,pos=0.2] {\tiny1} (6)
				
				(4) edge node[above] {\tiny 8} (5)
				(4) edge node[above left,pos=0.2] {\tiny8} (7)

				(5) edge node[above left,pos=0.2] {\tiny2} (8)
				
				(6) edge node[above] {\tiny7} (7)
				
				(7) edge node[above] {\tiny0} (8);
				\end{tikzpicture}
				\caption{Graph}
				\label{sfig1:Count_MSF}
			\end{subfigure}%
			\begin{subfigure}[t]{.15\textwidth}
				\centering
				\begin{tikzpicture}
                    
				\def\scale{0.8 }

				\node[main node,opacity=.5,black,fill=cyan,text opacity=1] (0) at (0,0) {\tiny$s_1$};
				\node[main node,opacity=.5,black,fill=cyan,text opacity=1] (1) at (\scale*1,0) {};
				\node[main node,opacity=.5,black,fill=red,text opacity=1] (2) at (\scale*2,0) {};
				\node[main node,opacity=.5,black,fill=cyan,text opacity=1] (3) at (0,\scale*1) {};
				\node[main node,opacity=.5,black,fill=cyan,text opacity=1] (4) at (\scale*1,\scale*1) {};
				\node[main node,opacity=.5,black,fill=red,text opacity=1] (5) at (\scale*2,\scale*1) {};
				\node[main node,opacity=.5,black,fill=cyan,text opacity=1] (6) at (0,\scale*2) {};
				\node[main node,opacity=.5,black,fill=red,text opacity=1] (7) at (\scale*1,\scale*2) {};
				\node[main node,opacity=.5,black,fill=red,text opacity=1] (8) at (\scale*2,\scale*2) {\tiny$s_2$};

				\path[-,draw,red,line width=3pt]
				
				(2) edge node[left] {} (5)
				
				(5) edge node[left] {} (8)
				
				(7) edge node[above] {} (8);
				
				\path[dashed,draw]
                (1) edge node[below] {} (2)
                
				(3) edge node[above] {} (4)
				
				(4) edge node[above] {} (5)
				
				(4) edge node[above left] {} (7)
				
				(6) edge node[above] {} (7);

				\path[-,draw,blue,line width=3pt]
				
				(0) edge node[below] {} (1)
				
				(0) edge node[left] {} (3)
				
				(1) edge node[below left] {} (4)

				(3) edge node[left] {} (6);

    			\node[cut node] (c0) at (0+0.5*\scale,\scale*0.25+\scale*2) {};
    			\node[cut node] (c1) at (0+0.5*\scale,\scale*0.5+\scale*1) {};
				\node[cut node] (c2) at (1*\scale+0.5*\scale,\scale*0.5+\scale*1) {};

				\node[cut node] (c3) at (1*\scale+0.5*\scale,-\scale*0.25+\scale*0) {};
				
				\draw [dashed,line width=2pt,xshift=4cm]  plot[smooth] coordinates { (c0) (c1) (c2) (c3) };
                
				\end{tikzpicture}
				\caption{mSF1}
				\label{sfig2:Count_MSF}
			\end{subfigure}
			\begin{subfigure}[t]{.15\textwidth}
				\centering
				\begin{tikzpicture}
                    
				\def\scale{0.8}

				\node[main node,opacity=.5,black,fill=cyan,text opacity=1] (0) at (0,0) {\tiny$s_1$};
				\node[main node,opacity=.5,black,fill=cyan,text opacity=1] (1) at (\scale*1,0) {};
				\node[main node,opacity=.5,black,fill=cyan,text opacity=1] (2) at (\scale*2,0) {};
				\node[main node,opacity=.5,black,fill=cyan,text opacity=1] (3) at (0,\scale*1) {};
				\node[main node,opacity=.5,black,fill=cyan,text opacity=1] (4) at (\scale*1,\scale*1) {};
				\node[main node,opacity=.5,black,fill=red,text opacity=1] (5) at (\scale*2,\scale*1) {};
				\node[main node,opacity=.5,black,fill=cyan,text opacity=1] (6) at (0,\scale*2) {};
				\node[main node,opacity=.5,black,fill=red,text opacity=1] (7) at (\scale*1,\scale*2) {};
				\node[main node,opacity=.5,black,fill=red,text opacity=1] (8) at (\scale*2,\scale*2) {\tiny$s_2$};

				\path[-,draw,red,line width=3pt]
				
				(5) edge node[left] {} (8)
				
				(7) edge node[above] {} (8);
				
				\path[dashed,draw]
				
				(2) edge node[left] {} (5)
                
				(3) edge node[above] {} (4)
				
				(4) edge node[above] {} (5)
				
				(4) edge node[above left] {} (7)
				
				(6) edge node[above] {} (7);

				\path[-,draw,blue,line width=3pt]
				
				(0) edge node[below] {} (1)
				
				(0) edge node[left] {} (3)
				
                (1) edge node[below] {} (2)
				
				(1) edge node[below left] {} (4)

				(3) edge node[left] {} (6);

    			\node[cut node] (c0) at (0+0.5*\scale,\scale*0.25+\scale*2) {};
    			\node[cut node] (c1) at (0+0.5*\scale,\scale*0.5+\scale*1) {};
				\node[cut node] (c2) at (1*\scale+0.5*\scale,\scale*0.5+\scale*1) {};
				
                \node[cut node] (c3) at (1*\scale+0.5*\scale,\scale*0.5+\scale*0) {};
				\node[cut node] (c4) at (2*\scale+0.25*\scale,\scale*0.5+\scale*0) {};
				
				\draw [dashed,line width=2pt,xshift=4cm]  plot[smooth] coordinates { (c0) (c1) (c2) (c3) (c4) };
                
				\end{tikzpicture}
				\caption{mSF2}
				\label{sfig3:Count_MSF}
			\end{subfigure}
			\begin{subfigure}[t]{.15\textwidth}
				\centering
				\begin{tikzpicture}
                    
				\def\scale{0.8 }

				\node[main node,opacity=.5,black,fill=cyan,text opacity=1] (0) at (0,0) {\tiny$s_1$};
				\node[main node,opacity=.5,black,fill=red,text opacity=1] (1) at (\scale*1,0) {};
				\node[main node,opacity=.5,black,fill=red,text opacity=1] (2) at (\scale*2,0) {};
				\node[main node,opacity=.5,black,fill=cyan,text opacity=1] (3) at (0,\scale*1) {};
				\node[main node,opacity=.5,black,fill=red,text opacity=1] (4) at (\scale*1,\scale*1) {};
				\node[main node,opacity=.5,black,fill=red,text opacity=1] (5) at (\scale*2,\scale*1) {};
				\node[main node,opacity=.5,black,fill=cyan,text opacity=1] (6) at (0,\scale*2) {};
				\node[main node,opacity=.5,black,fill=red,text opacity=1] (7) at (\scale*1,\scale*2) {};
				\node[main node,opacity=.5,black,fill=red,text opacity=1] (8) at (\scale*2,\scale*2) {\tiny$s_2$};

				\path[-,draw,red,line width=3pt]
				
                (1) edge node[below] {} (2)
				
				(1) edge node[below left] {} (4)
				
				(2) edge node[left] {} (5)
				
				(5) edge node[left] {} (8)
				
				(7) edge node[above] {} (8);
				
				\path[dashed,draw]
				
				(0) edge node[below] {} (1)
                
				(3) edge node[above] {} (4)
				
				(4) edge node[above] {} (5)
				
				(4) edge node[above left] {} (7)
				
				(6) edge node[above] {} (7);

				\path[-,draw,blue,line width=3pt]
				
				(0) edge node[left] {} (3)

				(3) edge node[left] {} (6);
				
				\node[cut node] (c0) at (0+0.5*\scale,\scale*0.25+\scale*2) {};
    			\node[cut node] (c1) at (0+0.5*\scale,-\scale*0.25+\scale*0) {};
				
				\draw [dashed,line width=2pt,xshift=4cm]  plot[smooth,tension=1] coordinates { (c0) (c1) };

				\end{tikzpicture}
				\caption{mSF3}
				\label{sfig4:Count_MSF}
			\end{subfigure}
			\begin{subfigure}[t]{.18\textwidth}
				\centering
				\begin{tikzpicture}
                    
				\def\scale{0.8 }
                \node[main node,opacity=1,black,fill={rgb,255:red,127.5; green,214.5; blue,247},text opacity=1] (0) at (0,0) {\tiny0.00};
				\node[main node,opacity=1,black,fill={rgb,255:red,211.65; green,241.23; blue,252.28},text opacity=1] (1) at (\scale*1,0) {\tiny0.33};
				\node[main node,opacity=1,black,fill={rgb,255:red,255; green,211.65; blue,211.65 },text opacity=1] (2) at (\scale*2,0) {\tiny0.67};
				
				\node[main node,opacity=1,black,fill={rgb,255:red,127.5; green,214.5; blue,247},text opacity=1] (3) at (0,\scale*1) {\tiny0.00};
				\node[main node,opacity=1,black,fill={rgb,255:red,211.65; green,241.23; blue,252.28},text opacity=1] (4) at (\scale*1,\scale*1) {\tiny0.33};
				\node[main node,opacity=1,black,fill={rgb,255:red,255; green,127.5; blue,127.5},text opacity=1] (5) at (\scale*2,\scale*1) {\tiny1.00};
				
				\node[main node,opacity=1,black,fill={rgb,255:red,127.5; green,214.5; blue,247},text opacity=1] (6) at (0,\scale*2) {\tiny0.00};
				\node[main node,opacity=1,black,fill={rgb,255:red,255; green,127.5; blue,127.5},text opacity=1] (7) at (\scale*1,\scale*2) {\tiny1.00};
				\node[main node,opacity=1,black,fill={rgb,255:red,255; green,127.5; blue,127.5},text opacity=1] (8) at (\scale*2,\scale*2) {\tiny1.00};
				
				\path[-,draw]
				(0) edge node[below] {} (1)
				(0) edge node[left] {} (3)
				
				(1) edge node[below] {} (2)
				(1) edge node[below left] {} (4)
				
				(2) edge node[left] {} (5)

				(3) edge node[above] {} (4)
				(3) edge node[left] {} (6)
				
				(4) edge node[above] {} (5)
				(4) edge node[above left] {} (7)

				(5) edge node[left] {} (8)
				
				(6) edge node[above] {} (7)
				
				(7) edge node[above] {} (8);
    			\begin{axis}[
    			hide axis,
    			scale only axis,
    			width=20pt,
    			colormap={cyanred}{
    				rgb255(0)=(127.5,214.5,247)
    				rgb255(1)=(255,255,255)
    				rgb255(2)=(255,127.5,127.5) 
    			},
    			colorbar horizontal,
    			point meta min=0,
    			point meta max=1,
    			colorbar style={
    				at={(\scale*3.4,\scale*3.5)},
    				width=\scale*2.1cm,
    				height=0.1cm,
    				rotate=90,
    				xtick={0,1},
    			},ticklabel style={right,font=\tiny},]
    			\addplot [draw=none] coordinates {(1,8) (1,1)};
    			\vspace{0.12cm}
    			\end{axis}

    			\node[cut node] (c0) at (0+0.5*\scale,\scale*0.25+\scale*2) {};
    			\node[cut node] (c1) at (0+0.5*\scale,\scale*0.5+\scale*1) {};
				\node[cut node] (c2) at (1*\scale+0.5*\scale,\scale*0.5+\scale*1) {};

				\node[cut node] (c3) at (1*\scale+0.5*\scale,-\scale*0.25+\scale*0) {};
				
				\draw [dashed,line width=2pt,xshift=4cm]  plot[smooth] coordinates { (c0) (c1) (c2) (c3) };
                
				\end{tikzpicture}
				\caption{$P_{\infty}(\text{\tiny node} \sim s_2)$}
				\label{sfig5:Count_MSF}
			\end{subfigure}
			
			\begin{subfigure}[t]{.17\textwidth}
				\centering
				\begin{tikzpicture}[]
				\def\scale{0.8 }
				
				\node[main node,opacity=.5,black,fill=cyan,text opacity=1] (0) at (0,0) {\tiny$s_1$};
				\node[main node] (1) at (\scale*1,0) {};
				\node[main node] (2) at (\scale*2,0) {};
				\node[main node] (3) at (0,\scale*1) {};
				\node[main node] (4) at (\scale*1,\scale*1) {};
				\node[main node] (5) at (\scale*2,\scale*1) {};
				\node[main node] (6) at (0,\scale*2) {\tiny$q$};
				\node[main node] (7) at (\scale*1,\scale*2) {};
				\node[main node,opacity=.5,black,fill=red,text opacity=1] (8) at (\scale*2,\scale*2) {\tiny$s_2$};

				\path[dashed,draw]
				
				(0) edge node[left] {} (3)
				
				(3) edge node[above] {} (4)
				
				(4) edge node[above] {} (5)
				
				(4) edge node[above left] {} (7)
				
				(6) edge node[above] {} (7);
				
				\path[->,draw,line width=1pt]
				
				(2) edge node[left] {} (5)
				
			    (5) edge node[left] {} (8)
				
				(7) edge node[above] {} (8);
				
				\path[<-,draw,line width=1pt]
				
				(0) edge node[below] {} (1)
				
				(1) edge node[below left] {} (4);

				\path[<->,draw,line width=1pt]
				
				(1) edge node[below] {} (2);
				
				\path[<->,draw,line width=2pt]
				
				(3) edge node[left] {} (6);
				
				\end{tikzpicture}
				\caption{RW reachability}
				\label{sfig6:Count_MSF}
			\end{subfigure}%
			}
\caption{Forest-interpretation of Power Watershed. (\ref{sfig1:Count_MSF}) Graph with edge-costs and its mSFs in ((\ref{sfig2:Count_MSF})-(\ref{sfig4:Count_MSF})). (\ref{sfig5:Count_MSF}) Power Watershed probabilities for assigning a node to $s_2$. The Power Watershed computes the ratio between the mSFs connecting a node to $s_2$ and all possible mSFs. The dashed lines indicate the segmentation's cut. (\ref{sfig6:Count_MSF}) indicates the allowed Random Walker transitions  when $\mu\to\infty$ with headed arrows. The Random Walker interpretation of the Power Watershed breaks down in the limit case since a Random Walker starting at node $q$ does not reach any seed, but oscillates along the bold arrow.}
\label{fig:Count_MSF}
	
\end{figure*}

\thref{th:Pow_Wat=Max_Span_tree_ratio},which we prove in Appendix \ref{App:Proof_PWS_count_mSF}, interprets the Power Watershed potentials as a ratio of 2-forests similar to the Probabilistic Watershed. But instead of all 2-forests the Power Watershed only considers minimum cost 2-forests (equivalently maximum weight 2-forests) as they are the only ones that matter after taking the limit $\mu \to \infty$ (or $\alpha\to\infty$). In other words, the Power Watershed counts by how many seed separating mSFs a node is connected to a seed (see \figurename{} \ref{fig:Count_MSF}). Note, that there can be more than one mSF when the edge-costs are not unique. In \figurename{} \ref{fig:edge_presence} we show the probability of an edge being part of a mSF (see Appendix \ref{App:Edge_prob} for a more exhaustive explanation). In addition, it is worth recalling that the cut given by the Power Watershed segmentation is a mSF-cut (Property 2 of \cite{Couprie2011}).

The Random Walker interpretation can break down in the limit case of the Power Watershed. After taking the power of the edge-weights to infinity, at any node a Random Walker would move along an incident edge with maximum weight / minimum cost. So, in the limit case a Random Walker could get stuck at the edges, $e = \{u,v\}$, which minimize the cost among all the edges incident to $u$ or $v$. In this case the Random Walker will not necessarily reach any seed (see \figurename{} \ref{sfig6:Count_MSF}). In contrast, the forest-counting interpretation carries over nicely to the limit case.

The Probabilistic Watershed with a Gibbs distribution over 2-forests of minimal (maximal) entropy, $\mu = \infty$ ($\mu = 0$), corresponds to the Power Watershed (only considers the graph's topology). The effect of $\mu$ is illustrated on a a toy graph in \figurename{} \ref{fig:mu_behaviour} of the appendix. One could perform grid search to identify interesting intermediate values of $\mu$. Alternatively, $\mu$ can be learned, alongside the edge-costs, by back-propagation \cite{Cerrone_RW} or by a first-order approximation thereof \cite{vernaza2017learning}.

\section{Discussion}
\label{Discussion}
In this work, we provided new understanding of well-known seeded segmentation algorithms.

We have presented a tractable way of computing the expected label assignment of each node by a Gibbs distribution over all the seed separating spanning forests of a graph (\thref{Def:Prob}). Using the MTT we showed that this is computationally and by result equivalent to the Random Walker \cite{Grady2006}. Our approach has been developed without using potential theory (in contrast to \cite{Biggs1997}).

These facts have provided us with a novel understanding of the Random Walker (Probabilistic Watershed) probabilities: They are proportional to the gap produced by the triangle inequality of the effective resistance between the seeds and the query node.

Finally, we have proposed a new interpretation of the Power Watershed potentials for $\beta=2$ and $\alpha\to\infty$: They are given as the probabilities of the Probabilistic Watershed when the latter is restricted to mSFs instead of all spanning forests.

 A mSF can also be seen as a union of minimax paths between the vertices \cite{MAGGS1988}. Recently, \cite{najman2019} showed that the Power Watershed assigns a query node $q$ to the seed to which the minimax path from $q$ has the lowest maximum edge cost. In future work, we hope to extend this path-related point of view to an intuitive understanding of the Power Watershed.
 
 We are currently working on an extension of the Probabilistic Watershed framework to directed graphs, by means of the generalization of the MTT to directed graphs \cite{Tutte1984}. Here, one samples directed spanning forests with the seeds as sinks to segment the unlabelled nodes. This might lead to a new practical algorithm for semi-supervised learning on directed graphs such as social / citation or Web networks and could be related to directed random walks.

\section*{Acknowledgements}

The authors would like to thank Prof. Marco Saerens for his profound and constructive comments as well as the anonymous reviewers for their helpful remarks.  We would like to express our gratitude to Lorenzo Cerrone, who also shared the edge weights of [11], and Laurent Najman for the useful discussions about the Random Walker and Power Watershed algorithms, respectively.  We also acknowledge partial financial support of the DFG under grant No. DFG HA-4364 8-1.

\bibliographystyle{abbrv}
\bibliography{Bibliography_Probabilistic_Watershed}

 \appendix
\section{Calculus of $w\left(\Forest[u][v]\right)$}
\label{app:calculus_F}
In this appendix we will focus on the calculation of $w(\Forest[u][v])$ for any nodes $u,v\in V$. We will prove \thref{lem:T_e_weighted} (here denoted \thref{lem_app:T_e_weighted}). In order to demonstrate it we will use the matrix tree theorem (MTT) and some previous results. 

To deduce some results we will use multigraphs. The Laplacian of a multigraph is slightly different from the Laplacian of a simple graph. We introduce the following definition.
\begin{definition}[\textbf{Laplacian of a multigraph}]
	Let $G$ be a multigraph.  The Laplacian $L_G$ has the following formula
	\[\big(L_G\big)_{uv}:=\begin{cases}
	\sum_{\bar{e}\in E^{(u,v)}}-w_G\big(\bar{e}\big) &\text{ if } u\neq v\\
	\sum_{\substack{k\in V \\ k\neq u}} \sum_{\bar{e}\in E^{(u,k)}}w_G\big(\bar{e}\big) &\text{ if } u=v
	\end{cases},\]
	where $E^{(u,v)}\subset E_G$ is the subset of edges incident to $u$ and $v$. If there are no edges incident to $u$ and $v$ the sum is considered equal to $0$.
\end{definition}

Since the MTT is crucial in our theory we recall it.

\begin{theorem}[\textbf{Matrix tree theorem (MTT)}]
	\thlabel{Th_app:Matrix_tree}
	For any weighted multigraph $G$ the sum of the weights of the spanning trees of $G$, $w(\Setspt):=\sum_{t\in \Setspt }\prod_{e\in E_t}w(e)$, is equal to
	\[w(\Setspt)=\det(L^{[v]})=\frac{1}{|V|}\lambda_2\cdots\lambda_{|V|}=\frac{1}{|V|}\det(L+\frac{1}{|V|}\mathbbm{1}\mathbbm{1}^\top),\]
	where  $L^{[v]}$ is the matrix obtained from the Laplacian, $L$,  after removing the row and column corresponding to an arbitrary but fixed node $v$, $\{\lambda_i\}_{i\geq 1}$ are the eigenvalues of L with $\lambda_1=0$ and $\mathbbm{1}$ the corresponding eigenvector, the column vector of $1$'s.
\end{theorem}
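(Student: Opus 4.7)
The plan is to prove the Matrix Tree Theorem via the Cauchy--Binet formula together with a structural lemma identifying which edge subsets contribute nonzero terms. I carry this out for multigraphs; the simple-graph statement is an immediate specialization.

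First I fix an arbitrary orientation of every edge (including parallel ones) and form the signed incidence matrix $B \in \mathbb{R}^{|V| \times |E|}$ with $B_{ve} = +1$ if $v$ is the head of $e$, $-1$ if $v$ is the tail of $e$, and $0$ otherwise. Letting $W = \mathrm{diag}\bigl(w(e)\bigr)_{e \in E}$, a direct computation gives $L = B W B^\top$, and hence $L^{[v]} = B^{[v]} W (B^{[v]})^\top$ after deleting the row of $v$. Applying the Cauchy--Binet formula yields
\[
\det(L^{[v]}) \;=\; \sum_{\substack{S \subseteq E \\ |S| = |V|-1}} \bigl(\det B^{[v]}_S\bigr)^2 \prod_{e \in S} w(e),
\]
where $B^{[v]}_S$ denotes the square submatrix of $B^{[v]}$ whose columns are indexed by $S$.

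The key structural step is to show that $\det B^{[v]}_S$ equals $\pm 1$ when $S$ is a spanning tree and $0$ otherwise. If $(V,S)$ contains a cycle $e_1, \ldots, e_k$, the corresponding columns of $B$ satisfy $\sum_{i=1}^k \epsilon_i B_{\cdot, e_i} = 0$ with $\epsilon_i = \pm 1$ chosen from the cycle's orientation, and this dependence survives the row-deletion, so $\det B^{[v]}_S = 0$. Otherwise $(V,S)$ is an acyclic spanning subgraph with $|V|-1$ edges, i.e.\ a spanning tree; picking any leaf $u \neq v$ (possible since every tree on at least two vertices has at least two leaves), the row of $u$ in $B^{[v]}_S$ has a single $\pm 1$ entry, and expansion along that row followed by induction on $|V|$ gives $|\det B^{[v]}_S| = 1$. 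The Cauchy--Binet sum therefore collapses to
\[
\det(L^{[v]}) \;=\; \sum_{t \in \Setspt} \prod_{e \in E_t} w(e) \;=\; w(\Setspt),
\]
which is the cofactor form. I expect the main obstacle to be carefully handling parallel edges in the multigraph case, since two parallel edges already form a cycle and one must verify that the signed-incidence construction produces the expected column dependence in this degenerate setting.

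It remains to identify $\det(L^{[v]})$ with $\tfrac{1}{|V|}\det\bigl(L + \tfrac{1}{|V|}\mathbbm{1}\mathbbm{1}^\top\bigr)$ and with $\tfrac{1}{|V|}\lambda_2\cdots\lambda_{|V|}$. Since $G$ is connected, $L$ is positive semidefinite with one-dimensional kernel $\mathrm{span}(\mathbbm{1})$ and eigenvalues $0 = \lambda_1 < \lambda_2 \le \cdots \le \lambda_{|V|}$. The matrix $\tfrac{1}{|V|}\mathbbm{1}\mathbbm{1}^\top$ is the orthogonal projector onto this kernel, so adding it to $L$ replaces the eigenvalue $0$ by $1$ while preserving the other eigenvalues, giving $\det\bigl(L + \tfrac{1}{|V|}\mathbbm{1}\mathbbm{1}^\top\bigr) = \lambda_2\cdots\lambda_{|V|}$. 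Finally, $L\mathbbm{1}=0$ combined with $\mathrm{rank}(L) = |V|-1$ forces $\mathrm{adj}(L)$ to be a constant multiple of $\mathbbm{1}\mathbbm{1}^\top$, so all diagonal cofactors $\det(L^{[v]})$ agree; summing them and matching the coefficient of $x$ in $\det(xI - L) = x\prod_{i\ge 2}(x - \lambda_i)$ yields $|V|\det(L^{[v]}) = \lambda_2\cdots\lambda_{|V|}$, which closes the chain of equalities.
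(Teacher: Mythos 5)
Your proof is correct, and it is substantially more self-contained than the paper's. The paper proves only the third equality, $\lambda_2\cdots\lambda_{|V|}=\det\bigl(L+\tfrac{1}{|V|}\mathbbm{1}\mathbbm{1}^\top\bigr)$, via the same eigenvalue-shifting observation that you phrase in terms of the orthogonal projector $\tfrac{1}{|V|}\mathbbm{1}\mathbbm{1}^\top$ onto $\ker L$; the first equality (the cofactor formula $w(\Setspt)=\det(L^{[v]})$) is delegated to Tutte's book and the second to a lecture-notes reference. You instead prove the cofactor formula from scratch by the classical Cauchy--Binet route through the signed incidence matrix, correctly flagging and handling the multigraph subtlety (two parallel edges form a $2$-cycle, so their columns are equal up to sign and hence dependent), and you supply the second equality via the rank-one structure of $\operatorname{adj}(L)$ together with the coefficient of $x$ in $\det(xI-L)=x\prod_{i\ge 2}(x-\lambda_i)$. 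Both supplied arguments are standard and sound; note that connectivity of $G$ (a standing assumption of the paper, though not restated in the theorem) is what guarantees $\ker L=\operatorname{span}(\mathbbm{1})$, on which both your adjugate step and the projector step rely. What the paper's approach buys is brevity by citation; what yours buys is a complete, reference-free proof of all three equalities, at the cost of the routine but somewhat lengthy leaf-induction and Cauchy--Binet machinery.
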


\begin{proof}
We will only prove the third equality. The first equality can be found in \cite{Tutte1984}. The second equality can be found in chapter 1 of \cite{MTT_eigenvalues} for non-weighted graphs, but the reasoning of the proof is equivalent for weighted graphs. The proof of the third equality is Theorem 1.6 in \cite{Shayan_spanning_tree}, but since it has some typos we preferred to prove it by means of the second equality.

In order to demonstrate the third equality we will show that 
\begin{equation}
\label{eq:key_equality_MTT_app_proof}
\lambda_2\cdots\lambda_{|V|}=\det(L+\frac{1}{|V|}\mathbbm{1}\mathbbm{1}^\top).
\end{equation}

 Let $\tilde{\lambda}_i$ for  $i=1,\dots,|V|$ be the eigenvalues of $L+\frac{1}{|V|}\mathbbm{1}\mathbbm{1}^\top$. Since the determinant of a matrix is the product of its eigenvalues we obtain

\[\det(L+\frac{1}{|V|}\mathbbm{1}\mathbbm{1}^\top)=\tilde{\lambda}_1\cdots \tilde{\lambda}_{|V|}.\]

We will show that one of the $\tilde{\lambda}_i$'s, say $\tilde{\lambda}_1$, is one and that $\{\tilde{\lambda}_2 ,\dots, \tilde{\lambda}_{|V|}\} = \{\lambda_2,\dots,\lambda_{|V|}\}$ which establishes equation  \eqref{eq:key_equality_MTT_app_proof}.

The first eigenvalue of the Laplacian $L$ is $\lambda_1=0$ whose eigenvector is $\mathbbm{1}$, since the elements of every row of $L$ sum to 1. We prove now that $\mathbbm{1}$ is an eigenvector of  $L+\frac{1}{|V|}\mathbbm{1}\mathbbm{1}^\top$ with eigenvalue equal to $1$.

\[(L+\frac{1}{|V|}\mathbbm{1}\mathbbm{1}^\top)\mathbbm{1}=\underbrace{L\mathbbm{1}}_{=0}+\frac{1}{|V|}\mathbbm{1}\underbrace{\mathbbm{1}^\top\mathbbm{1}}_{=|V|}=\mathbbm{1}\]

Therefore, we get $\tilde{\lambda}_1=1$. Since $L+\frac{1}{|V|}\mathbbm{1}\mathbbm{1}^\top$ is symmetric, we can find an orthogonal basis of eigenvectors of $L+\frac{1}{|V|}\mathbbm{1}\mathbbm{1}^\top$ containing $\mathbbm{1}$. Let $x_i$ be an element of that basis associated with $\lambda_i$ for $i\geq 2$.  By the orthogonality of $\mathbbm{1}$ and $x_i$, we get
\[(L+\frac{1}{|V|}\mathbbm{1}\mathbbm{1}^\top)x_i=Lx_i+\frac{1}{|V|}\mathbbm{1}\underbrace{\mathbbm{1}^\top x_i}_{=0}=\lambda_i x_i\]

Therefore $\lambda_i$ for $i \geq 2$ is also an eigenvalue of $L+\frac{1}{|V|}\mathbbm{1}\mathbbm{1}^\top$ and the theorem is proven.
\end{proof}

\begin{lemma}[Determinant Lemma]
	
	\thlabel{lem:Determinant_lemma}
	Given an invertible matrix $A\in \mathbb{R}^{m\times m}$ and $u,v\in \mathbb{R}^m$ then:
	\[\det(A+uv^\top)=\det(A)(1+v^\top A^{-1} u)\]
\end{lemma}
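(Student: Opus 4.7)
The plan is to reduce the rank-one determinant update to a $2 \times 2$ block-matrix identity, evaluated in two different ways via Schur complements. Consider the augmented $(m+1)\times(m+1)$ matrix
\[
M \;=\; \begin{pmatrix} A & u \\ -v^{\top} & 1 \end{pmatrix}.
\]

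The first step is to compute $\det(M)$ using the invertibility of the top-left block $A$. Right-multiplying by the unit-determinant matrix $\left(\begin{smallmatrix} I & -A^{-1}u \\ 0 & 1\end{smallmatrix}\right)$ kills the top-right block and produces the block lower-triangular matrix $\left(\begin{smallmatrix} A & 0 \\ -v^{\top} & 1 + v^{\top} A^{-1} u\end{smallmatrix}\right)$, whose determinant is the product of the diagonal blocks, namely $\det(A)\,(1 + v^{\top} A^{-1} u)$. The second step is to compute $\det(M)$ by instead pivoting on the bottom-right scalar block: its Schur complement in $M$ is $A - u \cdot 1^{-1} \cdot (-v^{\top}) = A + u v^{\top}$, yielding $\det(M) = 1 \cdot \det(A + u v^{\top})$. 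Equating the two expressions for $\det(M)$ delivers the lemma.

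There is essentially no obstacle: once the augmented matrix is written down, both Schur complements fall out of a single row/column elimination, and invertibility of $A$ is the only hypothesis used. An alternative route, should one prefer to avoid block matrices, is to factor $A + u v^{\top} = A\,(I + A^{-1} u v^{\top})$ and argue directly that the rank-one matrix $A^{-1} u v^{\top}$ has characteristic polynomial $\lambda^{m-1}\bigl(\lambda - v^{\top} A^{-1} u\bigr)$, since a rank-at-most-one matrix has $m-1$ zero eigenvalues and its nonzero eigenvalue (counted with algebraic multiplicity) equals its trace $v^{\top} A^{-1} u$. Consequently $\det(I + A^{-1} u v^{\top}) = 1 + v^{\top} A^{-1} u$, and multiplying by $\det(A)$ recovers the claim. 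The only subtlety in this alternative is the degenerate case $v^{\top} A^{-1} u = 0$, where $A^{-1} u v^{\top}$ may fail to be diagonalizable; but the characteristic-polynomial argument is insensitive to diagonalizability, so the conclusion holds uniformly. The Schur-complement proof is the more compact of the two and is the one I would present.
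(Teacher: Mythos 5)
Your proof is correct, and it is more than the paper provides: the paper does not prove this lemma at all, but simply cites a matrix-algebra reference (Harville) for it. Your Schur-complement argument is the standard self-contained proof of the matrix determinant lemma and both evaluations check out: right-multiplying $M=\left(\begin{smallmatrix} A & u \\ -v^{\top} & 1\end{smallmatrix}\right)$ by the unit-determinant block matrix indeed yields the block lower-triangular matrix with diagonal blocks $A$ and $1+v^{\top}A^{-1}u$, and the factorization $M=\left(\begin{smallmatrix} I & u \\ 0 & 1\end{smallmatrix}\right)\left(\begin{smallmatrix} A+uv^{\top} & 0 \\ -v^{\top} & 1\end{smallmatrix}\right)$ confirms the second evaluation $\det(M)=\det(A+uv^{\top})$. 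Your alternative eigenvalue route is also sound, and you are right that the characteristic-polynomial formulation sidesteps any worry about diagonalizability of the rank-one matrix in the degenerate case $v^{\top}A^{-1}u=0$. What your approach buys is self-containedness, in the spirit of the paper's stated goal of being self-contained except for the matrix tree theorem; what the paper's citation buys is brevity for a result it treats as standard background. Either version is acceptable; if a proof is to be included, the Schur-complement one you propose is the right choice.
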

\begin{proof} See \cite{Harville2008_Book}.
\end{proof}
\begin{lemma}
	\thlabel{lem:L^+}
	Let $L^+$ be the pseudo-inverse of the Laplacian, then
	
	\[L^+=\left(L+\frac{\mathbbm{1}\mathbbm{1}^\top}{|V|}\right)^{-1}-\frac{\mathbbm{1}\mathbbm{1}^\top}{|V|},\]
	where $\mathbbm{1}$ is the column vector of $1$s.
\end{lemma}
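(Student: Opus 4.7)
The plan is to argue via the spectral decomposition of the Laplacian. Since $L$ is symmetric positive semi-definite, it admits an orthonormal eigenbasis $v_1,\dots,v_{|V|}$ with eigenvalues $0=\lambda_1<\lambda_2\leq\dots\leq\lambda_{|V|}$; the strict inequality $\lambda_2>0$ holds because $G$ is assumed connected, so $\ker L=\operatorname{span}(\mathbbm{1})$ and we can take $v_1=\mathbbm{1}/\sqrt{|V|}$. Consequently the spectral decomposition reads $L=\sum_{i=2}^{|V|}\lambda_i v_i v_i^{\top}$, and by the defining formula for the Moore--Penrose pseudo-inverse we have $L^+=\sum_{i=2}^{|V|}\lambda_i^{-1} v_i v_i^{\top}$.

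Next I would rewrite the rank-one correction as $\tfrac{1}{|V|}\mathbbm{1}\mathbbm{1}^{\top}=v_1 v_1^{\top}$. Adding this to $L$ replaces the only zero eigenvalue by $1$ while leaving the other eigenvalues untouched, so
\[
L+\tfrac{1}{|V|}\mathbbm{1}\mathbbm{1}^{\top}=v_1 v_1^{\top}+\sum_{i=2}^{|V|}\lambda_i\, v_i v_i^{\top}.
\]
This matrix is now invertible, and inverting eigenvalue-by-eigenvalue yields
\[
\bigl(L+\tfrac{1}{|V|}\mathbbm{1}\mathbbm{1}^{\top}\bigr)^{-1}=v_1 v_1^{\top}+\sum_{i=2}^{|V|}\lambda_i^{-1}\, v_i v_i^{\top}=\tfrac{1}{|V|}\mathbbm{1}\mathbbm{1}^{\top}+L^{+}.
\]
Subtracting $\tfrac{1}{|V|}\mathbbm{1}\mathbbm{1}^{\top}$ from both sides gives the claimed identity.

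There is not really a hard step here; the only point that deserves care is verifying that the eigenvectors $v_i$ for $i\geq 2$ are unaffected by the correction (which is immediate from their orthogonality to $v_1=\mathbbm{1}/\sqrt{|V|}$) and that $\mathbbm{1}$ spans the kernel of $L$ (which uses connectedness of $G$, a global assumption of the paper). As an alternative, one could instead verify directly that the matrix $X:=(L+\tfrac{1}{|V|}\mathbbm{1}\mathbbm{1}^{\top})^{-1}-\tfrac{1}{|V|}\mathbbm{1}\mathbbm{1}^{\top}$ satisfies the four Moore--Penrose conditions $LXL=L$, $XLX=X$, $(LX)^{\top}=LX$, $(XL)^{\top}=XL$ using $L\mathbbm{1}=0$ and $\mathbbm{1}^{\top}L=0$; but the spectral route is shorter and more transparent.
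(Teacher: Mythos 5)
Your proposal is correct and follows essentially the same route as the paper: the authors likewise observe that $L+\tfrac{1}{|V|}\mathbbm{1}\mathbbm{1}^{\top}$ shares the eigenvectors of $L$, with the zero eigenvalue replaced by $1$, so that inverting and then subtracting $\tfrac{1}{|V|}\mathbbm{1}\mathbbm{1}^{\top}$ recovers the spectral decomposition of $L^{+}$. Your write-up simply makes this argument more explicit (including the role of connectedness in ensuring $\ker L=\operatorname{span}(\mathbbm{1})$), where the paper defers the details to references.
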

\begin{proof}

	On page 48 of \cite{Fouss2016} the same argument as in the proof of \thref{Th_app:Matrix_tree} is applied. $\left(L+\frac{\mathbbm{1}\mathbbm{1}^\top}{|V|}\right)^{-1}$ and $L^+$ have the same eigenvectors and eigenvalues except for $\tilde{\lambda}_1=1$ whose corresponding eigenvalue of $L^+$  is $\lambda_1=0$. By subtracting $\frac{\mathbbm{1}\mathbbm{1}^\top}{|V|}$ from $\left(L+\frac{\mathbbm{1}\mathbbm{1}^\top}{|V|}\right)^{-1}$ the eigenvalue $\tilde{\lambda}_1$ is modified and becomes $0$. Hence both matrices are equal since they have the same spectral decomposition . See chapter 10 in \cite{Rao_Mitra} for more details.
\end{proof}

Before proving the next result we need to introduce some notation. Let $\bar{e} = \{u,v\}\subset V$ be an edge not necessarily included in $E$. Let $G_{\bar{e}}$ be the graph formed from $G$ after adding the edge $\bar{e}$, i.e. $G_{\bar{e}}=(V,E\sqcup \{\bar{e}\})$, where $\sqcup$ denotes the disjoint union\footnote{Given a family of sets $\{A_i : i\in I\}$ the disjoint union is defined as $\bigsqcup_{i\in I}A_i=\bigcup_{i\in I}\left\{(x,i)\ : \ x\in A_i \right\}$.}. The use of the disjoint union permits distinguishing between the added edge and the ones originally present in the graph. Therefore $G_{\bar{e}}$ may be a multigraph. The following lemma explains why we need to consider $G_{\bar{e}}$ as a multigraph.

\begin{lemma}
	\thlabel{lem_app_w(Te)+w(T)=w(Ge)}
Let $G=(V_G,E_G,w_G)$  be a weighted graph and consider $G_{\bar{e}}=(V_G,E_G\sqcup\{\bar{e}\},w)$ for some $\bar{e} = \{u,v\}\subset V$, where $w(e)=w_G(e)$ for all $e\in E_G$ and $w(\bar{e})$ an arbitrary positive number. We will omit the subscript of $w_G$ without risk of confusion. Then
\[w\left(\Forest[u][v]\right)=\frac{w(\Setspt[G_{\bar{e}}])-w(\Setspt[G])}{w(\bar{e})},\]
where $\Setspt[G_{\bar{e}}]$ and $\Setspt[G]$ denote the set of spanning trees of $G_{\bar{e}}$ and $G$ respectively.
\end{lemma}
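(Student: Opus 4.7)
The plan is to partition the spanning trees of $G_{\bar{e}}$ according to whether they use the added edge $\bar{e}$ or not, and then exhibit a weight-respecting bijection in each case.

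First I would split
\[
\Setspt[G_{\bar{e}}] \;=\; \mathcal{A} \,\sqcup\, \mathcal{B},
\]
where $\mathcal{A}=\{t \in \Setspt[G_{\bar{e}}] : \bar{e}\notin E_t\}$ and $\mathcal{B}=\{t \in \Setspt[G_{\bar{e}}] : \bar{e}\in E_t\}$. For the piece $\mathcal{A}$, a spanning tree of $G_{\bar{e}}$ that avoids $\bar{e}$ uses only edges of $E_G$, so it is a spanning tree of $G$; conversely every spanning tree of $G$ is trivially a spanning tree of $G_{\bar{e}}$ that does not use $\bar{e}$. Since $w$ agrees with $w_G$ on $E_G$, this identification is weight-preserving, giving $w(\mathcal{A})=w(\Setspt[G])$.

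For the piece $\mathcal{B}$, I would define the map $\Phi:\mathcal{B}\to \Forest[u][v]$ by $\Phi(t)=t\setminus\{\bar{e}\}$. Removing a single edge from a spanning tree disconnects it into exactly two components; since $\bar{e}=\{u,v\}$, these components separate $u$ from $v$, so $\Phi(t)\in\Forest[u][v]$. The inverse $\Psi:\Forest[u][v]\to \mathcal{B}$ is $\Psi(f)=f\cup\{\bar{e}\}$, which reconnects the two components through $\bar{e}$ and produces a spanning tree of $G_{\bar{e}}$ containing $\bar{e}$. Here the disjoint-union convention is essential: even when $\{u,v\}$ is already an edge of $G$, the added copy $\bar{e}$ is a distinct element, so $\bar{e}\notin E_f$ for any $f\in\Forest[u][v]$ and $\Psi$ is well defined. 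Thus $\Phi$ and $\Psi$ are mutual inverses. Moreover every $t\in\mathcal{B}$ satisfies $w(t)=w(\bar{e})\cdot w(\Phi(t))$, so summing over $\mathcal{B}$ gives $w(\mathcal{B})=w(\bar{e})\cdot w(\Forest[u][v])$.

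Combining the two pieces yields
\[
w(\Setspt[G_{\bar{e}}]) \;=\; w(\mathcal{A})+w(\mathcal{B}) \;=\; w(\Setspt[G]) + w(\bar{e})\,w(\Forest[u][v]),
\]
and solving for $w(\Forest[u][v])$ gives the claimed identity. The main subtlety is precisely the need to treat $G_{\bar{e}}$ as a multigraph (as emphasized before the lemma): if $\{u,v\}$ already belongs to $E_G$, one must not identify the two parallel edges, otherwise the bijection $\Phi$ would collapse and double-count; the disjoint-union formalism handles this cleanly, after which the counting argument is straightforward.
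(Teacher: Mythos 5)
Your proposal is correct and follows essentially the same route as the paper's proof: partitioning $\Setspt[G_{\bar{e}}]$ by membership of $\bar{e}$, identifying the $\bar{e}$-free trees with $\Setspt[G]$, and using the removal/addition bijection between trees containing $\bar{e}$ and $\Forest[u][v]$ together with the weight relation $w(t)=w(\bar{e})\,w(f)$. You also correctly highlight the same subtlety the paper stresses, namely that the disjoint union forces $\bar{e}$ to be treated as a distinct parallel edge so the bijection does not collapse.
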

\begin{proof}
The key idea of the proof is the fact that $\Setspt[G_{\bar{e}}]$ can be partitioned in the set of trees that do not contain the edge $\bar{e}$ (which is equal to $\Setspt[G]$ since $G$ did not contain $\bar{e}$) and the ones that do contain the edge $\bar{e}$. Let us denote the second set $\SetsptE[\bar{e}]:=\{t\in\Setspt[G_{\bar{e}}] \ : \ \bar{e}\in E_t\}$. Recall that $\bar{e}$ is considered as a special edge even if there was already an edge $e=\{u,v\}$ contained in the graph $G$. If $e$ and $\bar{e}$ where considered as the same edge, then the set of trees not containing $\bar{e}$ would not be equal to $\Setspt[G]$, since $e$ belongs to some trees in $\Setspt[G]$. Therefore we need to consider $G_{\bar{e}}$ as a multigraph.

Note that there is a bijection between $\SetsptE[\bar{e}]$ and $\mathcal{F}^{v}_{u}$ since any tree $t \in \SetsptE[\bar{e}]$ forms a 2-forest $f\in \mathcal{F}^{v}_{u}$ after removing $\bar{e}$ from $t$, and vice versa, any $f\in \mathcal{F}^{v}_{u}$ forms a tree in $t \in \SetsptE[\bar{e}]$ after adding $\bar{e}$ (see p.652 in \cite{Biggs1997}). Moreover, $w(\bar{e})\cdot w(f)=w(t)$ since the only edge present in $t$ but not in $f$ is $\bar{e}$. Therefore, we obtain
\begin{equation}
\begin{aligned}
	w(\Setspt[G_{\bar{e}}])&=\sum_{t\in \mathcal{T}_{G_{\bar{e}}} }w(t)=\sum_{t\in\SetsptE[\bar{e}]}w(t) + \sum_{t\in\Setspt[G]}w(t)=w(\bar{e})\sum_{t\in\SetsptE[\bar{e}]}\frac{w(t)}{w(\bar{e})} + w(\Setspt[G])\\
	&=w(\bar{e})\sum_{f\in \Forest[u][v] }w(f)+w(\Setspt[G])=w(\bar{e}) w(\Forest[u][v])+w(\Setspt[G]).
\end{aligned}
\end{equation}
Isolating $w(\Forest[u][v])$ we get the desired result.
\end{proof}

\begin{lemma}[\textbf{Lemma 2.2}]	
	\thlabel{lem_app:T_e_weighted}
	Let $G=(V,E,w)$  be an undirected edge-weighted connected graph and $u,v\in V$ arbitrary vertices.
	\begin{enumerate}[leftmargin=*]
		\item[(a)] Let $\ell_{ij}^+$ denote the entry $ij$ of the  pseudo-inverse of the Laplacian of $G$, $L^+_G$,  then
		\begin{equation}
		w(\Forest[u][v])=w(\Setspt)\left(\ell^+_{uu}+\ell^+_{vv}-2\ell^+_{uv}\right).
		\label{eq:T_e_weighted_pseudoinverse}
		\end{equation}
		
		\item[(b)] If $\ell_{ij}^{-1,[r]}$ denotes the entry $ij$ of the inverse of the matrix $L^{[r]}$ (the Laplacian $L$ after removing the row and the column corresponding to node $r$), then
		\begin{equation}
		w(\Forest[u][v])=\begin{cases}
		w(\Setspt)\left(\ell_{uu}^{-1,[r]}+\ell_{vv}^{-1,[r]}-2\ell_{uv}^{-1,[r]}\right) &\text{ if } r\neq u,v\\
		w(\Setspt)\ell_{uu}^{-1,[v]} &\text{ if } r=v \text{ and } u\neq v \\
		w(\Setspt)\ell_{vv}^{-1,[u]} &\text{ if } r=u \text{ and } u\neq v. \\
		\end{cases}
		\label{eq:app_T_e_weighted_remove_r}
		\end{equation}
	\end{enumerate}
\end{lemma}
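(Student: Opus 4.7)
The plan is to combine \thref{lem_app_w(Te)+w(T)=w(Ge)} (which rewrites $w(\Forest[u][v])$ as a normalized difference of spanning-tree weights on $G$ and the augmented multigraph $G_{\bar{e}}$) with the matrix tree theorem and the determinant lemma applied to a rank-one perturbation of the Laplacian. The key observation is that adding the edge $\bar{e}=\{u,v\}$ with weight $w(\bar{e})$ changes the Laplacian by a rank-one symmetric update:
\[
L_{G_{\bar{e}}} \;=\; L_G \;+\; w(\bar{e})\,(e_u-e_v)(e_u-e_v)^{\top},
\]
where $e_u,e_v$ are standard basis vectors. This identity is the starting point for both (a) and (b).

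For part (a), I would apply MTT in the form $w(\Setspt) = \tfrac{1}{|V|}\det\bigl(L + \tfrac{1}{|V|}\mathbbm{1}\mathbbm{1}^{\top}\bigr)$ to both $G$ and $G_{\bar{e}}$. Since $\mathbbm{1}^{\top}(e_u-e_v)=0$, the rank-one shift by $\tfrac{1}{|V|}\mathbbm{1}\mathbbm{1}^{\top}$ commutes harmlessly with the $(e_u-e_v)(e_u-e_v)^{\top}$ update. Setting $A := L_G + \tfrac{1}{|V|}\mathbbm{1}\mathbbm{1}^{\top}$ and invoking the determinant lemma gives
\[
\det\!\bigl(A + w(\bar{e})(e_u-e_v)(e_u-e_v)^{\top}\bigr)
= \det(A)\bigl(1 + w(\bar{e})(e_u-e_v)^{\top}A^{-1}(e_u-e_v)\bigr).
\]
Using \thref{lem:L^+}, namely $A^{-1} = L^{+} + \tfrac{1}{|V|}\mathbbm{1}\mathbbm{1}^{\top}$, and again that $\mathbbm{1}$ annihilates $e_u-e_v$, the quadratic form collapses to $\ell^{+}_{uu}+\ell^{+}_{vv}-2\ell^{+}_{uv}$. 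Dividing by $|V|$ and plugging into \thref{lem_app_w(Te)+w(T)=w(Ge)} yields
\[
w(\Forest[u][v]) \;=\; \frac{w(\Setspt[G_{\bar{e}}])-w(\Setspt)}{w(\bar{e})} \;=\; w(\Setspt)\bigl(\ell^{+}_{uu}+\ell^{+}_{vv}-2\ell^{+}_{uv}\bigr).
\]

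For part (b), I would use the other form $w(\Setspt)=\det(L^{[r]})$ from MTT and track how removing the $r$-th row and column interacts with the rank-one update. If $r\neq u,v$, both basis vectors $e_u,e_v$ survive the deletion, so $L^{[r]}_{G_{\bar{e}}} = L^{[r]}_G + w(\bar{e})(e'_u-e'_v)(e'_u-e'_v)^{\top}$ (primes denote restriction to the reduced index set), and a single application of the determinant lemma gives the first case. If $r=v$, the $e_v$ component is deleted and only $e'_u$ remains, so the update becomes $w(\bar{e})\,e'_u(e'_u)^{\top}$; the determinant lemma then produces the factor $\bigl(1+w(\bar{e})\ell^{-1,[v]}_{uu}\bigr)$ and \thref{lem_app_w(Te)+w(T)=w(Ge)} yields $w(\Forest[u][v])=w(\Setspt)\,\ell^{-1,[v]}_{uu}$. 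The case $r=u$ is symmetric.

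The main subtlety I expect is bookkeeping rather than a deep obstacle: I must treat $G_{\bar{e}}$ consistently as a \emph{multigraph} (so that the disjoint union in \thref{lem_app_w(Te)+w(T)=w(Ge)} is well-defined even when $\{u,v\}$ was already an edge of $G$), and when working with $L^{[r]}$ I must be careful that $u$ or $v$ might coincide with $r$, requiring the separate cases above. The orthogonality $\mathbbm{1}^{\top}(e_u-e_v)=0$ is what makes part (a) clean, and the fact that $A^{-1}$ differs from $L^{+}$ only by a $\mathbbm{1}\mathbbm{1}^{\top}$ term that gets annihilated by $e_u-e_v$ is the critical algebraic step; once that is noted, the rest is a direct application of the three ingredients already proven.
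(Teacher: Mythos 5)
Your proposal is correct and follows essentially the same route as the paper's proof: augment $G$ by the artificial edge $\bar{e}$, express the Laplacian change as a rank-one update $b_{\bar{e}}b_{\bar{e}}^{\top}$, and combine the matrix tree theorem, the determinant lemma, and the identity $\bigl(L+\tfrac{1}{|V|}\mathbbm{1}\mathbbm{1}^{\top}\bigr)^{-1}=L^{+}+\tfrac{1}{|V|}\mathbbm{1}\mathbbm{1}^{\top}$ (respectively the $\det(L^{[r]})$ form with the three cases for $r$). The only cosmetic difference is that you carry a general $w(\bar{e})$ through the computation while the paper simply sets $w(\bar{e})=1$; both are fine since the division by $w(\bar{e})$ cancels it.
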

\begin{proof}~
	We apply the matrix tree theorem (\thref{Th_app:Matrix_tree}) in combination with \thref{lem:Determinant_lemma}, \thref{lem:L^+} and \thref{lem_app_w(Te)+w(T)=w(Ge)}. We will use the following notation 
	\[\tilde{L}_G=L_G+\frac{\mathbbm{1}\mathbbm{1}^\top}{|V|}.\]
	
	In order to use \thref{lem_app_w(Te)+w(T)=w(Ge)} we will use the edge $\bar{e}=\{u,v\}$ with $w(\bar{e})=1$.	Moreover, let us denote $b_{\bar{e}}=\textbf{1}_u-\textbf{1}_v$ where $\textbf{1}_v$ indicates the column $v$ of the identity matrix. Since the difference between $G$ and $G_{\bar{e}}$ is just the edge $\bar{e}$ with $w(\bar{e})=1$, we can write the following relation between the Laplacians of $G$ and $G_{\bar{e}}$
	\[L_{G_{\bar{e}}}=L_G+b_{\bar{e}}b_{\bar{e}}^\top.\]
	Therefore, we may write
	\begin{equation*}
	\begin{aligned}
w(\Forest[u][v])&\underbrace{=}_{\text{\thref{lem_app_w(Te)+w(T)=w(Ge)}}}	w(\Setspt[G_{\bar{e}}])-w(\Setspt[G])\underbrace{=}_{\substack{\text{\textbf{MTT}}\\ \text{\thref{Th_app:Matrix_tree}}}}\frac{1}{|V|}\det\Big(\underbrace{\tilde{L}_G+b_eb_e^\top}_{\tilde{L}_{G_{\bar{e}}}}\Big)-\frac{1}{|V|}\det(\tilde{L}_G)\\
	&\underbrace{=}_{\text{\thref{lem:Determinant_lemma}}}\frac{1}{|V|}\det(\tilde{L}_G)\left(1+b_{\bar{e}}^\top\tilde{L}_G^{-1}b_{\bar{e}}\right)-\frac{1}{|V|}\det(\tilde{L}_G)=\frac{1}{|V|}\det(\tilde{L}_G)\left(b_{\bar{e}}^\top\tilde{L}_G^{-1}b_{\bar{e}}\right)\\
	&\underbrace{=}_{\text{\thref{lem:L^+}}}\frac{1}{|V|}\det(\tilde{L}_G)\left(b_{\bar{e}}^\top\left(L^++\frac{\mathbbm{1}\mathbbm{1}^\top}{|V|}\right)b_{\bar{e}}\right)=
	\frac{1}{|V|}\det(\tilde{L}_G)\left(\ell^+_{uu}+\ell^+_{vv}-2\ell^+_{uv}\right)\\
	&\underbrace{=}_{\substack{\text{\textbf{MTT}}\\ \text{\thref{Th_app:Matrix_tree}}}}w(\Setspt[G])\left(\ell^+_{uu}+\ell^+_{vv}-2\ell^+_{uv}\right)
	\end{aligned}
	\end{equation*}
	The second statement can be deduced from equation \eqref{eq:Effective_resistance}
	in the main paper. Still we show how it can be computed by following a similar argument as in the previous case. Let $r\neq u,v$.
	\begin{equation*}
	\begin{aligned}
	w(\Forest[u][v])&\underbrace{=}_{\text{\thref{lem_app_w(Te)+w(T)=w(Ge)}}}	w(\Setspt[G_{\bar{e}}])-w(\Setspt[G])\underbrace{=}_{\substack{\text{\textbf{MTT}}\\ \text{\thref{Th_app:Matrix_tree}}}}\det\bigg(\underbrace{L^{[r]}_G+b_e^{[r]}\left(b_e^{[r]}\right)^{\top}}_{L^{[r]}_{G_{\bar{e}}}}\bigg)-\det(L^{[r]}_G)\\
	&\underbrace{=}_{\text{\thref{lem:Determinant_lemma}}} \det(L^{[r]}_G)\left(1+\left(b_{\bar{e}}^{[r]}\right)^{\top} \left(L_G^{[r]}\right)^{-1}b_{\bar{e}}^{[r]}\right)-\det(L^{[r]}_G)\\
	&\quad\,\,=\det(L^{[r]}_G)\left(\ell_{uu}^{-1,[r]}+\ell_{vv}^{-1,[r]}-2\ell_{uv}^{-1,[r]}\right) =w(\Setspt[G])\left(\ell_{uu}^{-1,[r]}+\ell_{vv}^{-1,[r]}-2\ell_{uv}^{-1,[r]}\right).
	\end{aligned}
	\end{equation*}

	For $r=u$ the proof is the following:
	\begin{equation*}
	\begin{aligned}
	w(\Forest[u][v]) &\underbrace{=}_{\text{\thref{lem_app_w(Te)+w(T)=w(Ge)}}} w(\Setspt[G_{\bar{e}}])-w(\Setspt[G])\underbrace{=}_{\substack{\text{\textbf{MTT}}\\ \text{\thref{Th_app:Matrix_tree}}}}\det\bigg(\underbrace{L^{[u]}_G+\textbf{1}_v^{[u]}\left(\textbf{1}_v^{[u]}\right)^\top}_{L^{[u]}_{G_e}}\bigg)-\det(L^{[u]}_G)\\
	&\underbrace{=}_{\text{\thref{lem:Determinant_lemma}}} \det(L^{[u]}_G)\left(1+\left(\textbf{1}_v^{[u]}\right)^\top \left(L_G^{[u]}\right)^{-1}\textbf{1}_v^{[u]}\right)-\det(L^{[u]}_G)=\det(L^{[u]}_G)\ell_{vv}^{-1,[u]}\\
	&\quad\,\,=w(\Setspt[G])\ell_{vv}^{-1,[u]}.		
	\end{aligned}
	\end{equation*}
	
	The case $r=v$ is analogous.
\end{proof}

\section{Proof of \thref{theorem:Trees_equiv_Random_walk_2_seeds}}
\label{App:Proof_ProbWS=RW}
\begin{theorem}[\thref{theorem:Trees_equiv_Random_walk_2_seeds}]
	\thlabel{App_theorem:Trees_equiv_Random_walk_2_seeds} 
	The probability $x_{q}^{s_1}$ that a random walker as defined in \cite{Grady2006} starting at node $q$ reaches $s_1$ first before reaching $s_2$ is equal to the Probabilistic Watershed probability defined in Definition 3.1 of the main paper:
	\[x_{q}^{s_1}=P(q\sim s_1).\]
\end{theorem}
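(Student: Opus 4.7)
}

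My plan is to compute both sides independently as simple ratios of entries of the matrix $\left(L^{[s_2]}\right)^{-1}$, and observe that the expressions match. The Probabilistic Watershed side is handled by the closed-form formula of \thref{th:Prob} combined with \thref{lem:T_e_weighted}, while the Random Walker side is handled by Grady's linear system together with a block matrix inversion identity.

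First I would rewrite $P(q\sim s_1)$ using \thref{th:Prob} as
\[
P(q\sim s_1) = \frac{w(\Forest[s_2][q]) + w(\Forest[s_1][s_2]) - w(\Forest[s_1][q])}{2\,w(\Forest[s_1][s_2])}.
\]
Now apply \thref{lem:T_e_weighted}(b) with the smart choice $r=s_2$: this makes $w(\Forest[s_1][s_2])=w(\Setspt)\,\ell^{-1,[s_2]}_{s_1 s_1}$ and $w(\Forest[s_2][q])=w(\Setspt)\,\ell^{-1,[s_2]}_{qq}$ collapse, while $w(\Forest[s_1][q])=w(\Setspt)(\ell^{-1,[s_2]}_{s_1 s_1}+\ell^{-1,[s_2]}_{qq}-2\ell^{-1,[s_2]}_{s_1 q})$ contributes the mixed term. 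After cancellation the factor of $w(\Setspt)$ drops out and the formula reduces to the remarkably clean
\[
P(q\sim s_1) = \frac{\ell^{-1,[s_2]}_{s_1 q}}{\ell^{-1,[s_2]}_{s_1 s_1}}.
\]

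Next I would unpack the Random Walker definition from \cite{Grady2006}. Partition $V = U \sqcup \{s_1,s_2\}$, with $q\in U$, and write the full Laplacian in block form. The harmonic condition on interior nodes yields the Dirichlet system $L_U x^{s_1}_U = -B_1$, where $L_U$ is the submatrix of $L$ indexed by $U$ and $B_1\in\mathbb{R}^{|U|}$ collects the entries $L_{u,s_1}$ for $u\in U$ (the $s_2$-column does not appear because $x^{s_1}_{s_2}=0$). Hence $x^{s_1}_q = -\bigl(L_U^{-1} B_1\bigr)_q$.

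The key linear-algebra step is then to identify this quantity as the ratio above. Viewing $L^{[s_2]}$ as a $2\times 2$ block matrix indexed by $U$ and $\{s_1\}$, namely
\[
L^{[s_2]} = \begin{pmatrix} L_U & B_1 \\ B_1^\top & L_{s_1 s_1} \end{pmatrix},
\]
the standard block-inverse identity gives
\[
\bigl(L^{[s_2]}\bigr)^{-1} = \begin{pmatrix} \,\ast\, & -L_U^{-1} B_1\, S^{-1} \\ -S^{-1} B_1^\top L_U^{-1} & S^{-1} \end{pmatrix},
\]
where $S = L_{s_1 s_1} - B_1^\top L_U^{-1} B_1$ is the Schur complement. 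Reading off $\ell^{-1,[s_2]}_{s_1 s_1} = S^{-1}$ and $\ell^{-1,[s_2]}_{q s_1} = -(L_U^{-1}B_1)_q\, S^{-1}$, the ratio telescopes:
\[
\frac{\ell^{-1,[s_2]}_{s_1 q}}{\ell^{-1,[s_2]}_{s_1 s_1}} = -\bigl(L_U^{-1} B_1\bigr)_q = x^{s_1}_q,
\]
by symmetry of $\bigl(L^{[s_2]}\bigr)^{-1}$. Combining this with the reduction of $P(q\sim s_1)$ above finishes the proof.

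The main obstacle is really just bookkeeping: one must choose $r=s_2$ in \thref{lem:T_e_weighted}(b) so that the extra ``diagonal'' cases of the lemma kick in and cancel two of the three forest weights, and then identify the right block structure on $L^{[s_2]}$ so that the Schur complement $S^{-1}$ appears in the numerator and denominator and cancels. No additional graph-theoretic input beyond \thref{th:Prob} and \thref{lem:T_e_weighted} is needed, which is precisely what makes the argument elementary compared to the potential-theoretic route of \cite{Biggs1997}.
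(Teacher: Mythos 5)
Your proposal is correct and follows essentially the same route as the paper's proof in Appendix \ref{App:Proof_ProbWS=RW}: both reduce $P(q\sim s_1)$ to the ratio $\ell^{-1,[s_2]}_{qs_1}\big/\ell^{-1,[s_2]}_{s_1s_1}$ via \thref{th:Prob} and \thref{lem:T_e_weighted}(b) with $r=s_2$, and then identify that ratio with the solution of Grady's Dirichlet system $L_U y_U = -B_{s_1}$ through the block structure of $L^{[s_2]}$. The only difference is cosmetic: you invoke the explicit Schur-complement block-inverse formula, whereas the paper normalizes the column of $\left(L^{[s_2]}\right)^{-1}$ so that $y_{s_1}=1$ and reads the same system off the block rows.
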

\begin{proof}
	If we write the probability in terms of the inverse of $L^{[s_2]}$ (Lemma 2.2, equation 2 of the main paper) we find:
	\begin{equation}
	\label{eq:P(q-s2)_terms[s1]}
	\begin{aligned}
	&P(q \sim s_1)=\big(w(\Forest[s_2][q])+w(\Forest[s_1][s_2])-w(\Forest[s_1][q])\big)\big/\big(2w(\Forest[s_1][s_2])\big)\\
	&=\left(\ell_{qq}^{ -1,[s_2]}+\ell_{s_1s_1}^{ -1,[s_2]}-\left(\ell_{s_1s_1}^{ -1,[s_2]}+\ell_{qq}^{ -1,[s_2]}-2\ell_{qs_1}^{ -1,[s_2]}\right)\right)\big/\left(2\ell_{s_1s_1}^{ -1,[s_2]}\right)=\ell_{qs_1}^{ -1,[s_2]}\big/\ell_{s_1s_1}^{ -1,[s_2]}.
	\end{aligned}
	\end{equation}
	Therefore, to calculate the probabilities for $P(q \sim s_1)$ for every $q$ we only need to compute the column $s_1$ of $\left(L^{[s_2]}\right)^{-1}$. Solving the following linear system:
	\begin{equation}
	\label{eq:L-1_seed_prob}
	L^{[s_2]}y=\textbf{1}_{s_1}\big/\ell_{s_1s_1}^{ -1,[s_2]}\Longleftrightarrow y=\left(L^{[s_2]}\right)^{-1}\textbf{1}_{s_1}\big/\ell_{s_1s_1}^{ -1,[s_2]}=\left(L^{[s_2]}\right)^{-1}_{\cdot,s_1}\big/\ell_{s_1s_1}^{ -1,[s_2]},
	\end{equation}
	where $\textbf{1}_{u}$ denotes the column $u$ of the identity matrix, we have that $y$ is the vector formed by the elements in the right hand side of \eqref{eq:P(q-s2)_terms[s1]}. 
	Let us assume without loss of generality that the row corresponding to the seed $s_1$ is the first one, then we can express equation \eqref{eq:L-1_seed_prob} block-wise :
	\begin{equation}
	\left(\begin{array}{cc}
	L_{s_1s_1}&B_{s_1}^{\top}\\
	B_{s_1} &L_U
	\end{array}\right)\left(\begin{array}{c}
	y_{s_1}\\y_U
	\end{array}\right)=
	\left(\begin{array}{c}
	L_{s_1s_1}y_{s_1}+B_{s_1}^{\top}y_U\\
	B_{s_1}y_{s_1} +L_Uy_U
	\end{array}\right)=\left(\begin{array}{c}
	1/\ell_{s_1s_1}^{ -1,[s_2]}\\0
	\end{array}\right),
	\end{equation}
	where $L_{s_1s_1}$ is the entry $s_1s_1$ of the Laplacian  $L^{[s_2]}$, $B_{s_1}$ is the row $s_1$ of this Laplacian without considering the element in the diagonal and $L_U$ are the rows and columns of the unseeded vertices. Since $y_{s_1}=P(s_1 \sim s_1)=1$, we obtain the following linear system of equations
	\[L_Uy_U=-B_{s_1},\]
	which is the same linear system that the Random Walker solves (\cite{Grady2006} section III.B, equation (10)). Therefore $P(q\sim s_1)=y_q=x_q^{s_1}$ for all $q$.
\end{proof}

\section{Power Watershed}
\label{app_power_watershed}
In this section we recall some definitions of \cite{Couprie2011}. Let  $G=(V,E,w)$ be an undirected edge-weighted graph and $s_1,s_2\in V$ two seeds as it has been considered in the main paper. In \cite{Couprie2011} the following objective function is proposed:
\begin{equation}
\label{eq_app:alg_Couprie}
x^*=\arg\min_x \sum_{e=(v,u)\in E}\left(w(e)\right)^{\alpha}\left(|x_v-x_u|\right)^{\beta},\ \text{s.t.} \  x_{s_1}=1, \ x_{s_2}=0.
\end{equation}
This objective generalizes a set of segmentation algorithms depending on the choice of parameters $\alpha$ and $\beta$. For instance, $\alpha=1$ and $\beta=2$ give the Random Walker's objective function.

The Power Watershed algorithm solves \eqref{eq_app:alg_Couprie} when  $\alpha\to\infty$. The algorithm is similar to Kruskal’s algorithm \cite{Kruskal1956}: a maximum weight spanning forest rooted in the seeds is computed iteratively, but at each plateau (maximal connected subgraphs with constant edge-weight) the following optimization problem is solved 
\begin{equation}
\label{eq_app:plateau_min}
\min_x\sum_{(u,v)\in E} |x_u-x_v|^\beta.
\end{equation}
 In case that $\beta=2$ this is equivalent to apply the Random Walker on the plateau.\\

\section{Proof of \thref{th:Pow_Wat=Max_Span_tree_ratio}}
\label{App:Proof_PWS_count_mSF}
\begin{theorem}[\thref{th:Pow_Wat=Max_Span_tree_ratio}]
	\thlabel{App_th:Pow_Wat=Max_Span_tree_ratio}
	Given two seeds $s_1$ and $s_2$, let us denote the potential of node $q$ being assigned to seed $s_1$ by the Power Watershed with $\beta=2$ as $x_{q}^{\text{PW}}$. Let further $ w_{\max}$ be $\max_{f\in \Forest[s_1][s_2]} w(f)$. Then
	\[x_{q}^{\text{PW}}=\frac{\left|\{f\in \Forest[s_1][s_2][q] \ : \  w(f)=w_{\max}\}\right|}{\left|\{f\in \Forest[s_1][s_2] \ : \  w(f)=w_{\max}\}\right|}\eqqcolon P_{\infty}(q\sim s_1).\]
\end{theorem}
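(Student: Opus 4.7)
The proof plan rests on three ingredients already established in the paper: \thref{theorem:Trees_equiv_Random_walk_2_seeds}, which identifies the Random Walker potential with the Probabilistic Watershed probability $P(q\sim s_1)$; \thref{Def:Prob}, which expresses the latter as a ratio of Gibbs weights of 2-forests; and the observation from the main text that raising edge-weights to $\alpha$ is equivalent to rescaling $\mu\mapsto\mu\alpha$, so that $\alpha\to\infty$ in the Power Watershed corresponds to $\mu\to\infty$ in the Probabilistic Watershed with weights $w(e)=\exp(-\mu c(e))$.

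First I would combine \thref{theorem:Trees_equiv_Random_walk_2_seeds} with this reparameterization to write
\[x_q^{\mathrm{PW}} \;=\; \lim_{\mu\to\infty} P_\mu(q\sim s_1) \;=\; \lim_{\mu\to\infty} \frac{\sum_{f\in\Forest[s_1][s_2][q]}\exp(-\mu c(f))}{\sum_{f\in\Forest[s_1][s_2]}\exp(-\mu c(f))},\]
where the subscript on $P_\mu$ makes explicit the dependence of the edge-weights on $\mu$. Next, set $c_{\min}=\min_{f\in\Forest[s_1][s_2]} c(f)$. Since $\Forest[s_1][s_2][q]\subseteq\Forest[s_1][s_2]$, every forest appearing in either sum satisfies $c(f)\geq c_{\min}$. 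Multiplying numerator and denominator by $\exp(\mu c_{\min})$ yields
\[P_\mu(q\sim s_1) \;=\; \frac{\sum_{f\in\Forest[s_1][s_2][q]}\exp(-\mu(c(f)-c_{\min}))}{\sum_{f\in\Forest[s_1][s_2]}\exp(-\mu(c(f)-c_{\min}))}.\]
As $\mu\to\infty$ each exponential tends to $1$ when $c(f)=c_{\min}$ and to $0$ otherwise, so the limit is the ratio of the number of minimum-cost 2-forests in $\Forest[s_1][s_2][q]$ to that in $\Forest[s_1][s_2]$. Since $w(e)=\exp(-\mu c(e))$ is monotonically decreasing in $c(e)$, minimum-cost 2-forests coincide with maximum-weight 2-forests, i.e., $c(f)=c_{\min}$ iff $w(f)=w_{\max}$, giving exactly the ratio stated in the theorem.

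The main obstacle will be the first step: justifying $x_q^{\mathrm{PW}}=\lim_{\mu\to\infty} P_\mu(q\sim s_1)$. The Power Watershed is not defined directly as $\lim_{\alpha\to\infty}$ of the Random Walker's potentials, but through a Kruskal-like algorithm operating on plateaus (Appendix \ref{app_power_watershed}), and for finite $\alpha$ the linear system of the Random Walker can become badly conditioned. I would address this either by invoking the convergence result in \cite{Couprie2011} (Theorem 2 therein), which identifies the algorithm's output with exactly this limit, or by verifying directly that the Gibbs ratio above admits a well-defined limit that coincides with the algorithm's output on plateaus. Once this correspondence is in hand, the remaining computation is a standard Gibbs-measure limit of the kind familiar from Laplace's method and requires no further structural insight.
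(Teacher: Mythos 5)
Your proposal is correct and follows essentially the same route as the paper's proof: it invokes the result of \cite{Couprie2011} identifying the Power Watershed output with the $\alpha\to\infty$ limit of the Random Walker potentials, uses \thref{theorem:Trees_equiv_Random_walk_2_seeds} to replace those by Probabilistic Watershed probabilities, and then evaluates the Gibbs ratio in the limit by normalizing with the optimal forest weight (your factor $\exp(\mu c_{\min})$ is exactly the paper's division by $w_{\max}^{\alpha}$ under the reparameterization $\mu_\alpha=\mu\alpha$). The only cosmetic difference is that the paper works in the $\alpha$-parameterization and cites Theorem 3 (rather than Theorem 2) of \cite{Couprie2011} for the convergence step you correctly flag as the crux.
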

\begin{proof}
	It has already been proven in Theorem 3 of \cite{Couprie2011} that the potential computed by the algorithm of the Power Watershed is equal to the limit of the Random Walker probabilities when the weights are raised to $\alpha\to\infty$. Since the Probabilistic Watershed probabilities are the same as the Random Walker probabilities (see Section 4), we just need to show that the limit of the Probabilistic Watershed with the weights raised to $\alpha\to\infty$ is counting MSFs.
	\begin{equation}
	\label{eq:Power_Watershed_MSF}
	\begin{aligned}
	&P_{\alpha}(q\sim s_1)\coloneqq	\frac{\displaystyle\sum_{f\in \Forest[s_1][s_2][q]}\prod_{e\in f}w(e)^{\alpha}}{\displaystyle\sum_{f\in \Forest[s_1][s_2]}\prod_{e\in f}w(e)^{\alpha}}=	\frac{\displaystyle\sum_{f\in \Forest[s_1][s_2][q]}w(f)^{\alpha}}{\displaystyle\sum_{f\in \Forest[s_1][s_2]}w(f)^{\alpha}}=\frac{\displaystyle\sum_{f\in \Forest[s_1][s_2][q]}\left(\frac{w(f)}{w_{\max}}\right)^{\alpha}}{\displaystyle\sum_{f\in \Forest[s_1][s_2]}\left(\frac{w(f)}{w_{\max}}\right)^{\alpha}}\xrightarrow[(\star)]{\alpha\to \infty} 
	 P_{\infty}(q\sim s_1).
	\end{aligned}
	\end{equation}
	In ($\star$) we used the fact that ${\frac{w(f)}{w_{\max}}<1 \iff w(f)\neq w_{\max} }$. When $\alpha\to\infty$, only for the MSFs the fraction $\left(w(f)/w_{\max}\right)^{\alpha}$  does not tend to 0, but to 1. Thus, we are counting MSFs.
\end{proof}

\tikzset{
	dot/.style 2 args={fill, circle, inner sep=0pt, label={#1:\scriptsize #2}},	
	fulldot/.style 2 args={circle,draw,minimum size=0.3cm,inner sep=0pt, label={#1:\scriptsize #2}},
	seed node/.style={circle,draw,minimum size=0.3cm,inner sep=0pt]},
	purria node/.style={circle,draw,minimum size=0.15cm,inner sep=0pt]}
}

\newcommand{\qsu}{$8\cdot 10^{-4}$}

\newcommand{\qvdosu}{$10^{-3}$}
\newcommand{\qvdosdos}{$10^{-3}$}

\newcommand{\qvtresu}{$10^{-2}$}
\newcommand{\qvtresdos}{$10^{-2}$}
\newcommand{\qvtrestres}{$10^{-2}$}

\newcommand{\qvquatreu}{$10^{-1}$}
\newcommand{\qvquatredos}{$10^{-1}$}
\newcommand{\qvquatretres}{$10^{-1}$}
\newcommand{\qvquatrequatre}{$10^{-1}$}

\newcommand{\svdosu}{0}
\newcommand{\svdosdos}{0}

\newcommand{\svtresu}{0}
\newcommand{\svtresdos}{0}
\newcommand{\svtrestres}{0}

\newcommand{\svquatreu}{0}
\newcommand{\svquatredos}{0}
\newcommand{\svquatretres}{0}
\newcommand{\svquatrequatre}{0}
\def \scale{1}
\def \scaled{0.8}
\def \widthtext{0.32}

\begin{figure*}[h]
	\centering
\begin{subfigure}{\widthtext\textwidth}
			\centering
			\begin{tikzpicture}[]

			\node[seed node,opacity=.5,black,text opacity=1] (q) at (0,0) {\tiny$q$};
			
			\node[seed node,opacity=.5,black,fill=cyan,text opacity=1] (s1) at (\scale*-1.7,0) {\tiny$s_1$};
			
			\node[seed node,opacity=.5,black,fill=green,text opacity=1] (s2) at (\scale*-1.0,\scale*1.5) {\tiny$s_2$};
			\node[purria node] (v21) at (\scale*-0.25,\scale*1) {};
			\node[purria node] (v22) at (\scale*-1,\scale*0.5) {};

			\node[seed node,opacity=.5,black,fill=yellow,text opacity=1] (s3) at (\scale*2,\scale*1) {\tiny$s_3$};
			\node[purria node] (v31) at (\scale*1,\scale*1.5) {};
			\node[purria node] (v32) at (\scale*1,\scale*0.75) {};
			\node[purria node] (v33) at (\scale*1,\scale*0) {};

			\node[seed node,opacity=.5,black,fill=red,text opacity=1] (s4) at (\scaled*0,\scaled*-2) {\tiny$s_4$};
			\node[purria node] (v41) at (\scaled*-1,\scaled*-1) {};
			\node[purria node] (v42) at (\scaled*-0.33,\scaled*-1) {};
			\node[purria node] (v43) at (\scaled*0.33,\scaled*-1) {};
			\node[purria node] (v44) at (\scaled*1,\scaled*-1) {};

			\path[-,draw]
			(q) edge node[below left] {\tiny \qsu} (s1)

			(q) edge node[left] {\tiny \qvdosu} (v21)
			(q) edge node[ ] {\tiny} (v22)

			(q) edge [bend left] node [pos=0.7, above] {\tiny \qvtresu } (v31)
			(q) edge node[right] {\tiny } (v32)
			(q) edge node[above right] {\tiny } (v33)

			(q) edge node[left] {\tiny } (v41)
			(q) edge node[below left] {\tiny } (v42)
			(q) edge node[below right] {\tiny } (v43)
			(q) edge node[ right] {\tiny \qvquatrequatre} (v44)
			
			(s2) edge node[above] {\tiny } (v21)
			(s2) edge node[right] {\tiny \svdosdos} (v22)

			(s3) edge node[above] {\tiny \svtresu} (v31)
			(s3) edge node[] {\tiny } (v32)
			(s3) edge node[below] {\tiny } (v33)
			
			(s4) edge node[above left] {\tiny } (v41)
			(s4) edge node[below left] {\tiny } (v42)
			(s4) edge node[below left] {\tiny } (v43)
			(s4) edge node[below ] {\tiny \svquatrequatre} (v44);

			\end{tikzpicture}
			\caption{Graph with four seeds}
			\label{sfig1:mu_behaviour}
		\end{subfigure}
		\begin{subfigure}{\widthtext\textwidth}
			\centering
			\begin{tikzpicture}[]

			\node[seed node,opacity=.5,black,fill=red,text opacity=1] (q) at (0,0) {\tiny$q$};
			
			\node[seed node,opacity=.5,black,fill=cyan,text opacity=1] (s1) at (\scale*-1.7,0) {\tiny$s_1$};
			
			\node[seed node,opacity=.5,black,fill=green,text opacity=1] (s2) at (\scale*-1.0,\scale*1.5) {\tiny$s_2$};
			\node[purria node,opacity=.5,black,fill=green,text opacity=1] (v21) at (\scale*-0.25,\scale*1) {};
			\node[purria node,opacity=.5,black,fill=green,text opacity=1] (v22) at (\scale*-1,\scale*0.5) {};

			\node[seed node,opacity=.5,black,fill=yellow,text opacity=1] (s3) at (\scale*2,\scale*1) {\tiny$s_3$};
			\node[purria node,opacity=.5,black,fill=yellow,text opacity=1] (v31) at (\scale*1,\scale*1.5) {};
			\node[purria node,opacity=.5,black,fill=yellow,text opacity=1] (v32) at (\scale*1,\scale*0.75) {};
			\node[purria node,opacity=.5,black,fill=yellow,text opacity=1] (v33) at (\scale*1,\scale*0) {};

			\node[seed node,opacity=.5,black,fill=red,text opacity=1] (s4) at (\scaled*0,\scaled*-2) {\tiny$s_4$};
			\node[purria node,opacity=.5,black,fill=red,text opacity=1] (v41) at (\scaled*-1,\scaled*-1) {};
			\node[purria node,opacity=.5,black,fill=red,text opacity=1] (v42) at (\scaled*-0.33,\scaled*-1) {};
			\node[purria node,opacity=.5,black,fill=red,text opacity=1] (v43) at (\scaled*0.33,\scaled*-1) {};
			\node[purria node,opacity=.5,black,fill=red,text opacity=1] (v44) at (\scaled*1,\scaled*-1) {};
			
			\path[-,draw]
			(q) edge node[below left] {} (s1)

			(q) edge node[left] {} (v21)
			(q) edge node[left] {} (v22)
			
			(q) edge [bend left] node[above left] {} (v31)
			(q) edge node[above right] {} (v32)
			(q) edge node[above right] {} (v33)
			
			(q) edge node[left] {} (v41)
			(q) edge node[below left] {} (v42)
			(q) edge node[below right] {} (v43)
			(q) edge node[ right] {} (v44)

			(s2) edge node[below] {} (v21)
			(s2) edge node[below] {} (v22)

			(s3) edge node[below] {} (v31)
			(s3) edge node[below] {} (v32)
			(s3) edge node[below] {} (v33)

			(s4) edge node[below left] {} (v41)
			(s4) edge node[below left] {} (v42)
			(s4) edge node[below left] {} (v43)
			(s4) edge node[below left] {} (v44);

			\end{tikzpicture}
			\caption{$\mu=0$}
			\label{sfig2:mu_behaviour}
		\end{subfigure}
		\begin{subfigure}{\widthtext\textwidth}
			\centering
			\begin{tikzpicture}[]

			\node[seed node,opacity=.5,black,fill=yellow,text opacity=1] (q) at (0,0) {\tiny$q$};
			
			\node[seed node,opacity=.5,black,fill=cyan,text opacity=1] (s1) at (\scale*-1.7,0) {\tiny$s_1$};
			
			\node[seed node,opacity=.5,black,fill=green,text opacity=1] (s2) at (\scale*-1.0,\scale*1.5) {\tiny$s_2$};
			\node[purria node,opacity=.5,black,fill=green,text opacity=1] (v21) at (\scale*-0.25,\scale*1) {};
			\node[purria node,opacity=.5,black,fill=green,text opacity=1] (v22) at (\scale*-1,\scale*0.5) {};

			\node[seed node,opacity=.5,black,fill=yellow,text opacity=1] (s3) at (\scale*2,\scale*1) {\tiny$s_3$};
			\node[purria node,opacity=.5,black,fill=yellow,text opacity=1] (v31) at (\scale*1,\scale*1.5) {};
			\node[purria node,opacity=.5,black,fill=yellow,text opacity=1] (v32) at (\scale*1,\scale*0.75) {};
			\node[purria node,opacity=.5,black,fill=yellow,text opacity=1] (v33) at (\scale*1,\scale*0) {};

			\node[seed node,opacity=.5,black,fill=red,text opacity=1] (s4) at (\scaled*0,\scaled*-2) {\tiny$s_4$};
			\node[purria node,opacity=.5,black,fill=red,text opacity=1] (v41) at (\scaled*-1,\scaled*-1) {};
			\node[purria node,opacity=.5,black,fill=red,text opacity=1] (v42) at (\scaled*-0.33,\scaled*-1) {};
			\node[purria node,opacity=.5,black,fill=red,text opacity=1] (v43) at (\scaled*0.33,\scaled*-1) {};
			\node[purria node,opacity=.5,black,fill=red,text opacity=1] (v44) at (\scaled*1,\scaled*-1) {};

			\path[-,draw]
			(q) edge node[below left] {} (s1)

			(q) edge node[left] {} (v21)
			(q) edge node[left] {} (v22)
			
			(q) edge [bend left] node[above left] {} (v31)
			(q) edge node[above right] {} (v32)
			(q) edge node[above right] {} (v33)
			
			(q) edge node[left] {} (v41)
			(q) edge node[below left] {} (v42)
			(q) edge node[below right] {} (v43)
			(q) edge node[ right] {} (v44)

			(s2) edge node[below] {} (v21)
			(s2) edge node[below] {} (v22)

			(s3) edge node[below] {} (v31)
			(s3) edge node[below] {} (v32)
			(s3) edge node[below] {} (v33)

			(s4) edge node[below left] {} (v41)
			(s4) edge node[below left] {} (v42)
			(s4) edge node[below left] {} (v43)
			(s4) edge node[below left] {} (v44);

			\end{tikzpicture}
			\caption{$\mu=10$}
			\label{sfig3:mu_behaviour}
		\end{subfigure}
		\begin{subfigure}{\widthtext\textwidth}
			\centering
			\begin{tikzpicture}[]

			\node[seed node,opacity=.5,black,fill=green,text opacity=1] (q) at (0,0) {\tiny$q$};
			
			\node[seed node,opacity=.5,black,fill=cyan,text opacity=1] (s1) at (\scale*-1.7,0) {\tiny$s_1$};
			
			\node[seed node,opacity=.5,black,fill=green,text opacity=1] (s2) at (\scale*-1.0,\scale*1.5) {\tiny$s_2$};
			\node[purria node,opacity=.5,black,fill=green,text opacity=1] (v21) at (\scale*-0.25,\scale*1) {};
			\node[purria node,opacity=.5,black,fill=green,text opacity=1] (v22) at (\scale*-1,\scale*0.5) {};

			\node[seed node,opacity=.5,black,fill=yellow,text opacity=1] (s3) at (\scale*2,\scale*1) {\tiny$s_3$};
			\node[purria node,opacity=.5,black,fill=yellow,text opacity=1] (v31) at (\scale*1,\scale*1.5) {};
			\node[purria node,opacity=.5,black,fill=yellow,text opacity=1] (v32) at (\scale*1,\scale*0.75) {};
			\node[purria node,opacity=.5,black,fill=yellow,text opacity=1] (v33) at (\scale*1,\scale*0) {};

			\node[seed node,opacity=.5,black,fill=red,text opacity=1] (s4) at (\scaled*0,\scaled*-2) {\tiny$s_4$};
			\node[purria node,opacity=.5,black,fill=red,text opacity=1] (v41) at (\scaled*-1,\scaled*-1) {};
			\node[purria node,opacity=.5,black,fill=red,text opacity=1] (v42) at (\scaled*-0.33,\scaled*-1) {};
			\node[purria node,opacity=.5,black,fill=red,text opacity=1] (v43) at (\scaled*0.33,\scaled*-1) {};
			\node[purria node,opacity=.5,black,fill=red,text opacity=1] (v44) at (\scaled*1,\scaled*-1) {};

			\path[-,draw]
			(q) edge node[below left] {} (s1)

			(q) edge node[left] {} (v21)
			(q) edge node[left] {} (v22)
			
			(q) edge [bend left] node[above left] {} (v31)
			(q) edge node[above right] {} (v32)
			(q) edge node[above right] {} (v33)
			
			(q) edge node[left] {} (v41)
			(q) edge node[below left] {} (v42)
			(q) edge node[below right] {} (v43)
			(q) edge node[ right] {} (v44)

			(s2) edge node[below] {} (v21)
			(s2) edge node[below] {} (v22)

			(s3) edge node[below] {} (v31)
			(s3) edge node[below] {} (v32)
			(s3) edge node[below] {} (v33)

			(s4) edge node[below left] {} (v41)
			(s4) edge node[below left] {} (v42)
			(s4) edge node[below left] {} (v43)
			(s4) edge node[below left] {} (v44);

			\end{tikzpicture}
			\caption{$\mu=100$}
			\label{sfig4:mu_behaviour}
		\end{subfigure}
		\begin{subfigure}{\widthtext\textwidth}
			\centering
			\begin{tikzpicture}[]

			\node[seed node,opacity=.5,black,fill=cyan,text opacity=1] (q) at (0,0) {\tiny$q$};
			
			\node[seed node,opacity=.5,black,fill=cyan,text opacity=1] (s1) at (\scale*-1.7,0) {\tiny$s_1$};
			
			\node[seed node,opacity=.5,black,fill=green,text opacity=1] (s2) at (\scale*-1.0,\scale*1.5) {\tiny$s_2$};
			\node[purria node,opacity=.5,black,fill=green,text opacity=1] (v21) at (\scale*-0.25,\scale*1) {};
			\node[purria node,opacity=.5,black,fill=green,text opacity=1] (v22) at (\scale*-1,\scale*0.5) {};

			\node[seed node,opacity=.5,black,fill=yellow,text opacity=1] (s3) at (\scale*2,\scale*1) {\tiny$s_3$};
			\node[purria node,opacity=.5,black,fill=yellow,text opacity=1] (v31) at (\scale*1,\scale*1.5) {};
			\node[purria node,opacity=.5,black,fill=yellow,text opacity=1] (v32) at (\scale*1,\scale*0.75) {};
			\node[purria node,opacity=.5,black,fill=yellow,text opacity=1] (v33) at (\scale*1,\scale*0) {};

			\node[seed node,opacity=.5,black,fill=red,text opacity=1] (s4) at (\scaled*0,\scaled*-2) {\tiny$s_4$};
			\node[purria node,opacity=.5,black,fill=red,text opacity=1] (v41) at (\scaled*-1,\scaled*-1) {};
			\node[purria node,opacity=.5,black,fill=red,text opacity=1] (v42) at (\scaled*-0.33,\scaled*-1) {};
			\node[purria node,opacity=.5,black,fill=red,text opacity=1] (v43) at (\scaled*0.33,\scaled*-1) {};
			\node[purria node,opacity=.5,black,fill=red,text opacity=1] (v44) at (\scaled*1,\scaled*-1) {};

			\path[-,draw]
			(q) edge node[below left] {} (s1)

			(q) edge node[left] {} (v21)
			(q) edge node[left] {} (v22)
			
			(q) edge [bend left] node[above left] {} (v31)
			(q) edge node[above right] {} (v32)
			(q) edge node[above right] {} (v33)
			
			(q) edge node[left] {} (v41)
			(q) edge node[below left] {} (v42)
			(q) edge node[below right] {} (v43)
			(q) edge node[ right] {} (v44)

			(s2) edge node[below] {} (v21)
			(s2) edge node[below] {} (v22)

			(s3) edge node[below] {} (v31)
			(s3) edge node[below] {} (v32)
			(s3) edge node[below] {} (v33)

			(s4) edge node[below left] {} (v41)
			(s4) edge node[below left] {} (v42)
			(s4) edge node[below left] {} (v43)
			(s4) edge node[below left] {} (v44);

			\end{tikzpicture}
			\caption{$\mu\to\infty$, Watershed}
			
			\label{sfig5:mu_behaviour}
	\end{subfigure}
	\caption{Effect of the inverse temperature $\mu$ on Probabilistic Watershed solutions. (\ref{sfig1:mu_behaviour}) shows a graph with 4 seeds and edge costs, $c(e)$. All paths from the query node $q$ to a seed $s_i$ have the same cost (only indicated once per seed). (\ref{sfig2:mu_behaviour}) - (\ref{sfig4:mu_behaviour}) show the Probabilistic Watershed's segmentation for edge weights $\exp(-\mu \, c(e))$.  As $\mu$ grows, $q$'s assignment changes from a weight-independent (maximum entropy) one over two Random Walker assignments to the Watershed assignment (lowest entropy).}
	\label{fig:mu_behaviour}	
\end{figure*}
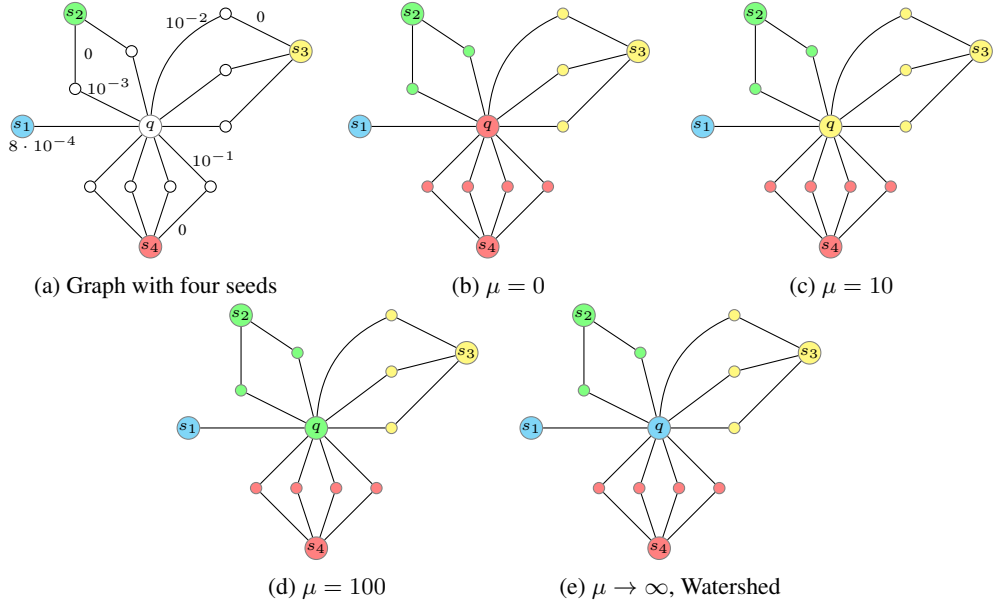

 \section{Effect of the entropy on the Probabilistic Watershed}
 \label{app_mu_behavior}
 \figurename{} \ref{fig:mu_behaviour} illustrates how the forest distribution's entropy interpolates between (Power) Watershed and Probabilistic Watershed / Random Walker with decreasing sensitivity to edge-costs.

\section{Edge and node probabilities in the Power Watershed}
\label{App:Edge_prob}

\begin{figure}[h]
\captionsetup[subfigure]{justification=centering}
    \begin{subfigure}{0.49\textwidth}
    \centering
    \includegraphics[height=5cm]{./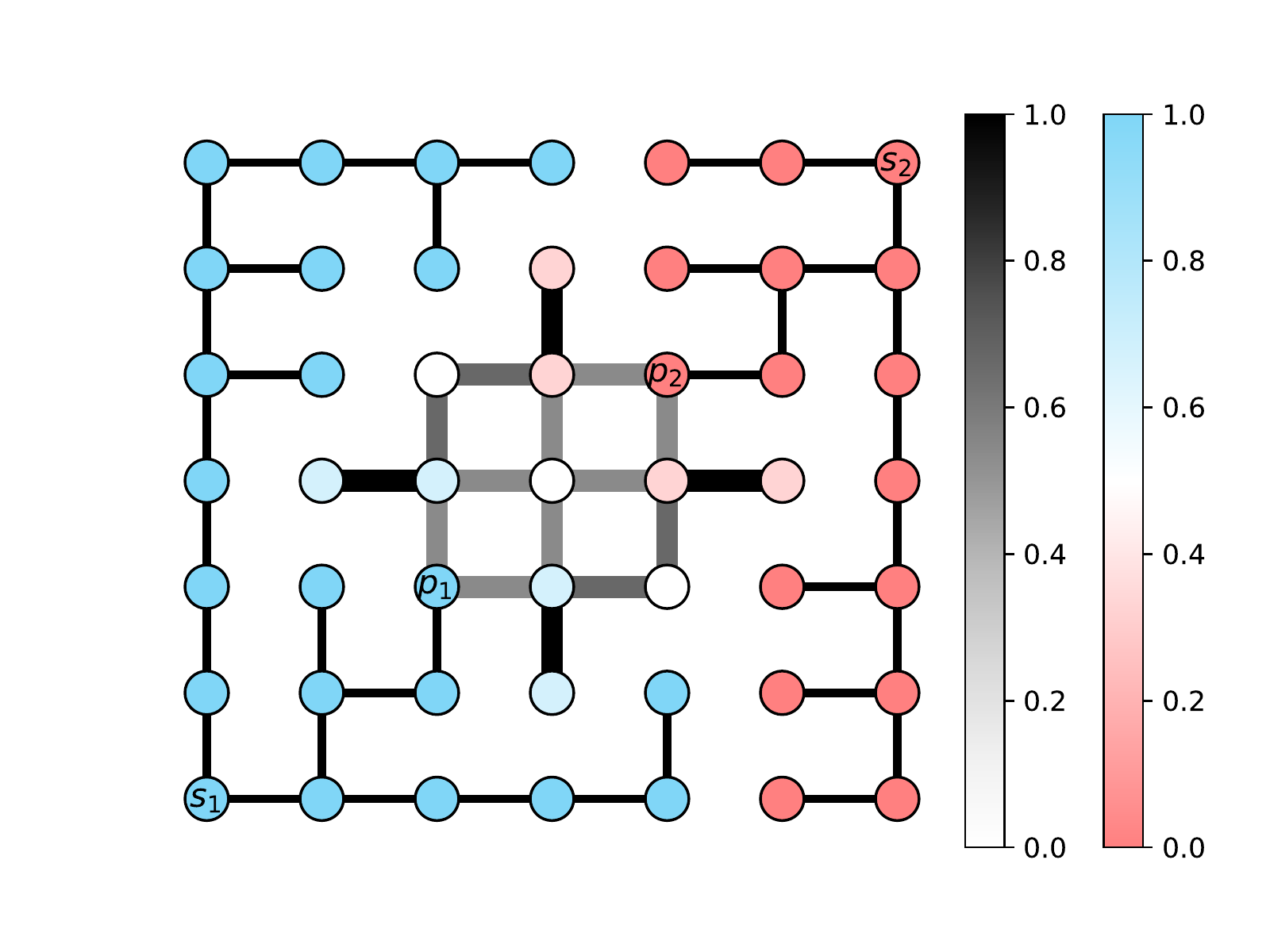}
    \caption{$P(\text{node} \sim s_1)$ and \\ $P(\text{edge} \in \text{some mSF}$)}
    \label{App_sfig:edge_present}
    \end{subfigure}
    \begin{subfigure}{0.49\textwidth}
    \centering
    \includegraphics[height=5cm]{./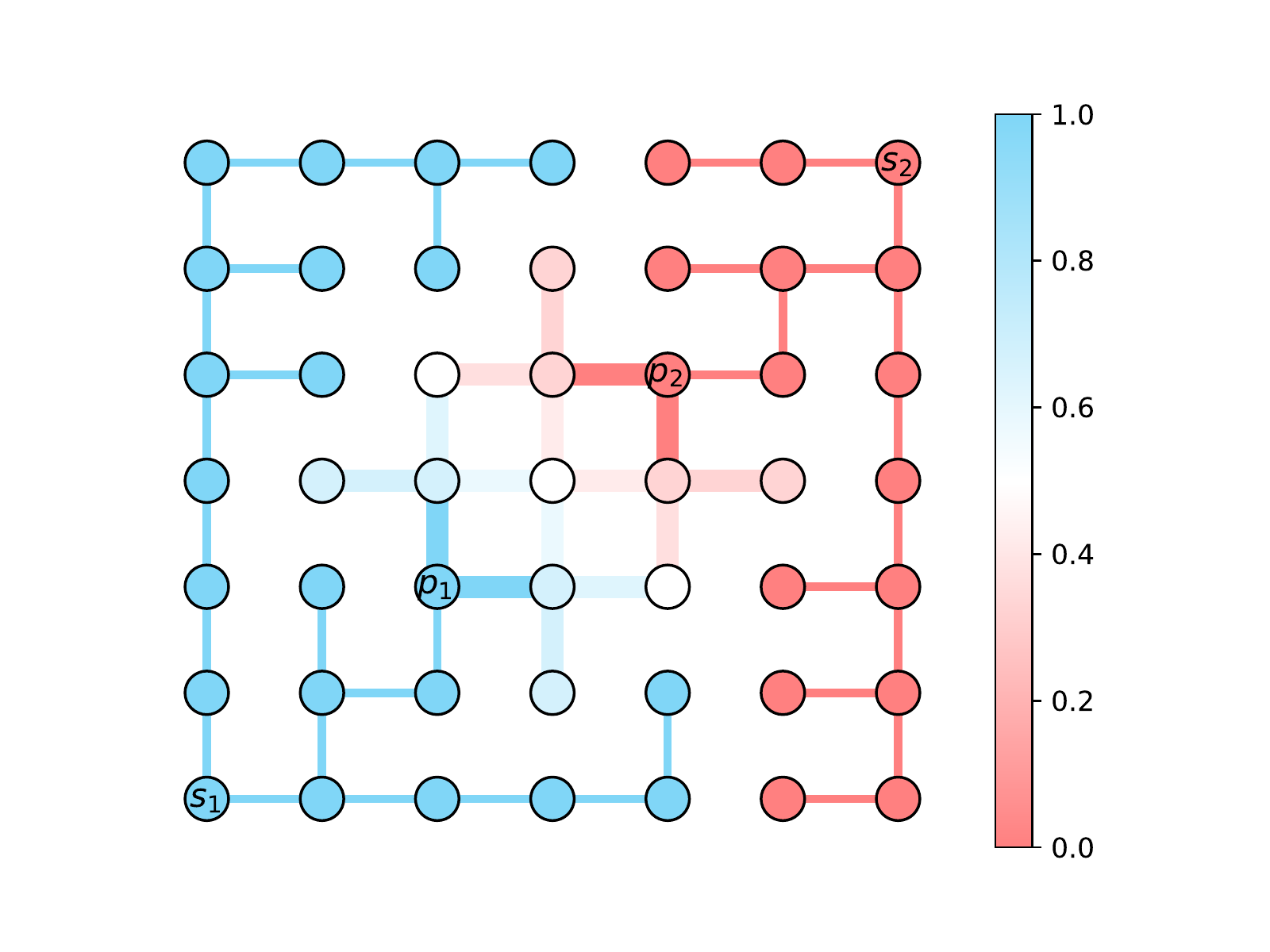}
    \caption{$P(\text{node} \sim s_1)$ and \\ $P(\text{edge} \sim s_1 | \text{edge} \in \text{some msF})$}
    \label{App_sfig:edge_given_present}
    \end{subfigure}
    \begin{subfigure}{0.49\textwidth}
    \centering
    \includegraphics[height=5cm]{./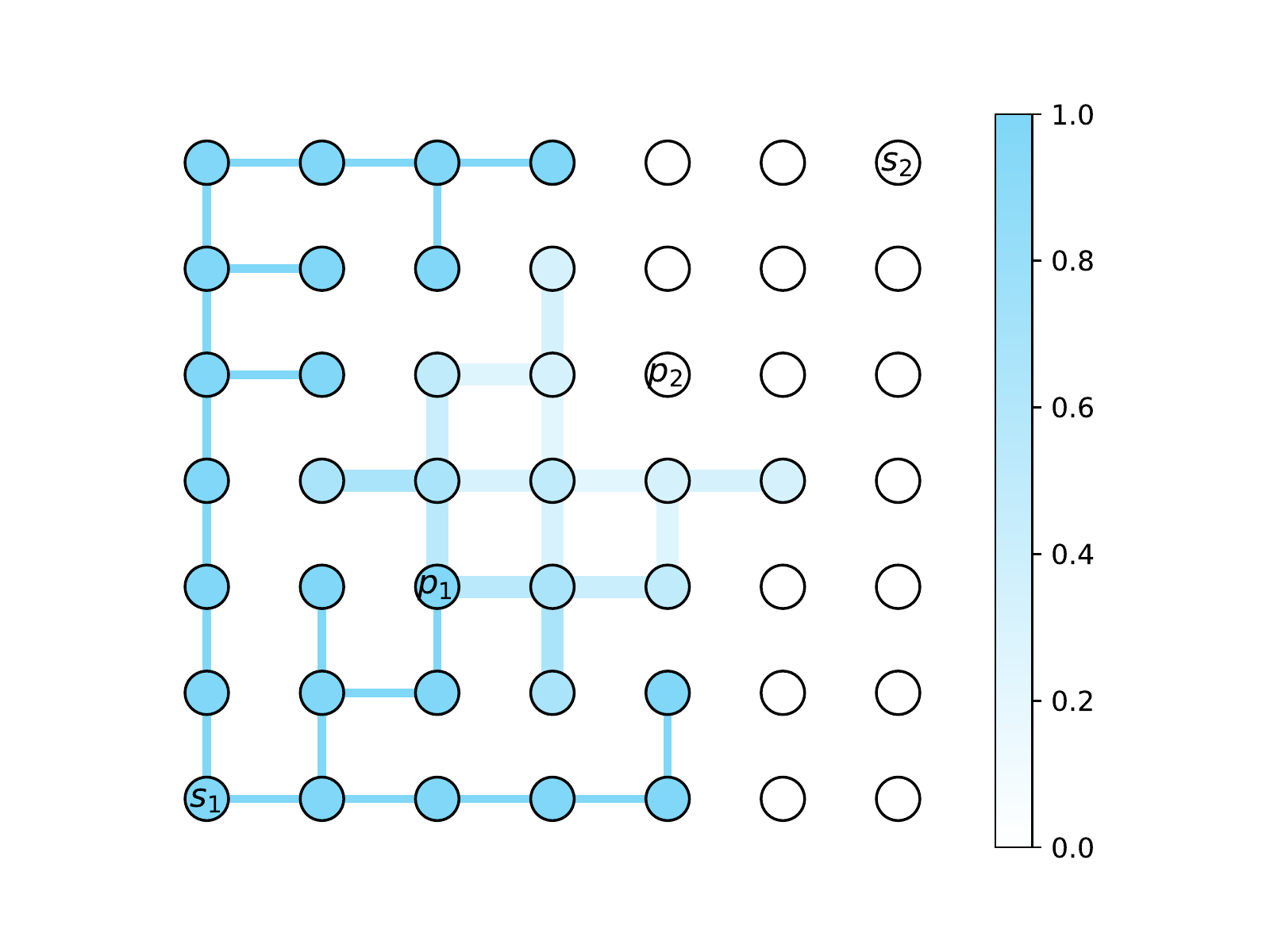}
    \caption{$P(\text{node} \sim s_1)$ and\\
    $P(\text{edge} \sim s_1, \text{edge} \in \text{some mSF})$}
    \label{App_sfig:edge_s_1}
    \end{subfigure}
    \begin{subfigure}{0.49\textwidth}
    \centering
    \includegraphics[height=5cm]{./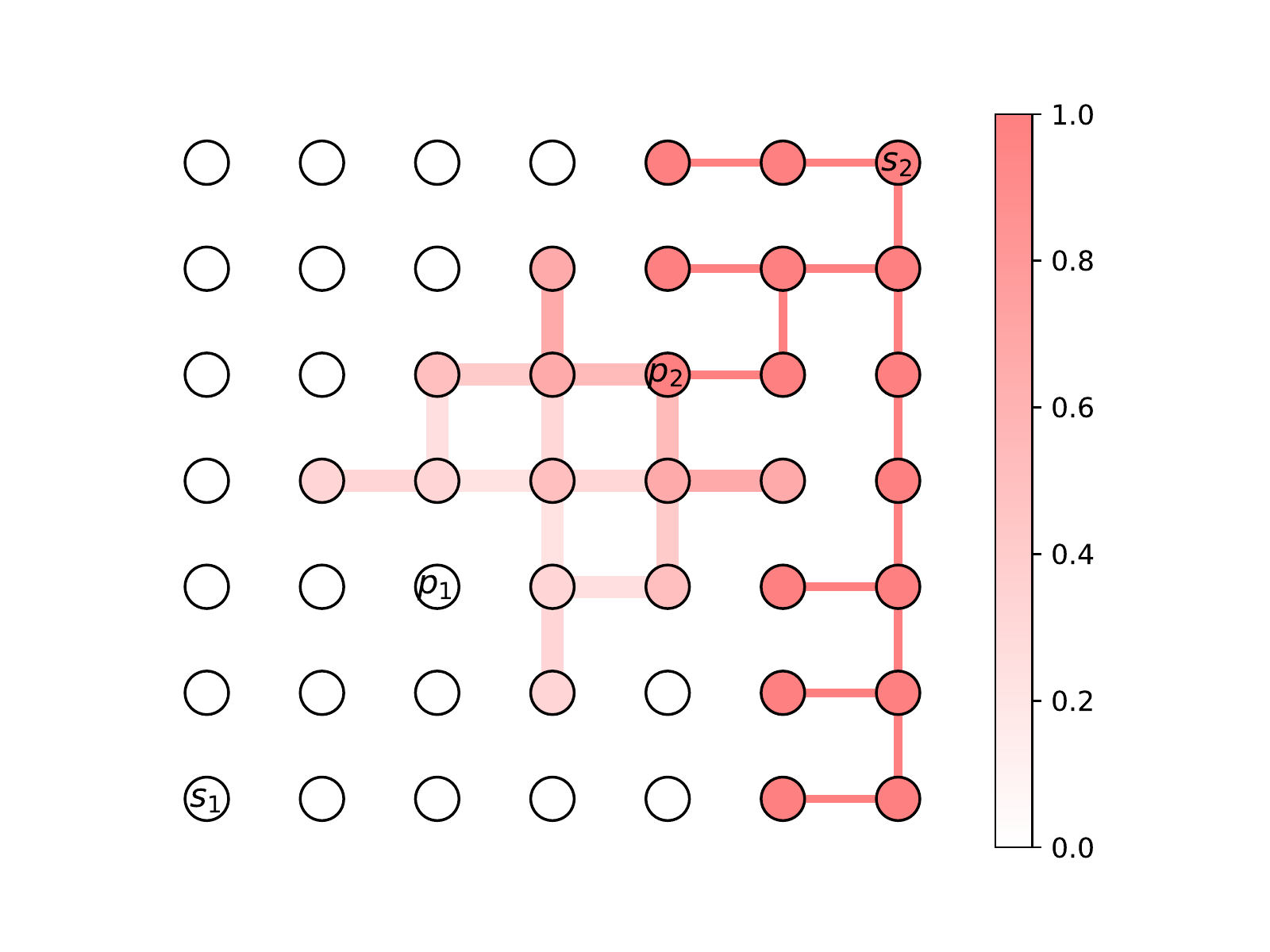}
    \caption{$P(\text{node} \sim s_2)$ and\\
    $P(\text{edge} \sim s_2, \text{edge} \in \text{some mSF})$}
    \label{App_sfig:edge_s_2}
    \end{subfigure}
    \caption{Power Watershed result on a grid graph with seeds $s_1$, $s_2$ and with random edge-costs outside a plateau of edges with the same cost (wide edges). By the results in Theorem 5.1, the Power Watershed counts mSFs. This is illustrated with both the node- and edge-colors. (\ref{App_sfig:edge_present}-\ref{App_sfig:edge_s_2}) The nodes are colored by their probability of belonging to seed $s_1$ ($s_2$), i.e. by the share of mSFs that connect a given node to $s_1$ ($s_2)$. (\ref{App_sfig:edge_present}) The edge-color indicates the share of mSFs in which the edge is present. (\ref{App_sfig:edge_given_present}) The edge-color indicates the share of mSFs in which the edge is connected to seed $s_1$ among the mSFs that contain the edge. (\ref{App_sfig:edge_s_1} - \ref{App_sfig:edge_s_2}) The edge-color indicates the share of mSFs in which the edge is connected to $s_1$ or $s_2$, respectively, among all mSFs. 
    }
    \label{App_fig:edge_presence}
\end{figure}

In this chapter, we elaborate the minimum spanning forest (mSF) counting interpretation of the Power Watershed. \figurename{} \ref{App_fig:edge_presence} shows a graph $G$ with a single plateau $P$, a maximal connected subgraph of constant edge-cost $c$. To simplify our exposition, we made sure that there is exactly one path with maximum cost below $c$ from each seed to $P$. The nodes at the end of these paths are called $p_1$ and $p_2$, respectively. We illustrate the mSF-counting nature of the Power Watershed both on nodes and on edges.\\
In \figurename{} \ref{App_sfig:edge_present}, we show the probability of an edge being present in a mSF. Outside the plateau, the edges are either part of every or of no mSF. All mSFs agree on these edges. They can be found by a variant of Kruskal's greedy algorithm which iteratively adds edges of minimal cost, while avoiding cycles and connections between the two seeds. Therefore, the edges outside the plateau are only black or white in \figurename{} \ref{App_sfig:edge_present}. On the plateau all spanning forests have the same, minimal cost. Here, Power Watershed performs the Random Walker, or - in our forest-framework - counts spanning forests. Therefore, the edges on $P$ typically have a probability of being present in a mSF strictly between $0$ and $1$. Note that the final segmentation can be read-off from the edge probabilities in \figurename{} \ref{App_sfig:edge_present} outside the plateau (as in each mSF in our example every node outside the plateau can be reached from a seed without entering the plateau) but not on the plateau without the node potentials.\\
In \figurename{}s \ref{App_sfig:edge_given_present}-\ref{App_sfig:edge_s_2}, we show how likely an edge is connected to either of the seeds in a mSF. Again, all the edges outside the plateau are either always connected to the same seed in all mSFs or never part of any mSF. In the latter case, the conditional probability in \ref{App_sfig:edge_given_present} is not defined; we colored them white, which corresponds to the uninformed probability of $0.5$ for ease of presentation. The closer an edge of $P$ is to the node $p_1$, where the subtree of $s_1$ connects to the plateau, the higher its probability to be connected to $s_1$ among the mSFs that contain this edge (\figurename{} \ref{App_sfig:edge_given_present}) and also among all mSFs (\figurename{} \ref{App_sfig:edge_s_1}). The same holds for $s_2$ in \figurename{}s \ref{App_sfig:edge_given_present} and \ref{App_sfig:edge_s_2}. Note that in both \figurename{} \ref{App_sfig:edge_s_1} and \figurename{} \ref{App_sfig:edge_s_2} the color intensity of every edge $e = \{u,v\}$ is at most as high as that of $u$ or $v$. This is because whenever $e$ is connected to some seed in a mSF $f$, both $u$ and $v$ are connected to that seed in 
$f$, too.\\
We computed the probability of an edge being present in a spanning forest on the plateau by the generalization of the MTT in Lemma 1.9 of \cite{Shayan_spanning_tree}, see also Theorem 2 of \cite{Teixeira2013} for a version on unweighted graphs. Then for each edge $e = \{u, v\}$ on $P$, we merged $u$ and $v$ into a new node $q_e$, thus obtaining a minor $P_e$ of $P$. On $P_e$ in turn, we computed the Probabilistic Watershed probabilities $P_{P_e}(p_1 \sim q_e)$, hence finding the share of 2-forests in $P_e$ isolating $p_1$ and $p_2$ that connect $q_e$ to $p_1$. This is nothing but the share of 2-forests in $P$ separating $p_1$ and $p_2$ that contain $e$ and connect it to $p_1$ among the 2-forests separating $p_1$ and $p_2$ that contain $e$. Multiplying this with the probability that an edge is part of any mSF gives the share of mSFs in $G$, which contain $e$ and connected it to $s_1$, among all mSFs separating $s_1$ and $s_2$.

\section{Rough lower bound for the number of forests in a grid graph}
\label{App:Bound_Forests}
In this chapter we derive a rough lower bound on the number of spanning forests that separate $k$ given seeds in a two-dimensional grid graph. We refer to these forests as ``$k$-forests". If there is some $n \times m$ subgrid without any seeds then the number of $k$-forests is at least as large as the number of spanning trees in the subgrid. This is because there are $k$-forests in which all nodes in the subgrid belong the tree of some of the seeds. We can compute the number of spanning trees $N_T$ in a grid graph with $n$ rows and $m$ columns by the closed-form formula (see \cite{tzeng2000spanning} Theorem 1):
\begin{align}
    N_T(n,m) = \frac{2^{nm -1}}{nm} \cdot \prod_{\substack{i=0, \dots, n-1,\\ j=0,\dots, m-1,\\ (i,j) \neq (0,0)}} \left(2 - \cos\left(\frac{i\pi}{n}\right) - \cos\left(\frac{j \pi}{m}\right)\right)
\end{align}
The image in \figurename{} 2 of the main paper has a seed-free part of size $87 \times 272$, see \figurename{} \ref{fig:shaded_region} below. This yields the following lower bound for the number of $13$-forests separating the $13$ seeds:
\begin{gather}
    N_T(87,272) \approx 10^{11847}
\end{gather}

\begin{figure}[h]
    \centering
    \includegraphics[width = 0.8\textwidth]{./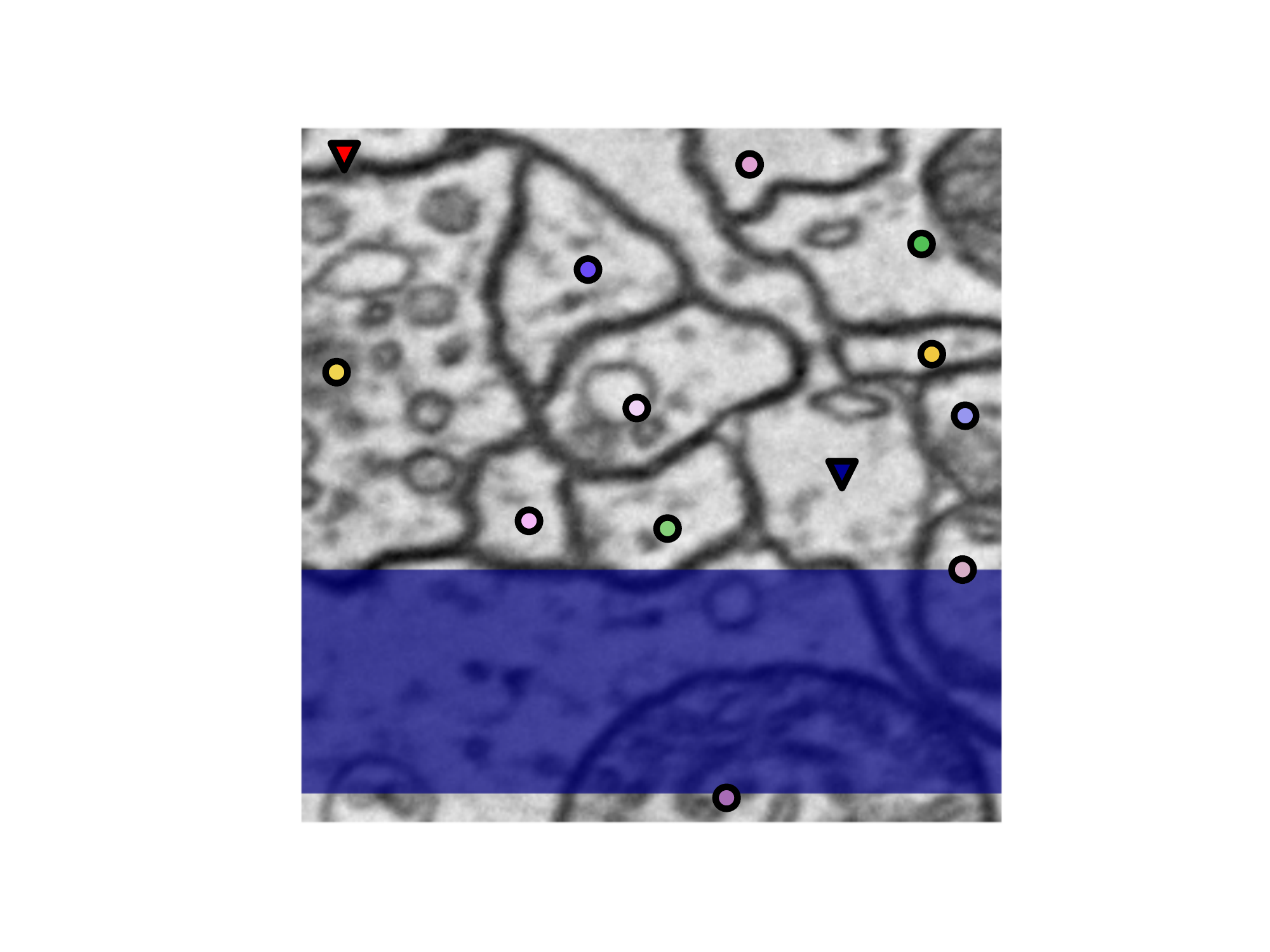}
    \caption{The shaded region was used to obtain a rough lower bound on the number of forests separating the seeds. There are about $10^{11847}$ spanning trees in the grid graph that corresponds to the shaded region and hence at least as many forests in the whole graph which separate the seeds.}
    \label{fig:shaded_region}
\end{figure}

\end{document}